\documentclass{article}
\usepackage{authblk}
\usepackage{graphicx}
\usepackage{amsmath}
\usepackage{amssymb}
\usepackage{cmll}
\usepackage{amsthm}
\usepackage{stmaryrd}
\usepackage{MnSymbol}
\usepackage{tikz}
\usetikzlibrary{calc,positioning}

%Marc de Visme's macros:

\newcommand{\hide}[1]{}
  % Small categories

\newcommand{\St}{\hbox{St}}
\newcommand{\Stop}{\hbox{\rm Stop}}
\newcommand{\tick}{\checkmark}

\newcommand{\B}{{\mathcal B}}

\newcommand{\sncirc}{\oast} %\ostar

\newtheorem{theorem}{Theorem}[section]

\newtheorem{prop}[theorem]{Proposition}
\newtheorem{lemma}[theorem]{Lemma}
\newtheorem{definition}[theorem]{Definition}
\newtheorem{notation}[theorem]{Notation}
\newtheorem{example}[theorem]{Example}

\theoremstyle{remark}

\newtheorem*{theorem*}{Theorem}

\def\endex{{\hspace*{\fill}\hbox{$\Box$}}}

%\usepackage{graphicx}
%\sloppy
 \usepackage[all]{xy}
%\UseComputerModernTips

\newcommand{\opmove}{{\boxminus}}
\newcommand{\plmove}{{\boxplus}}
\newcommand{\nemove}{\footnotesize{\boxcircle}}
%\newcommand{\opmove}{\ominus}
%\newcommand{\plmove}{\oplus}
% Mathematics
%\renewcommand{\IEEEQED}{\IEEEQEDopen}

\def\profto{\!\!\!\xymatrix@C-.75pc{\ar[r]|-{\! +\!} &}\!\!\! }%profarrow

\def\all{\forall}

\newdir{pb}{:(1,-1)@^{|-}}
\def\pb#1{\save[]+<16 pt,0 pt>:a(#1)\ar@{pb{}}[]\restore}

\newcommand{\vis}[1]{{{#1}_\downarrow}}

\newcommand{\co}{\mathbin{{\it co}}}
\newcommand{\vvbar}{{\mathbin{\parallel}}}
\newcommand{\scirc}{{{\odot}}}

\newcommand{\imc}{\rightarrowtriangle}
\newcommand{\setdif}{\setminus}

\newcommand{\al}{\alpha}

\newcommand{\ga}{\gamma}
\newcommand{\sig}{\sigma}

\newcommand{\fsubseteq}{\subseteq_{\rm fin}}

\newcommand{\CC}{{\rm C\!\!C}}
\newcommand{\cc}{\,c\!c\,}
\newcommand{\pol}{{\it pol}}

\renewcommand{\mod}[1]{{|{{#1}}|}}
\newcommand{\Strat}{{\mathbf{Strat}}}
\newcommand{\NStrat}{{\mathbf{BStrat}}}
\newcommand{\SStrat}{{\mathbf{SStrat}}}

\newcommand{\aug}{{{\it aug}}}

  \newcommand{\spanpl}[5]{{\xymatrix{
    & {#3}\ar@{_{(}->}[dl]_{#2}\ar[dr]^-{#4} &\\
      {#1} && {#5}
  }}}

\def\Con{{\rm Con}}

\newcommand{\eswp}{event structure with polarity}
\newcommand{\esswp}{event structures with polarity}

\newcommand{\conf}[1]{\:\!{\cal C}(#1)}
\newcommand{\iconf}[1]{\:\!{\cal C}^\infty(#1)}

\newcommand{\parrow}{\rightharpoonup}

\newcommand{\arr}[1]{{{\stackrel{#1}{\longrightarrow}}}}
\newcommand{\id}{{\rm id}}

\newcommand{\arrow}{\rightarrow}
\newcommand{\set}[2]{{\{  #1\  | \  #2 \} }}
\newcommand{\setof}[1]{{\{ #1 \} }}

\newcommand{\eqdef}{\mathrel{=_{\mathrm{def}}}}

\newcommand{\iso}{\cong}

%*******************EQALIGN, EQALIGNNO & LEQALIGNNO***********************

\def\mxxth{\mathsurround=0pt}
\dimendef\dimenxx=0
\def\openup{\afterassignment\xxpenup\dimenxx=}
\def\xxpenup{\advance\lineskip\dimenxx
  \advance\baselineskip\dimenxx \advance\lineskiplimit\dimenxx}
\def\eqalign#1{\,\vcenter{\openup1\jot \mxxth
  \ialign{\strut\hfil$\displaystyle{##}$&$\displaystyle{{}##}$\hfil
     \crcr#1\crcr}}\,}

\newif\ifdtxxp
\def\displxxy{\global\dtxxptrue \openup1\jot \mxxth
  \everycr{\noalign{\ifdtxxp \global\dtxxpfalse
      \vskip-\lineskiplimit \vskip\normallineskiplimit
      \else \penalty\interdisplaylinepenalty \fi}}}
\def\displaylines#1{\displxxy
  \halign{\hbox to\displaywidth{$\hfil\displaystyle##\hfil$}\crcr
      #1\crcr}}

\newskip\mycntring \mycntring=0pt plus 1000pt minus 1000pt

\def\leqalignno#1{\displxxy \tabskip=\mycntring
  \halign to\displaywidth{\hfil$\displaystyle{##}$\tabskip=0pt
      &$\displaystyle{{}##}$\hfil\tabskip=\mycntring
      &\kern-\displaywidth\rlap{$##$}\tabskip=\displaywidth\crcr
      #1\crcr}}

%***********************************************************************

\newcommand{\ER}{{\cal E}_r}
\newcommand{\ET}{{\cal E}_t}
%\newcommand{\ET}{{\cal E}}
%\newcommand{\EP}{{\cal E}_p}

% Abbreviations
\newcommand{\ie}{{\it i.e.}}
\newcommand{\eg}{{\it e.g.}}

%New arrow heads
\newdir{C}{%
!/4.5pt/@{( }*:(1,-.2)@{}*:(1,+.2)@_{}}

\newdir{|>}{%
!/4.5pt/@{|}*:(1,-.2)@{>}*:(1,+.2)@_{>}}

\begin{document}

\title{\bf The Mays and Musts of Concurrent Strategies}
\author[1]{\rm Simon Castellan}
\author[2]{\rm Pierre Clairambault}
\author[3]{\rm Glynn Winskel}
\affil[1]{Inria, Univ Rennes, IRISA, France}
\affil[2]{
Univ Lyon, EnsL, UCBL, CNRS,  LIP, F-69342, LYON Cedex 07, France}
\affil[3]{Computer and Information Sciences, University of Strathclyde, Scotland}
\date{}
\maketitle

\begin{abstract}{Concurrent strategies based on event structures are examined from the viewpoint of `may' and `must'  testing in traditional process calculi. In their pure form concurrent strategies fail to expose the deadlocks and divergences that can arise in their composition.  This motivates an extension of the bicategory of 
concurrent strategies to treat the `may' and `must' behaviour of strategies under testing. 
One extension adjoins neutral moves to strategies but in so doing loses identities w.r.t.~composition. This in turn  motivates another extension in which concurrent strategies are accompanied by stopping configurations; the ensuing stopping strategies inherit the structure of a bicategory from that of strategies.  The technical developments converge in providing characterisations of the `may' and `must' equivalences and preorders on strategies. }\end{abstract}

\section{Introduction}
This article relates to work on process calculi of the 1980's but from a modern perspective of processes as strategies, specifically as distributed/concurrent strategies based on event structures. 
It expands on two areas close to Samson Abramsky's heart, game semantics and concurrency: on a development of concurrent games based on event structures which extends his early ideas with Paul-Andr\'e Melli\`es of deterministic concurrent strategies as closure operators% on  configurations of event structures
~\cite{Congames}; and equivalences on concurrent processes through testing~\cite{abramskytesting}.  

Robin Milner and Tony Hoare's work of late seventies and early eighties 
drew attention to equivalences on processes; Milner's on forms of {\em bisimulation}~\cite{milner} and Hoare's on {\em failures} equivalence~\cite{hoare}.  
Hoare had described failure equivalence informally as the minimum extension of trace equivalence that takes account of the possibility of failure due to deadlock. %To this 
Matthew Hennessy and his PhD student Rocco de Nicola provided a rationale %for this idea
 through an idea of testing processes~\cite{hennessy-denicola}.  For them a test was a process with distinguished ``success'' states at which an action $\tick$ could occur.  Putting a test in parallel composition with a process, {\em may} lead to success if some run does  or {\em must} lead to success if all runs do. Processes can be regarded as equivalent if they  have the same `may' and `must' behaviour  w.r.t.~tests. Modulo subtleties to do with the divergence of processes, Hennessy and de Nicola recovered failure equivalence as testing equivalence. What about Milner's central equivalence? Samson Abramsky investigated the extent to which bisimulation could be viewed as a testing equivalence~\cite{abramskytesting}: it could, but only at the cost of strengthening the power of tests considerably, by allowing testing to run and copy processes quite liberally.  

Here we shall examine the `may' and `must' equivalence of concurrent strategies based on event structures~\cite{lics11,LMCS}---foreshadowed in the early definitions of concurrent strategy~\cite{Congames,murawski,Asgames,FP}.  
Informally, a 
{\em strategy} for Player in a two-party game against Opponent, expresses a choice of Player moves, most often in reaction to moves made by Opponent, % which are 
unpredictable for Player but for the constraints of the game. 
We shall implicitly regard a strategy as a strategy for Player. We regard Opponent as the environment uncontrollable by Player.
We can express both the game---its moves and their constraints---and a strategy---its choice of Player moves subject to the moves of Opponent---as event structures.  This chimes with our view of strategies and games as highly distributed.    Player and Opponent are more accurately thought of as teams of players and opponents acting at possibly very different locations.  Though we take the rather abstract view of location advocated by Petri in his concept of local state as a condition (or place): then locality reveals itself through the causal dependence and independence of events.  

Event structures are the concurrent analogue of trees; just as transition systems unfold to trees, so Petri nets unfold to event structures.  Whereas an unfolded behaviour of a transition system comprises  sequences of actions/events, the unfolded behaviour of a Petri net, in which events make local changes to conditions,
comprises partial orders of causal dependency between event occurrences~\cite{NPW}.  Event structures are a central model for concurrent computation, related to other models by adjunctions~\cite{WN}. 
This plants concurrent strategies based on event structures firmly within theories of concurrency and interaction---anticipated in Abramsky's presentation of game semantics, with its emphasis on composition of strategies as given by their parallel interaction followed by hiding.
Perhaps more controversially, the view of processes as strategies suggests refinements to the assumptions usual in process calculi.   In concurrent strategies, gone is the usual symmetry between a process and its environment; the conditions on a concurrent strategy take account of the unpredictability and uncontrollability of Opponent moves. This affects the appropriate equivalences to impose between concurrent strategies.  
 
There is surely a long history behind the idea of composing strategies.  Certainly the idea plays a key role in John Conway's  ``On Numbers and Games"~\cite{conway}, the categorical underpinnings of which were exposed by Andr\'e Joyal~\cite{joyal}.  For two-party games there is the obvious operation of reversing the roles of the two participants, Player and Opponent; this operation, forming the {\em dual} $G^\perp$ of a game $G$, played the role of {\em negation} for Conway. A useful convention is to regard a strategy in a game $G$ as a strategy for Player; then a strategy for Opponent, or counter-strategy, is a strategy in the dual game $G^\perp$.   
If the games are broad enough, they often support a form of parallel composition, $G\vvbar H$; for Conway it was the {\em sum} of games.  A strategy {\em from} a game $G$ {\em to} a game $H$ is a strategy $\sig$  {\em in} the game $G^\perp\vvbar H$.  Given another strategy this time from the game $H$ to the game $K$, \ie~a strategy $\tau$ in the game $H^\perp\vvbar K$, we can let the strategies {\em interact} as $\tau\sncirc \sig$, essentially by playing them against each other over the common game $H$; there the strategies $\sig$  and $\tau$ adopt complementary roles---where one makes a move of Player in $H$ the other sees a move of Opponent and {\it vice versa}.  

The interaction $\tau\sncirc \sig$ involves moves in the parallel composition of all three games, $G^\perp\vvbar H\vvbar K$, though in writing the parallel composition in this way an imprecision has crept in: whereas the moves over $G^\perp$ and $K$ described by $\tau\sncirc \sig$ are choices of moves for Player or moves open to choices of Opponent, those over $H$ 
are either instantiations of Opponent moves of $\sig$ by Player moves of $\tau$, or the converse, instantiations of Opponent moves of $\tau$ by Player moves of $\sig$.  As such the moves of $\tau\sncirc \sig$ over $H$ behave like synchronisations between complementary moves of $\sig$ and $\tau$,  and as events internal to the interaction. Though internal, the events over $H$ can affect the behaviour of the interaction by introducing deadlocks or divergence.  In the {\em composition} of strategies it is usual to hide the internal events of interaction to obtain a strategy $\tau\scirc \tau$ in the game $G^\perp\vvbar K$, where the game $H$ is  elided to obtain a strategy from $G$ to $K$.  However when the original strategies $\sig$ or $\tau$ are nondeterministic significant behavioural distinctions can be lost in hiding internal events.  In particular, the hidden events can affect the `must' behaviour of the composition of strategies.

\subsection{Contributions of the paper}
This brings us to the concerns of this paper.
It motivates the definition of {\em bare} concurrent strategies in which internal events are exposed as neutral moves in  the strategy.  Through bare strategies we can examine the `may' and `must' testing of strategies.
Although bare strategies compose their composition does not have identities, so they fail to form a bicategory.  We have explored two ways to recover a bicategory while remaining faithful to the `must' behaviour of strategies.  One is through ``essential events'' in which one strips a bare strategy down to just those neutral moves critical to its behaviour~\cite{essential}.  The other, that we follow here, is through extending strategies with the extra structure of 
{\em stopping configurations}~\cite{stratsproc}.  By distinguishing certain configurations as stopping we keep track of those visible configurations at which the strategy may appear to get stuck through the occurrence of hidden neutral moves. 

Stopping configurations are the event-structure analogue of  
Russ Harmer and Guy McCusker's ``divergences''~\cite{HarmerM99}, though 
event structures add the refinement of locality and independence to the concept.  In an interleaving model, in which behaviour is captured through sequences of actions, divergence anywhere has a global effect;  generally, in the parallel composition of two processes if one can perform an infinite sequence of actions, these may block progress of the other process,  unless additional fairness assumptions are enforced.
This is not so in a model such as event structures where the independence/concurrency of actions is explicit.  
In a nondeterministic strategy  
a Player move  %of the strategy that is enabled will occur and 
is not blocked by the occurrence of moves with which it is independent. Whereas an interleaving model  may require weak fairness assumptions these are generally built into the behaviour of strategies as event structures~\cite{thesis}.  This makes for subtle differences in the nature of `must'  testing in concurrent games w.r.t.~traditional games.

 Perhaps surprisingly, in many situations a concurrent strategy may be replaced by its simpler ``rigid image'' in the game, despite this often forgetting nondeterministic branching---rigid-image strategies form a category rather than just a bicategory~\cite{madeeasy}---and indeed this remains true for strategies with stopping configurations; none of the `may' or `must' behaviour is lost.

Our technical contribution concludes with characterisations of the `may' and `must' equivalences and preorders on strategies.  
The `may' equivalence of strategies is captured through their inducing the same set of finite traces; a trace being understood as a sequence of moves in the game.  This echoes the earlier results of  Ghica and Murawski when showing  their non-alternating
games model is fully-abstract for Idealized Parallel Algol with
respect to may-convergence~\cite{murawski}.
 For ‘must’ equivalence, our result is to be compared
with that of Harmer and McCusker for their sequential games model, based on Hyland-Ong
games with explicit divergences~\cite{HarmerM99}.  
But whereas
in a sequential setting only the first divergence matters% (as must-convergence is lost)
, for us 
`must' equivalence of strategies is equivalent to their sharing the same  traces of {\em all} (possibly infinite) stopping configurations.  See
Example~\ref{ex:musttesting} and what follows for an in-depth discussion.

Because we restrict attention to linear bicategories of strategies, the tests here are linear too; they do not permit the tested strategy to be copied and rerun.  From the point of view of distributed computation linearity is natural:  it is often infeasible to copy a distributed system or strategy~\cite{winskel1999}.   Then single-run   `may' and  `must'  tests are appropriate.

On the other hand, most programming languages do allow some form of copying and nonlinearity. 
Through the addition of symmetry we can adjoin pseudo (co)monads---where the traditional laws hold up to symmetry, and model nonlinear features~\cite{lics14, coHO}.  
Through symmetry and pseudo comonads, we can realise a variety of nonlinear forms of testing, in which a test could dynamically copy and retest the strategy of interest.  The nature of such broader testing on strategies, the equivalences and logics  induced, are not well understood and deserve a systematic study, for which  this paper forms a foundation.   As shown by Samson Abramsky such broader tests are needed to realise
equivalences such as bisimulation as testing equivalences~\cite{abramskytesting}.

\section{Event structures}
An {\em  event structure}  comprises $(E, \leq, \Con)$, consisting of a set $E$ of {\em events}  
which are
partially ordered by $\leq$, the {\em causal dependency
relation},
and a  nonempty {\em consistency} relation $\Con$ consisting of finite subsets of $E$.  The relation
$e'\leq e$ expresses that event $e$ causally depends on the previous occurrence of event $e'$.  That a finite subset of events is consistent conveys that its events can occur together by some stage in the evolution of the process.  
Together the relations satisfy several  axioms.  We insist that the partial order is {\em finitary}, \ie
\begin{itemize}
\item $[e]\eqdef \set{e'}{e'\leq e}\hbox{ is finite for all } e\in E$\,,
\end{itemize}
and that consistency satisfies 
\begin{itemize}
%\item $ [e]\eqdef \set{e'}{e'\leq e}\hbox{ is finite for all } e\in E$\,,
\item 
$\setof{e}\in\Con \hbox{  for all } e\in E$\,,
\item 
$Y\subseteq X\in\Con \hbox{ implies }Y\in \Con,\ \hbox{ and}$
\item
$
X\in\Con \ \&\  e\leq e'\in X \hbox{ implies } 
X\cup\setof{e}\in\Con$\,.
\end{itemize}
%Given this understanding of an event structure, t
There is an accompanying notion of state, or history, those events that may occur  up to some stage in the behaviour of the process described.  A {\em configuration} is a, possibly infinite, set of events $x\subseteq E$ which is:  
\begin{itemize} 
\item
{\em consistent,} $X\subseteq x  \hbox{ and }  X \hbox{ is finite}  \hbox{ implies } X\in\Con$\,; 
and
\item
{\em down-closed, }
$ e'\leq e\in x  \hbox{ implies } e' \in x$\,.
 \end{itemize}

Two events $e, e'$ are  called {\em concurrent} if  the set $\setof{e,e'}$ is in $\Con$ and neither event is causally dependent on the other; then we write $e\co e'$.  In games the relation of {\em immediate} dependency $e\imc e'$, meaning $e$ and $e'$ are distinct with $e\leq e'$ and no event in between,  plays a very important role.
  We write $[X]$ for the down-closure of a subset of events $X$.  
Write $\iconf E$ for the configurations of $E$ and $\conf E$ for its finite configurations.   (Sometimes we shall need to distinguish the precise event structure to which a relation is associated and write, for instance, $\leq_E$,  $\imc_E$ or $\co_E$.)

\begin{example}{\rm 
In examples it is often convenient to draw event structures.  Often, though not always, consistency is determined in a binary fashion, in that a set of events is consistent if all of its pairs are.
 Then, it is economical to draw  the binary relation of {\em conflict}, or inconsistency.  For example, in the diagram
$$\xymatrix@R=20pt@C=20pt{
\boxempty&\boxempty&\\
\boxempty\ar@{|>}[u]\ar@{|>}[u]&\boxempty\ar@{|>}[ul]\ar@{|>}[u]\ar@{~}[r]& \boxempty
}$$ 
we illustrate the relations of immediate causal dependency $\xymatrix{\ar@{|>}[r]&}$ which yields the Hasse diagram of the partial order of causal dependency between  events $\boxempty$, and conflict  by the wiggly line $\xymatrix{\ar@{~}[r]&}$.
Neither the two events related by $\xymatrix{\ar@{~}[r]&}$ nor their dependants w.r.t.~causal dependency can occur together in a configuration; there is no need draw all the conflicts that follow.
}\endex\end{example}

Let  $E$ and $E'$ be event structures.
A {\em  map} of event structures 
$f:E\arrow E'$ 
 is a partial function on events
$f:E\parrow E'$ such that 
for all   $x\in\iconf E$
its direct image $f x\in\iconf{E'}$  and 
$$
\hbox{if } e_1, e_2 \in x %.\  %f(e_1), f(e_2) \hbox{ both defined and } 
\hbox{ and } f(e_1) =f(e_2) \hbox{ (with both defined)}, \hbox{ then } e_1=e_2.
$$
(Those maps defined is unaffected if we replace possibly infinite configurations $\iconf E$ by finite configurations $\conf E$  above; this is because any configuration is the union of finite configurations and direct image preserves such unions.)

Maps of event structures compose as partial functions, with identity maps given by identity functions.  
Say a map is {\em total} if the function $f$ is total.  Notice that  for a total map $f$ the condition on maps now says it is {\em  locally injective}, in the sense that w.r.t.~any configuration $x$ of the domain the restriction of $f$ to a function from $x$   is injective; the restriction of $f$ to a function from $x$ to $f x$ is thus bijective.  Say a total map of event structures is {\em rigid} when it preserves causal dependency. 

Although a map $f:E\arrow E'$ of event structures does not generally preserve causal dependency, it does locally reflect   
causal dependency:  whenever $e, e'\in x$, a configuration of $E$, and $f(e)$ and $f(e')$ are both defined with $f(e')\leq f(e)$, then $e'\leq e$.  Consequently,
$f$  preserves the concurrency relation: if $e\co e'$ in $E$  
and $f(e)$ and $f(e')$ are both defined then $f(e)\co f(e')$.

\section{Constructions}

We provide the constructions which we use in the paper.

\subsection{Partial-total factorisation}

We shall realise an operation of hiding events via a factorisation property of maps of event structures.  

Let $(E,\leq, \Con)$ be an event structure.  Let $V\subseteq E$ be a subset of `visible' events.
Define  
$
E{\mathbin\downarrow} V\eqdef
(V, \leq_V, \Con_V)
$, 
where
$v \leq_V v' \hbox{ iff } v\leq v' \ \&\ v,v'\in V$ and $X\in\Con_V \hbox{ iff }  X\in\Con\ \&\ X\subseteq V$. The operation projects $E$ to visible events $V$.

Consider a partial map of event structures $f:E\to E'$.  Let 
$$V\eqdef \set{e\in E}{ f(e) \hbox{ is defined}}\,.$$
Then $f$ clearly factors into the composition 
$$
\xymatrix{
E\ar[r]^{f_0}& E{\mathbin\downarrow} V \ar[r]^{f_1}& E'\\}
$$
of $f_0$, a partial map of event structures taking $e\in E$ to itself if $e\in V$ and undefined otherwise, and $f_1$, a total map of event structures acting like $f$ on $V$. We call $f_1$ the {\em defined part} of the partial map $f$.   We say a map $f:E \to E'$ is a {\em projection} if its defined part is an isomorphism.  

The {\em partial-total factorisation} is characterised to within isomorphism by the following 
  universal property:   for any factorisation 
$$
\xymatrix{
f:E\ar[r]^{g_0}& E_1 \ar[r]^{g_1}& E' }
$$ 
where $g_0$ is partial and $g_1$ is total there is a  (necessarily total) unique map $h: E{\mathbin\downarrow} V\to E_1$  such that 
$$
%\small
\xymatrix%@R=12pt@C=20pt
{
E\ar[r]^{f_0}\ar[dr]_{g_0}& E{\mathbin\downarrow} V\ar@{-->}[d]^h \ar[r]^{f_1}& E' \\
 & E_1\ar[ur]_{g_1}& 
}
$$
commutes.

% \begin{prop}\label{prop:defpart} Let $f:S\to A$ and $p:A\to B$ be  partial maps of event structures.  Let $f_0:S_0\to A$ be the defined part of $f$.  Then, the defined part of $p f_0$ is the defined part of $p f$. \end{prop}

\subsection{Pullback}

Event structures and their maps have pullbacks.  For the composition of strategies we shall only need pullbacks of total maps.
Consider a pullback 
$$
\xymatrix{
&  &\ar[ld]_{\pi_1}P\pb{270}\ar[rd]^{\pi_2}&&\\
 &A\ar[rd]_{f}&&B\ar[ld]^{g}&\\
&&C&&}
$$ 
where $f$ and $g$ are total. Pullbacks are difficult to construct directly on  the ``prime'' event structures we are using here, essentially because they associate each event with a unique minimum causal history.  Such constructs are best first carried out in a broader model.  Here we build the pullback of event structures out of the stable family  of secured bijections.  
 
\begin{definition}{\rm 
 A {\em secured bijection} comprises a composite bijection $$\theta_{x,y}: x\iso fx = gy\iso y$$ between configurations $x\in\iconf A$ and $y\in\iconf B$ s.t.~$f x= g y$, which is {\em secured} in the sense that  the transitive relation generated on $\theta_{x,y}$ by taking  
\begin{center}
$(a,b)\leq (a',b')$ if $a\leq_A a'$ or $b\leq_B b'$
\end{center}
is a finitary partial order. Let $\cal B$ be the family of secured bijections.  Say a subset $Z\subseteq \B$ is {\em compatible} iff  $\exists \theta' \in {\cal B} \forall \theta\in Z.\  \theta \subseteq \theta'$. 
}\end {definition}

\begin{prop}
The family $\cal B$ is a {\em stable family},\footnote{Here it is useful to allow stable families to have infinite configurations, as  originally~\cite{icalp82,evstrs}.}

 \ie~it is
\begin{itemize}
\item
  {\em  Complete:}    %$\bigcup Z \in \Fam$ for any finitely compatible subset 
 $\all Z\subseteq\B.  \   Z\hbox{ is compatible } \implies \bigcup Z \in \B$\,;
 %$Z\subseteq\Fam$; 
\item
{\em  Stable:}
$
\all Z\subseteq\B.\ Z\not= \emptyset  \ \& \    Z\hbox{ is compatible } 
\implies 
\bigcap Z\in \B$;
\item
{\em  Finitary:}
$
\all \theta\in\B, (a,b)\in \theta\exists \theta_0\in\B.\ \theta_0 \hbox{ is finite }\ \&\ (a,b)\in \theta_0\subseteq\theta$; and 
\item{\em  Coincidence-free:}   For all
$\theta \in \B$,  $ (a,b), (a',b')
\in \theta$ with $(a,b) \not=  (a',b')$,
$$
 \exists \theta_0 \in \B.\   
 \theta_0  \subseteq  \theta  \ \& \  ((a,b) \in  \theta_0  \iff (a',b') \notin  \theta_0)\, .$$ 
\end{itemize}
\end{prop}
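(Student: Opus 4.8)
The plan is to treat each secured bijection $\theta=\theta_{x,y}\in\B$ as its underlying set of pairs $\set{(a,b)\in x\times y}{f(a)=g(b)}$, from which the configurations $x$ and $y$ are recovered as the two projections $\pi_1\theta,\pi_2\theta$, and which, being a subset of a bijection, is a partial injection. The single observation driving all four clauses is a monotonicity property of the generated order: if $\theta\subseteq\theta'$ then the local relation on $\theta$ is contained in that on $\theta'$, so the transitive closures satisfy $\preceq_\theta\ \subseteq\ \preceq_{\theta'}$. Hence, whenever $\theta'$ is secured, $\preceq_\theta$ is a reflexive, transitive sub-relation of a partial order, so it is itself antisymmetric (if $p\preceq_\theta q$ and $q\preceq_\theta p$ then $p\preceq_{\theta'}q$ and $q\preceq_{\theta'}p$, forcing $p=q$), and its down-sets, being contained in the finite down-sets of $\preceq_{\theta'}$, are finite. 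I would isolate this as a preliminary lemma: \emph{any subset of a secured bijection that is again a composite bijection between configurations is automatically secured.} With it, each clause reduces to checking that the relevant set of pairs is such a composite bijection.

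For \emph{Complete}, given compatible $Z\subseteq\B$ with witness $\theta'=\theta_{x',y'}$, I set $x=\bigcup_{\theta\in Z}\pi_1\theta$ and $y=\bigcup_{\theta\in Z}\pi_2\theta$: these are unions of down-closed subsets of $x'$ and $y'$, hence down-closed, and consistent since any finite subset already lies in the configuration $x'$ (resp.~$y'$), so $x\in\iconf A$ and $y\in\iconf B$. As direct image preserves unions, $fx=\bigcup f(\pi_1\theta)=\bigcup g(\pi_2\theta)=gy$. A short check, using injectivity of $g$ on $y'$, shows $\bigcup Z$ equals $\theta_{x,y}$, and the lemma delivers securedness. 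For \emph{Stable}, with $Z\neq\emptyset$ compatible and witness $\theta'$, each $\theta\in Z$ is the restriction of the single bijection underlying $\theta'$, so intersection commutes with projection, giving $\bigcap Z=\theta_{x,y}$ with $x=\bigcap\pi_1\theta$ and $y=\bigcap\pi_2\theta$; intersections of down-closed consistent sets are configurations, and since $f,g$ are injective on $x',y'$ they commute with these intersections, yielding $fx=gy$. As $\bigcap Z\subseteq\theta'$, the lemma again gives securedness.

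For \emph{Finitary} and \emph{Coincidence-free} I would use down-closures for the generated order. Given $(a,b)\in\theta$, put $\theta_0=\set{p\in\theta}{p\preceq_\theta (a,b)}$, which is finite by securedness of $\theta$. Its projections are down-closed: if $a''\leq_A a'$ with $(a',b')\in\theta_0$, then $a''\in x$ is matched by some $(a'',b'')\in\theta$ with $(a'',b'')\preceq_\theta(a',b')\preceq_\theta(a,b)$, so $(a'',b'')\in\theta_0$; and $f,g$ match the pairs, so $\theta_0$ is a composite bijection between configurations, secured by the lemma, with $(a,b)\in\theta_0\subseteq\theta$. For \emph{Coincidence-free}, given distinct $(a,b),(a',b')\in\theta$, antisymmetry of $\preceq_\theta$ forbids each being below the other, so one of them — say $(a,b)$ — is not $\preceq_\theta$-below the other; the down-closure of $(a',b')$ is then a secured bijection containing $(a',b')$ but not $(a,b)$.

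The main obstacle is the securedness bookkeeping. Because the partial order defining ``secured'' is generated by transitive closure \emph{within} each $\theta$ separately, the orders on $\bigcup Z$, $\bigcap Z$, and on down-closures need not be mere restrictions of the order on the witness $\theta'$; what saves the argument is precisely the monotonicity $\preceq_\theta\subseteq\preceq_{\theta'}$ together with the fact that a generated reflexive–transitive relation sitting inside a finitary partial order is again a finitary partial order. Once this lemma is established, the remaining manipulations of configurations under union, intersection, and down-closure are routine.
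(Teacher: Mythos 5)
Your proof is correct. The paper itself states this proposition without proof (deferring to the cited stable-family literature), so there is no in-text argument to compare against; your route --- isolating the monotonicity lemma that $\preceq_\theta\ \subseteq\ \preceq_{\theta'}$ whenever $\theta\subseteq\theta'$, so that any composite bijection between configurations sitting inside a secured bijection is automatically secured, and then reducing each of the four axioms to exhibiting the relevant set of pairs as such a composite bijection --- is exactly the standard argument, and all the supporting checks (down-closure and consistency of the projected sets, direct image commuting with unions, injective image commuting with intersections, antisymmetry and finitariness inherited from the witness) go through as you describe. One small point worth spelling out uniformly, which you make explicit only for the union case: in each clause the candidate set ($\bigcup Z$, $\bigcap Z$, or the $\preceq_\theta$-down-closure of a pair) must be shown to be the \emph{entire} composite bijection on its projections, i.e.\ to contain every pair $(a,b)\in x_0\times y_0$ with $f(a)=g(b)$, not merely a sub-bijection; in each case this is the same one-line appeal to local injectivity of $g$ on the ambient configuration (together with $\theta$ itself being a bijection, for the down-closure case), so no gap results.
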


We now apply a general construction $\Pr(\B)$  for obtaining an event structure from the stable family $\B$.  
Suppose  $(a,b)\in \theta$ where $\theta\in\B$~\cite{icalp82,evstrs}.  Because $\B$ is a stable family
$$
[ (a,b) ]_\theta \eqdef \bigcap \set{\phi\in \B}{\phi\subseteq \theta \ \& \   (a,b)\in \phi}\in\B
$$ 
and moreover is a finite set; it represents a minimal way in which $(a,b)$ can occur.  We build the pullback of event structures taking such minimal elements as events.

\begin{prop}\label{lem:pbcharn} 
%\noindent
%{\bf Proposition and Definition}
Defining $\Pr(\B) =
(P, \Con, \leq)$  where:
$$
\eqalign{
&P= \set{[ (a,b) ]_\theta }{(a,b)\in \theta\ \&\ \theta \in \B}\ ,\cr
&
Z \in \Con  \hbox{ iff }  Z \subseteq P \ \&\ \bigcup Z \in\B\  \hbox{ and}\cr
&
p\leq p' \hbox{ iff }  p, p'\in P\ \&\ p\subseteq p'\,}
$$
yields an event structure. 
There is an order isomorphism $$\beta: (\conf{\Pr(\B)},\subseteq) \iso (\B, \subseteq)$$
where $\beta(y) = %\max\, y = 
\bigcup y$ for $y\in \conf{\Pr({\B})}$% and configurations $x\in\Fam$
; its  mutual inverse is $\ga$ where $\ga(\theta) = \set{[(a,b)]_\theta}{(a,b)\in \theta}$ for  $\theta\in\B$.
\end{prop}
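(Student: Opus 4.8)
The plan is to isolate a single elementary property of primes and reduce everything to it. Call it \emph{witness-independence}: whenever a prime $p=[(c,d)]_\psi$ satisfies $p\subseteq\theta$ for some $\theta\in\B$, then already $p=[(c,d)]_\theta$. Indeed $p\subseteq\theta$ with $p\in\B$ and $(c,d)\in p$ exhibits $p$ as one of the configurations intersected to form $[(c,d)]_\theta$, so $[(c,d)]_\theta\subseteq p$; conversely $[(c,d)]_\theta\subseteq p\subseteq\psi$ exhibits $[(c,d)]_\theta$ as a configuration below $\psi$ containing $(c,d)$, whence $p=[(c,d)]_\psi\subseteq[(c,d)]_\theta$. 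Thus a prime lying below a configuration $\theta$ is computed inside $\theta$, and its generator can be read off from $\theta$. That each prime lies in $\B$ and is finite is already recorded above (Stable for membership, Finitary for finiteness). In particular, taking $\theta\subseteq\theta'$ gives $[(a,b)]_\theta=[(a,b)]_{\theta'}$ for $(a,b)\in\theta$, the form I shall use for monotonicity.

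Given this handle, checking that $\Pr(\B)$ is an event structure is routine. Causal dependency is $\subseteq$ restricted to $P$, hence a partial order. For finitariness, every prime $p'\leq p$ has the form $[(c,d)]_p$ for its generator $(c,d)\in p'\subseteq p$ by witness-independence, so $[p]$ embeds into the finite set $p$ and is finite. The consistency axioms follow from Complete: $\setof p\in\Con$ since $\bigcup\setof p=p\in\B$; if $Y\subseteq X\in\Con$ then every prime of $Y$ is $\subseteq\bigcup X\in\B$, so $Y$ is a compatible subset of $\B$ and $\bigcup Y\in\B$, i.e.\ $Y\in\Con$; and if $p\leq p'\in X\in\Con$ then $p\subseteq p'\subseteq\bigcup X$, so $\bigcup(X\cup\setof p)=\bigcup X\in\B$. (As usual $\Con$ is read as the finite $Z\subseteq P$ with $\bigcup Z\in\B$.)

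Next I would set up $\beta$ and $\ga$. Monotonicity of $\beta(y)=\bigcup y$ is immediate and $\ga$ is monotone by witness-independence. I check $\ga(\theta)\in\iconf{\Pr(\B)}$: down-closure is exactly witness-independence (a prime below $[(a,b)]_\theta$ has a generator $(c,d)\in\theta$ and equals $[(c,d)]_\theta\in\ga(\theta)$), and consistency holds since any finite $Z\fsubseteq\ga(\theta)$ has $\bigcup Z\subseteq\theta\in\B$, so $\bigcup Z\in\B$ by Complete. The round-trips are then short. First $\beta(\ga(\theta))=\bigcup_{(a,b)\in\theta}[(a,b)]_\theta=\theta$, since each $(a,b)\in[(a,b)]_\theta\subseteq\theta$. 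Second $\ga(\beta(y))=y$: for $p\in y$ with generator $(c,d)$ we have $p=[(c,d)]_{\bigcup y}\in\ga(\bigcup y)$ by witness-independence, giving $y\subseteq\ga(\beta(y))$; conversely for $(a,b)\in\bigcup y$, say $(a,b)\in p\in y$, the prime $[(a,b)]_{\bigcup y}=[(a,b)]_p$ lies below $p$, hence in $y$ by down-closure. Being mutually inverse and monotone in both directions, $\beta$ is the required order isomorphism.

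The one genuinely delicate point, and the step I expect to be the main obstacle, is showing $\beta(y)=\bigcup y\in\B$ for an arbitrary, possibly infinite, configuration $y$ — which is needed before $\ga(\beta(y))$ even makes sense. For finite $y$ this is free: $y$ is a finite consistent subset of itself, so $\bigcup y\in\B$ by the definition of $\Con$, and this already yields the isomorphism on finite configurations. For infinite $y$ one writes $y$ as the directed union of its finite sub-configurations $y_i$, each contributing $\bigcup y_i\in\B$, and must close $\B$ under the resulting directed union $\bigcup_i\bigcup y_i=\bigcup y$; a directed family need not be bounded, so the four axioms as literally stated do not give this for free. This is precisely the content of Completeness in the form appropriate to infinite configurations, the ``original'' formulation alluded to in the footnote, and I would invoke it here, the Finitary and Coincidence-free axioms ensuring the directed family is well-behaved. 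Once $\bigcup y\in\B$ is secured, the computations above run verbatim, yielding the order isomorphism between $\iconf{\Pr(\B)}$ and $\B$ that restricts to finite configurations and finite members on each side.
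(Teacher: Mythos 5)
Your proof is correct, and it is essentially the standard argument that the paper itself does not spell out but delegates to the cited literature: your ``witness-independence'' property is exactly the usual characterisation of primes (that $[(a,b)]_\theta$ is computed inside, and is stable under passage to, any configuration of $\B$ containing it as a subset), and everything else---the event-structure axioms, down-closure and consistency of $\ga(\theta)$, and the two round-trips $\beta\ga=\id$ and $\ga\beta=\id$---reduces to it just as in Winskel's original development. You are also right to flag $\bigcup y\in\B$ for infinite configurations $y$ as the one point where the axioms as literally stated do not suffice: completeness for \emph{bounded} subsets does not yield closure under directed unions, and the intended axiom, in the original formulation of stable families with infinite configurations alluded to in the paper's footnote, is completeness for \emph{finitely compatible} subsets, which covers directed families since these are finitely compatible. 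One small correction there: the Finitary and Coincidence-free axioms play no role in securing this step---directed completeness is simply part of the strengthened Complete axiom (alternatively, for the concrete family at hand one can check directly that a directed union of secured bijections is again a secured bijection, inclusions of secured bijections being automatically rigid); your appeal to those two axioms is spurious but harmless. Note finally that the statement's $\conf{\Pr(\B)}$ should be read as $\iconf{\Pr(\B)}$, since $\B$ contains infinite secured bijections; your treatment addresses this intended reading.
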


There are obvious maps $\pi_1:\Pr(\B)\to A$ and $\pi_2: \Pr(\B)\to B$ given by $\pi_1([ (a,b) ]_\theta) =a$ and $\pi_2([ (a,b) ]_\theta) =b$.  These make the required pullback $\Pr(\B)$, $\pi_1$, $\pi_2$ of event structures.  Why?  
The family $\B$ is a pullback in the category of stable families (its maps are similar to those of event structures).  There is a coreflection from the category of event structures to that of stable families.  Its right adjoint is $\Pr$ which consequently preserves pullbacks, yielding the pullback of event structures when applied to $\B$~\cite{icalp82,ecsym-notes}. 
 
%Useful notation for describing configurations of the pullback:

\begin{definition}{\rm 
We shall 
write $x\wedge y$ for the configuration $\ga(\theta_{x,y})$ of $\Pr(\B)$ which correponds to a secured bijection $\theta_{x,y}: x\iso fx = gy\iso y$ between $x\in\iconf A$ and $y\in\iconf B$.   Note that any configuration of the pullback is of the form $x\wedge y$ for unique $x\in\iconf A$ and $y\in\iconf B$. Of course, given $x\in\iconf A$ and $y\in\iconf B$ we cannot be assured that they form a secured bijection even when $fx= gy$. We shall treat $\wedge$ as a partial operation with  $x\wedge y$ only defined when $x\in\iconf A$ and $y\in\iconf B$ form a secured bijection.  
}\end{definition}

\section{Rigid image}\label{sec:rigidimage}

This section is only used late on in the paper when showing how `may' and `must' behaviour transfer to the rigid image of a strategy---Section~\ref{sec:axioms}.   

There is an adjunction  between $\ER$, the category of event structures with rigid maps, to $\ET$, the category of event structures with total maps. Its right adjoint's action on  an event structure $B$ is given as follows. 
For $x \in\iconf B $, an {\em augmentation} of $x$ is a finitary partial order 
$(x, \alpha)$ where $\forall b, b'\in x.\  b\leq_B b' \implies b\,\alpha\, b'$.
We can regard such augmentations as elementary event structures in which all subsets of events are consistent.  
Order  all augmentations by taking
$(x,\alpha) \hookrightarrow (x',\alpha') $
iff
$x\subseteq x'$ and 
the inclusion $i:x\hookrightarrow x'$  is a rigid map  $i: (x,\alpha) \to (x',\alpha')$.  Augmentations under $ \hookrightarrow$  form a prime algebraic domain~\cite{NPW,PrimeAlgDom}, so are isomorphic to the configurations
of an event structure, $\aug(B)$%---see Appendix~B
; its events are the complete primes, which are precisely the augmentations with a top element.

\begin{prop} \label{propn:rigtotadjn}\cite{ESS} The inclusion functor $\ER\hookrightarrow\ET$  
has a right adjoint $\aug$.  
The category $\ET$ is isomorphic to the Kleisli category of the monad induced on $\ER$ 
by the adjunction.
\end{prop}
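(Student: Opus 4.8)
The plan is to exhibit $\aug$ as a right adjoint by producing the counit together with the couniversal property of the adjunction, and then to read off the Kleisli statement formally from the shape of this adjunction, using that the inclusion is the identity on objects.

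First I would pin down the counit $\epsilon_B : \aug(B)\to B$ in $\ET$. An event of $\aug(B)$ is a complete prime, that is an augmentation $(y,\beta)$ of some $y\in\conf B$ possessing a $\beta$-greatest element, written $\max(y,\beta)$; I set $\epsilon_B(y,\beta)=\max(y,\beta)$. I would then check that $\epsilon_B$ is a total map of event structures: under the order isomorphism between $\conf{\aug(B)}$ and the augmentations, a configuration of $\aug(B)$ corresponds to an augmentation $(x,\alpha)$ of some $x\in\conf B$, and $\epsilon_B$ sends it onto $x\in\conf B$; local injectivity holds because the primes below $(x,\alpha)$ are the $\alpha$-down-closures of the elements of $x$, which have distinct tops inside $x$.

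The heart of the argument is the couniversal property. Given $A\in\ER$ and a total map $f:A\to B$, I would build the mediating rigid map $g:A\to\aug(B)$ by sending an event $a$ to the augmentation carried onto $f[a]$ by the local bijection $f\rstd[a]:[a]\iso f[a]$; explicitly $f(a_1)\,\alpha_a\,f(a_2)$ iff $a_1\leq_A a_2$, for $a_1,a_2\in[a]$. The key verification is that $(f[a],\alpha_a)$ really is an augmentation, i.e.\ that $\alpha_a$ refines $\leq_B$ on $f[a]$: this is exactly the local reflection of causal dependency by $f$ recorded earlier in the excerpt. Since $a$ is the $\leq_A$-top of $[a]$, the augmentation $(f[a],\alpha_a)$ has top $f(a)$, hence is a complete prime, so $g(a)$ is well defined and $\epsilon_B(g(a))=f(a)$. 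I would then check that $g$ is a rigid map of event structures — preservation of causal dependency is immediate from $a_1\leq_A a_2\Rightarrow[a_1]\subseteq[a_2]$ with the two augmentation orders agreeing on the restriction, while preservation of consistency and local injectivity descend from the corresponding properties of $f$ — and that $g$ is the unique such rigid map, since rigidity forces $g(a)$ to be precisely the prime augmentation determined by $[a]$ and $f\rstd[a]$, which is recovered by composing with $\epsilon_B$. Establishing that $f\leftrightarrow g$ is a bijection $\ET(A,B)\iso\ER(A,\aug(B))$ natural in $A$ and $B$ yields the adjunction $(\ER\hookrightarrow\ET)\dashv\aug$; naturality is a routine chase of the definition of $\alpha_a$ along rigid maps in $A$ and total maps in $B$.

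For the second statement, let $T=\aug$ be the monad induced on $\ER$. The comparison functor $K$ from its Kleisli category $\mathrm{Kl}(T)$ to $\ET$ is the identity on objects, because the inclusion is; and on a Kleisli morphism, namely a rigid map $h:A\to\aug(B)$, it returns the adjoint transpose $\epsilon_B\circ\iota(h):A\to B$. Since this transpose is exactly the hom-set bijection established above, $K$ is full and faithful; being also bijective on objects, it is an isomorphism of categories, which is the assertion. Functoriality of $K$ — that Kleisli composition is carried to composition in $\ET$ — is the usual consequence of the triangle identities. The main obstacle throughout is the couniversal property: making $\alpha_a$ well defined and order-refining, which rests squarely on local reflection of causality, and pinning down uniqueness from rigidity. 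Once the couniversal arrows are in hand, the Kleisli isomorphism is essentially formal, needing only that the left adjoint is bijective on objects.
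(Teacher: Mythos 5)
Your proof is correct, and since the paper offers no proof of this proposition (it is delegated to the citation [ESS]), there is nothing it diverges from: your argument builds exactly on the paper's own construction of $\aug$ via augmentations and complete primes. The three ingredients you isolate --- the counit $\epsilon_B$ taking a prime augmentation to its top, the couniversal property whose well-definedness rests on local reflection of causal dependency (with uniqueness forced by rigidity, via $g'[a]=[g'(a)]$), and the purely formal Kleisli isomorphism from the left adjoint $\ER\hookrightarrow\ET$ being identity on objects --- are precisely the expected proof of the statement.
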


Rigid maps $f:A\to B$ have a useful image given by restricting the causal dependency of $B$ to  the set of events $f A$, the direct image of the events of $A$, and taking a finite set of events to be consistent if they are the image of a consistent set in $A$.
More generally, a total map $f:A\to B$ 
has a {\em rigid image}  given by the image of its corresponding Kleisli map, the rigid map $\bar f:A\to \aug(B)$.
 Put more directly, a total map  $f:A\to B$ has a {\em rigid image} comprising a factorisation $f= f_1 f_0$ where
$f_0$ is rigid epi and $f_1$ is a total map,
$$\xymatrix{
A\ar[dr]_f\ar@{>>}[r]^{f_0} & B_0\ar[d]^{f_1}\\ 
&B\,,}
$$
 with the following universal property: for any factorisation of $f= f_1' f_0'$ where $f_0'$ is rigid epi, there is a unique map $h$ such that the diagram
$$\xymatrix{
A\ar@{>>}@/^1.5pc/[rr]^{f_0}
\ar[dr]_f\ar@{>>}[r]^{f'_0} & B_0'\ar[d]|{f_1'}\ar@{-->}[r]^h&B_0\ar[dl]^{f_1}\\ 
&B&}
$$
commutes; the map $h$ is necessarily also rigid and epi.   

From the universal property of rigid image we derive:

\begin{prop}
Let $f:A\to B$ and $g:B\to C$  be maps of event structures.  Assume that $f$  is   rigid and epi.   Then, 
$g$ and  $g\circ f$ have the same rigid image.
%the rigid image of $g$ equals the rigid image of $g\circ f$.
\end{prop}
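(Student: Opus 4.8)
The plan is to compute both rigid images explicitly through the right adjoint $\aug$ of Proposition~\ref{propn:rigtotadjn}, rather than to wrestle with the universal property by a diagram chase. Recall that the rigid image of a total map $g\colon B\to C$ is the image of its transpose $\bar g\colon B\to\aug(C)$, a rigid map, and that the image of a rigid map is the sub-event-structure of its codomain carried by the direct image of its events, ordered by restriction, with a finite set declared consistent exactly when it is the direct image of a consistent set of the domain. The first step I would establish is the identity
\[
\overline{g\circ f}=\bar g\circ f\colon A\to\aug(C).
\]
This holds because $f$, being rigid, is a morphism of $\ER$, so naturality of the adjunction bijection $\ET(-,C)\cong\ER(-,\aug(C))$ in its domain variable makes transposition commute with precomposition by $f$; both sides are genuine rigid maps, the right-hand one as a composite of the rigid maps $\bar g$ and $f$. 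Here one must check that the precomposition taking place on the $\ET$-side is ordinary composition, which it is precisely because $f$ is rigid and hence lies in $\ER$.

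Granting this, the rigid image of $g\circ f$ is the image of the rigid map $\bar g\circ f$, while the rigid image of $g$ is the image of $\bar g$; both are sub-event-structures of $\aug(C)$, and I would show they are literally equal. Their event sets agree because $f$ is epi, hence surjective on events, so $(\bar g\circ f)A=\bar g(fA)=\bar g B$; the causal orders agree since each is the restriction of $\leq_{\aug(C)}$ to this common set. For consistency, a finite set is consistent in the image of $\bar g$ iff it is $\bar g Y$ for some $Y\in\Con_B$, and consistent in the image of $\bar g\circ f$ iff it is $\bar g(fZ)$ for some $Z\in\Con_A$; these coincide precisely because $f$ being a rigid epi means it coincides with its own rigid image, so that $\Con_B=\{fZ\mid Z\in\Con_A\}$. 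Thus the two images are the same sub-event-structure of $\aug(C)$, and their total parts agree as well, being in each case the restriction of the counit $\aug(C)\to C$; hence $g$ and $g\circ f$ have the same rigid image.

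The step I expect to be the crux is the consistency matching, and with it the correct reading of the hypothesis. What is really used is not that $f$ is merely surjective on events but the full force of $f$ being a rigid epi: that $\Con_B$ is generated as the direct images of $\Con_A$. Indeed, were $B$ to carry a consistent pair with no consistent preimage in $A$, the image of $\bar g$ could acquire concurrency absent from the image of $\bar g\circ f$, and the two rigid images would genuinely differ; the hypothesis that $f$ is rigid \emph{and} epi is exactly what excludes this. Everything else is bookkeeping: verifying that $\bar g\circ f$ is rigid, that surjectivity of $f$ on events transfers to equality of the two underlying event sets, and that the total parts are the common counit restriction.
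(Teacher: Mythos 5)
Your proof is correct, and it takes a genuinely different route from the paper's. The paper obtains the proposition abstractly from the universal property of the rigid image: writing $g=g_1\circ g_0$ with $g_0$ rigid epi and $g_1$ total, one notes that $g\circ f=g_1\circ(g_0\circ f)$ is again a rigid-epi/total factorisation (rigid epis compose), and the universal property makes such factorisations unique up to unique isomorphism, so this \emph{is} the rigid image factorisation of $g\circ f$; no inspection of $\aug(C)$ is needed. You instead unwind the construction: naturality of the adjunction of Proposition~\ref{propn:rigtotadjn} gives $\overline{g\circ f}=\bar g\circ f$ (legitimate precisely because $f$ lies in $\ER$), and you compare the two concrete images inside $\aug(C)$, obtaining equality on the nose rather than up to canonical isomorphism. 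Your route also makes explicit exactly where each hypothesis enters --- rigidity for the naturality step, surjectivity on events for the underlying sets, and generation of consistency for $\Con$ --- whereas the paper's route is shorter and independent of the particular construction of $\aug$.

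One caveat, which to your credit you partially flag yourself: the step asserting that ``$f$ being a rigid epi means it coincides with its own rigid image, so that $\Con_B$ is generated by direct images of $\Con_A$'' is not a consequence of ``epi'' in the bare categorical sense, and you assert it rather than prove it. A rigid map surjective on events is automatically right-cancellable in $\ET$ and $\ER$: take $A$ with two inconsistent events $a_1,a_2$, take $B$ with two events $b_1\co b_2$, and $f(a_i)=b_i$; this $f$ is rigid and epi in the categorical sense, yet $\setof{b_1,b_2}$ is the image of no consistent set of $A$. For this $f$ and $g=\id_B$ the two rigid images genuinely differ ($B$ versus two conflicting events), so both the proposition and the paper's stated universal property are sound only under the stronger reading of ``rigid epi'' that you in fact use: every consistent set (equivalently, finite configuration) of $B$ is a direct image from $A$ --- the property enjoyed by the canonical map onto the image. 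Since the paper's universal property forces that reading, your proof is correct relative to the intended notion; but the equivalence between ``rigid epi'' and ``coincides with its own rigid image'' should be stated as the definition in force, not presented as a known consequence of epi-ness.
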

%\begin{proof}
%Write the rigid image of $g$ as $\Im g$ and the rigid image of $gf$ as $\Im{gf}$.  From the universal property associated with the rigid image of $gf$ there is a unique (necessarily rigid epi) map $h:\Im g \to \Im{gf}$ such that
%$$
%\xymatrix{
%\ar@{>>}[drr]^{}
%A\ar@{>>}[r]^f &B \ar@{>>}[r]^{g_0}& \Im g\ar@{..>}[d]|h\ar[r]^{g_1}& C\\
%&&\Im{gf}\ar[ur]
%}
%$$
%commutes.  Write $l\eqdef hg_0$. Then $l$ is rigid epi being the composition of such.  
%From the universal property associated with the rigid image of $g$ there is a unique (necessarily rigid epi) map $k:\Im{gf} \to \Im g$ such that
%$$
%\xymatrix{
%\ar@{>>}[dr]_l
%B\ar@{>>}[r]^{g_0} &\Im g\ar[r]^{g_1}&C\\
%&\Im{gf}\ar[ur]_{}\ar@{..>}[u]|k
%}
%$$
%commutes.  By uniqueness of the universal property of the rigid-image of $g$ we obtain $kh=\id_{\Im g}$.
 %By uniqueness of the universal property of the rigid-image of  $gf$ we obtain $hk=\id_{\Im{gf}}$.  Hence the rigid images are isomorphic.  As substructures of $\aug(C)$ they are equal. 
%\end{proof}

\section{Event structures with polarity}

Both games and strategies will be represented by \esswp.
An {\em \eswp}~comprises $(A, \pol)$ where $A$ is an event structure with a polarity function $\pol_A:A\to\setof{+,-, 0}$ ascribing a 
 polarity $+$ (Player), $-$ (Opponent) or $0$ (neutral)  to its events. The events correspond to (occurrences of) moves. 
It will be  technically useful to allow events of neutral polarity; they arise, for example, in the interaction between a strategy and a counterstrategy. We write $A^0$ for the \eswp~in which all the polarities are reassigned $0$, so made neutral. 
A {\em game} shall be represented by an \eswp~in which no moves are neutral. 

\begin{notation}{\rm 
In an \eswp~$(A,\pol)$, with configurations $x$ and $y$, write 
$x\subseteq^-y$  to mean  inclusion in which all the intervening events are moves of Opponent, \ie~$\pol (y\setdif x)\subseteq \setof{-}$. Similarly, $x\subseteq^0y$ signifies an  inclusion in which all the intervening moves are neutral.  However, we shall 
write $x\subseteq^+y$ for inclusion in which the intervening events are either neutral or moves of Player. (The latter choice reflects the fact that neutral moves in a strategy behave as internal moves of Player.)
We say a configuration $x\in\iconf A$ is {\em +-maximal} iff $x$ is maximal in $\iconf A$ w.r.t.~$\subseteq^+$, \ie~the only way that $x$ extends to a larger configuration is through the occurrence of Opponent moves.  
} \end{notation}

\subsection{Operations on games}

We introduce two fundamental  operations on games.

 \subsubsection{Dual}  The {\em dual}, $A^\perp$, of a game $A$,   comprises the same underlying   
 event structure as $A$ but with  a reversal of polarities.  %We occasionally consider the dual of $A$, an \eswp, which may contain neutral events; then neutral events in $A$ remain neutral in the dual $A^\perp$. 
As mentioned in the introduction, we shall implicitly adopt the view of Player and understand a strategy in a game $A$ as strategy for Player.  A counterstrategy in a game $A$ is a strategy for Opponent in the game $A$, \ie~a strategy (for Player) in the game $A^\perp$.  
 
 \subsubsection{Simple parallel composition} This operation simply juxtaposes two games, and more generally two \esswp.     Let $(A,\leq_A,\Con_A,\pol_A)$ and  $(B,\leq_B,\Con_B,\pol_B)$ be \esswp.  The events of  $A\vvbar B$
    are $(\setof 1\times A) \cup (\setof 2 \times B)$, their polarities unchanged, with the only relations of causal dependency given by 
$(1,a)\leq (1,a')$ iff $a\leq_A a'$ and 
$(2,b)\leq (2,b')$ iff $b\leq_B b'$;  a finite set $X$ of events   is consistent in $A\vvbar B$ iff its components $X_A$  in $A$ and and $X_B$ in $B$ are individually consistent. The unit w.r.t.~simple composition is the  empty \eswp, written $\emptyset$.
We shall adopt the same operation    
for configurations of a game $A\vvbar B$, regarding a configuration $x$ of the parallel composition as $x_A\vvbar x_B$.  

 If we are not a little careful we can run into distracting technical issues through $(A\vvbar B)\vvbar C$ not being strictly the same as $A\vvbar (B\vvbar C)$.  For our purposes it will suffice to adopt the convention that when we write \eg~$A\vvbar B\vvbar C$ the simple parallel composition of three \esswp\, we shall mean the event structure with events
$$
\setof 1\times A\, \cup\,  \setof 2\times  B \,\cup\,  \setof 3\times C\,,
$$
with causal dependency and consistency copied from those of $A$, $B$ and $C$. As in the binary case, we adopt the same notation for configurations and can describe a typical configuration $x$ of $A\vvbar B\vvbar C$ as $x_A\vvbar x_B\vvbar x_C$.  
  
\subsection{Strategies between games}

A strategy {\em from } a game $A$ {\em to} a game $B$ is a strategy in the compound game $A^\perp\vvbar B$.  Of course we shall have to define what it means to be a strategy in a game.  Given another strategy $\tau$ from the  game $B$ to a game $C$, informally we obtain their composition $\tau \scirc \sig$ from $A$ to $C$ by playing the two strategies off against each other in the common game $B$ and hiding the resulting interaction.

The composition of strategies can introduce hidden deadlocks, conflicts and divergences which affect its observable behaviour:

\begin{example}{\rm 
Let  $B$ be the game consisting of two concurrent Player events $b_1$ and $b_2$, and $C$ the game with a single Player event $c$.  We illustrate 
 the composition of two strategies   $\sig_1$  and $\sig_2$ from the empty game $\emptyset$  to $B$,  with  $\tau$ from $B$ to $C$. 
The strategy $\sig_1$  in the game $\emptyset^\perp\vvbar B$ nondeterministically plays $b_1$ or  $b_2$.  The strategy $\sig_2$ also in the game $\emptyset^\perp\vvbar B$ just plays $b_2$. 
The strategy $\tau$ in the game $B^\perp\vvbar C$  does nothing if  just $b_1$ is played and plays the single Player event $c$ of $C$   if $b_2$ is played. The composition $\tau\scirc \sig_1$ in the game $\emptyset^\perp\vvbar C$  may play $c$ or not according as $\sig_1$ plays $b_1$ or $b_2$.   The composition $\tau\scirc \sig_2$ also in the game $\emptyset^\perp\vvbar C$   must play $c$. But the two  compositions $\tau\scirc \sig_1$ and $\tau\scirc \sig_2$ are indistinguishable once the interaction over the common game $B$ is hidden.
 }\endex\end{example}

%There can be similar deadlocks through the composed strategies  imposing incompatible causal dependencies over a common game:  

%\begin{example}{\rm
%Again let the game $B$ comprise two concurrent Player moves and $C$ consist of a single Player move.
%We represent the strategies $\sig$ from the empty game to $B$  and $\tau$ from $B$ to $C$ diagrammatically as
%
%$\xymatrix{
%\opmove\ar@{{|>}}[rr]&&\plmove \ar@{~}[d]\\
%&\ar[d]_{\sig} &\plmove\\
%&&\\
%b_1\opmove  && \plmove\, b_2 
% }
% $ 
% %
%\qquad\qquad\qquad
%%
%$\xymatrix{
%\plmove&&\ar@{{|>}}[ll]\opmove\ar@{{|>}}[r]&\plmove  \\
%&\ar[d]_{\tau} && \\
%&&\\
%b_1\plmove  && \opmove\, b_2  &\plmove\, c
% }
 %$ 
% \\
%The strategy $\sig$ may nondeterministically play $b_2$ or wait till $b_1$ before doing so.  
%The strategy $\tau$ only plays $b_1$ after $b_2$ and $c$ after $b_2$.  Only in the case where $\sig$ plays $b_2$ without awaiting $b_1$ will $c$ occur.  The fact that $c$ does not occur if $\sig$ decides to await $b_1$ is lost in the composition.  ***AND CAN'T BE SEEN THROUGH NEUTRAL EVENTS***
%}\end{example}

If we are to distinguish the two compositions of the example,  
%If we are to detect the possibility of deadlock 
we need to take some account of their internal moves of interaction.

\section{Strategies with neutral moves}
 
%To treat phenomena such as deadlock explicitly    
Thus motivated, 
we study {\em bare strategies} with neutral moves, in which we can see the events of interaction not visible in the game.  
Recall we assume that in games all events have +ve or $-$ve polarity.

\begin{definition}\label{def:barestrat}{\rm 
A {\em bare strategy}   from  a game $A$  to a game $B$  comprises a
  total map  $\sig: S\to A^\perp\vvbar N\vvbar B$ of \esswp~(in which $S$ may also have neutral events)
where 
\begin{itemize}
\item[(i)]
 $N$ is an event structure consisting  solely of neutral events;
\item[(ii)] $\sig$ is {\em receptive},\\% \ie~
$\all x\in\conf S, y\in\conf{A^\perp\vvbar N\vvbar B}.\  \sig x \subseteq^- y \implies \exists! x'\in\conf S.\ x\subseteq x' \ \&\ \sig x' =y$\,;
 \item[(iii)] 
$\sig$ is innocent in  that it is both +-innocent and $-$-innocent:\\
{\em $+$-innocent:}
if
$s\imc s' \ \& \   \pol(s) = +$ then $ \sigma(s)\imc \sigma(s')$\,; \\
{\em $-$-innocent:}
if
$s\imc s' \ \& \   \pol(s') = -$ then $ \sigma(s)\imc \sigma(s')$\,. 
\end{itemize}
Note that $s'$ in +-innocence and $s$ in $-$-innocence may be neutral events.\footnote{This definition of {\em linear innocence}, which applies in the presence of neutral events, appears in the work of Claudia Faggian and Mauro Piccolo~\cite{FP}. It is not to be confused with the innocence of Martin Hyland and Luke Ong, to which it only relates indirectly;  to disambiguate the two notions  ``courtesy'' has been used  for that here.
An extension of Hyland-Ong  innocence to concurrent games is given in~\cite{lics15}.  
 Bare strategies have also been called ``partial'' strategies~\cite{stratsproc} and ``uncovered'' strategies~\cite{essential}.}

 A {\em strategy} from  a game $A$  to a game $B$ comprises a
  total map  $\sig: S\to A^\perp\vvbar B$ of \esswp~for which the composite 
$\sig: S\to A^\perp\vvbar B\iso A^\perp\vvbar \emptyset \vvbar B$ is a bare strategy~\cite{lics11}. 

We shall often identify strategies with bare strategies with no neutral events, and (bare) strategies in a game with (bare) strategies from the empty game $\emptyset$.  
}\end{definition}

Consider two bare strategies $\sig:S\to A^\perp\vvbar N\vvbar B$ and $\sig':S'\to A^\perp\vvbar N\vvbar B$.  A map between them, a 2-cell  $f:\sig\Rightarrow \sig'$,  comprises a map $f:S\to S'$ of \esswp~such that
$$
\xymatrix{
S\ar[d]_\sig\ar[r]^f &S'\ar[dl]^{\sig'}\\
A^\perp\vvbar N\vvbar B&}
$$
commutes.  In this way bare strategies in $A^\perp\vvbar N\vvbar B$  form a category, $$\NStrat(A,N,B)\,.$$   
We obtain the category $\Strat(A,B)$ %of 2-cells 
of strategies from $A$ to $B$ from the  special case when $N=\emptyset$.

\subsection{Strategies from bare strategies}

We obtain a strategy as the visible part of a bare strategy
when we hide neutral events via a projection $p$:

\begin{prop}
Let $\sig: S\to A^\perp\vvbar N\vvbar B$ be a bare strategy---so satisfying properties (i), (ii) and (iii) of Definition~\ref{def:barestrat}.  Then, $\sig$ satisfies an additional property:
\begin{itemize}
\item[(iv)]  in the partial-total factorisation of the composition of  $\sig$ with the projection $A^\perp\vvbar N\vvbar B\to A^\perp\vvbar B$,
$$\xymatrix{\ar[d]_\sig \ar[r]^p S&S_\downarrow\ar[d]^{\sig_\downarrow}\\
A^\perp \vvbar N\vvbar B\ar[r] &A^\perp\vvbar B}
$$
  the  defined part $\sig_\downarrow$ is a strategy, which we call the {\em visible part} of $\sigma$. 
\end{itemize}
(Conversely, (iv) together with 
receptivity and no incidence of a +ve event immediately preceding a neutral event in $S$,
 suffice to establish that $\sig$ is a bare strategy.)
\end{prop}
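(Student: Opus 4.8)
The plan is to verify directly that the total map $\sig_\downarrow\colon S_\downarrow\to A^\perp\vvbar B$ produced by the partial--total factorisation is a strategy, i.e.\ is receptive and courteous. Writing $V\eqdef\set{s\in S}{\pol(s)\neq 0}$, the projection hides exactly the events of $S$ sent into the $N$-component, so $V$ is the set of non-neutral events of $S$ and $S_\downarrow=S{\mathbin\downarrow}V$ carries only $+$/$-$ events (and $\sig_\downarrow$ inherits polarities from $\sig$); thus courtesy \emph{is} the whole innocence requirement for $\sig_\downarrow$. I would first record the two facts that do all the work, both immediate from innocence together with the observation that immediate causal dependency in $A^\perp\vvbar N\vvbar B$ never crosses between the $N$-component and the $A^\perp\vvbar B$-component: \emph{(a)} no Player event immediately precedes a neutral event (else $+$-innocence forces $\sig(s)\imc\sig(s')$ with $\sig(s)\in A^\perp\vvbar B$, $\sig(s')\in N$, impossible); and \emph{(b)} no neutral event immediately precedes an Opponent event (symmetrically, from $-$-innocence). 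I would also note the configuration correspondence: $z\mapsto z\cap V$ carries $\conf S$ onto $\conf{S_\downarrow}$, with $x\mapsto[x]$ a section satisfying $[x]\cap V=x$.

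For courtesy, suppose $s\imc_{S_\downarrow}s'$; then $s<_S s'$ with only neutral events strictly between, giving a covering chain $s=e_0\imc_S\cdots\imc_S e_k=s'$ in $S$ whose intermediate events are neutral. If $\pol(s')=-$ the last link $e_{k-1}\imc_S s'$ violates \emph{(b)} unless $k=1$, so $s\imc_S s'$ and $-$-innocence gives $\sig_\downarrow(s)\imc\sig_\downarrow(s')$; if $\pol(s)=+$ the first link violates \emph{(a)} unless $k=1$, and $+$-innocence concludes. This is exactly courtesy.

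For receptivity, take $x\in\conf{S_\downarrow}$ and $y$ with $\sig_\downarrow x\subseteq^- y$. For existence I would lift along $[x]$: the set $y^\ast\eqdef\sig\,[x]\,\cup\,(y\setdif\sig_\downarrow x)$ is a configuration of $A^\perp\vvbar N\vvbar B$ with $\sig\,[x]\subseteq^- y^\ast$ (the new events are Opponent moves of $A^\perp\vvbar B$, which in that component acquire no predecessors in $N$), so receptivity of $\sig$ yields a unique $z'\supseteq[x]$ with $\sig z'=y^\ast$; the added events are Opponent, hence in $V$, and $x'\eqdef z'\cap V=x\cup(z'\setdif[x])$ satisfies $\sig_\downarrow x'=y$. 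For uniqueness, given any $x''\in\conf{S_\downarrow}$ extending $x$ with $\sig_\downarrow x''=y$, I would show its lift $[x'']$ contains no neutral event outside $[x]$: the new visible events $x''\setdif x$ all map to $y\setdif\sig_\downarrow x$, hence are Opponent, so $[x'']$ adds no Player events; a minimal neutral $n\in[x'']\setdif[x]$ would lie below some new Opponent $v\in x''\setdif x$, and walking up a covering chain from $n$ to $v$, the last neutral event is immediately followed by a non-neutral one which by \emph{(b)} cannot be Opponent, hence is a Player event $\leq v$, which — there being no new Player events — already lies in $x\subseteq[x]$; down-closure then forces $n\in[x]$, a contradiction. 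Thus $\sig\,[x'']=y^\ast$ and uniqueness of receptivity for $\sig$ gives $[x'']=z'$, whence $x''=x'$.

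Finally, for the parenthetical converse I would assume (iv), receptivity of $\sig$, and the absence of $+\imc0$, and recover innocence. $+$-innocence is immediate: if $s\imc s'$ with $\pol(s)=+$ then $s'$ is non-neutral by hypothesis, so $s\imc_{S_\downarrow}s'$ and courtesy of $\sig_\downarrow$ applies. For $-$-innocence the only problematic case is a neutral $s\imc s'$ with $\pol(s')=-$, which I would exclude by playing receptivity of $\sig$ against receptivity of $\sig_\downarrow$: setting $x\eqdef([s']\cap V)\setdif\setof{s'}$, receptivity of $\sig$ extends $[x]$ by the single Opponent move $\sig(s')$ to a configuration adding one Opponent event $s^\ast$ with all predecessors in $[x]$, so $s^\ast\neq s'$ (as $s'$ has the predecessor $s\notin[x]$); then $x\cup\setof{s^\ast}$ and $[s']\cap V=x\cup\setof{s'}$ are two distinct extensions of $x$ by an Opponent event over $\sig(s')$, contradicting receptivity of $\sig_\downarrow$. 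I expect the uniqueness clause of receptivity — precisely this ``no new neutral events'' lemma and its mirror in the converse — to be the main obstacle; everything else reduces to the two forbidden immediate-dependency patterns \emph{(a)} and \emph{(b)}.
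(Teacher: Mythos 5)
Your proof is correct. Note that the paper states this proposition without giving a proof (the result is deferred to the cited work on partial/uncovered strategies), so there is no in-paper argument to compare against; judged on its own terms, your direct verification is complete and the architecture is the natural one. The two forbidden patterns \emph{(a)} (no $+\imc 0$) and \emph{(b)} (no $0\imc -$), extracted from innocence plus the fact that causal dependency in a simple parallel composition never crosses components, do indeed carry the whole argument: they collapse the covering chains for courtesy, and \emph{(b)} powers the key uniqueness lemma that lifting a $\subseteq^-$-extension along $[x]$ introduces no neutral events, which is exactly what makes the uniqueness clause of receptivity for $\sig_\downarrow$ go through. Your converse argument --- playing the uniqueness clause of receptivity of $\sig$ against that of $\sig_\downarrow$ to exclude a neutral event immediately preceding an Opponent event --- is also sound; the one step you leave implicit is why $s\notin[x]$ there, but it is a one-liner: any $v\in x$ with $s\leq v$ would satisfy $s\leq v\leq s'$ with $v$ visible and $v\neq s'$, forcing $s<v<s'$ and contradicting $s\imc s'$. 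Two cosmetic remarks: in the uniqueness step of receptivity you do not actually need $n$ minimal (any neutral $n\in[x'']\setdif[x]$ lies below some $v\in x''\setdif x$, since $v\in x$ would put $n$ in $[x]$); and in the existence step it is worth recording explicitly that $y^\ast$ is consistent because consistency in a parallel composition is componentwise, its $N$-component being that of $\sig[x]$ and its $A^\perp\vvbar B$-component being $y$ itself.
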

 
With the notation of the lemma above, write
$$\vis x\eqdef px\in\iconf{\vis S}\,$$ for the visible image of a configuration $x\in\iconf S$. 
The hiding operation on strategies extends to a  functor %$\vis$ %using the universality of partial-total factorisation 
$$
\vis{(\_)}: \NStrat(A,N,B) \to \Strat(A,B)\,; 
$$
a 2-cell $f:\sig\Rightarrow \sig'$ between bare strategies restricts to a % unique 
2-cell $\vis{f}: \vis{\sig}\Rightarrow \vis{\sig'}$ between their visible parts. 
It acts so
$$
\vis f \vis x = \vis{(f x)}
$$
for all $x\in\iconf S$.  

% which makes the following diagram commute:
%$$\xymatrix{
%\ar[dd]|{\ \sig\ } \ar[rd]^f S\ar[rr]&&{\vis(S)} \ar[dd]|{\vis(\sig)} \ar@{..>}[dr]^{\vis(f)} &\\
%&S'\ar[dl]|{\ \sig'\ }\ar[rr]|{{\qquad\!\!\!\!\!\!\!\qquad\quad } }&&{\vis(S')} \ar[dl]|{\vis(\sig')} \\
%A^\perp \vvbar N\vvbar B\ar[rr]&&A^\perp\vvbar B&&}
%$$
%\begin{prop}
%The visible  part determines a functor
%$$
%\vis: \NStrat(A,N,B) \to \Strat(A,B)\,.
%$$
%\end{prop}

%Bare strategies in a game $A$ correspond to bare strategies from the empty game to $A$, and that strategies between games  in $A$ correspond to those bare strategies in which  the neutral events $N$ are the empty event structure.

\subsection{Composition}
We can compose two bare strategies $$\sig:S\to A^\perp\vvbar M\vvbar  B \ \hbox{  and }\  \tau:T\to  B^\perp\vvbar N\vvbar C$$ by pullback.    Ignoring polarities temporarily,  and padding with identity maps, %and rearrangement (up to isomorphism which we elide), 
we obtain $\tau\sncirc \sig$ via the pullback  
$$
 \xymatrix{
 &\ar[dl]_{%\Pi_1
 }T\sncirc S
 \pb{270}\ar[dr]^{%\Pi_2
 }&\\
S\vvbar N\vvbar C\ar[dr]_{\sig\vvbar N \vvbar C}&&\ar[dl]^{A\vvbar M\vvbar \tau} A\vvbar M\vvbar T\\
 &A\vvbar M\vvbar B \vvbar N \vvbar C\,
}$$
as the ensuing map 
$$
\tau\sncirc \sig: T\sncirc S\to A^\perp\vvbar (M \vvbar B^0 \vvbar N)\vvbar  C
$$
once we reinstate polarities and make the events of $B$ neutral.  

As a pullback the configurations of $T\sncirc S$  are built from configurations of $S$ and $T$. 
Let $x\in \iconf S$ and $y\in \iconf T$.  Let $\sig x = x_{A^\perp} \vvbar x_0\vvbar x_B$ and $\tau y = y_{B^\perp} \vvbar y_0\vvbar y_C$.  
Define
$$
y\sncirc x \eqdef (x\vvbar y_0\vvbar y_C) \wedge (x_{A^\perp} \vvbar x_0\vvbar y)
$$
which will be defined and a configuration in $\iconf{T\sncirc S}$  if $x_B=y_{B^\perp}$ and the corresponding bijection secured. 
The following property, useful later,  is a consequence of the receptivity of $\sig$ and $\tau$.
\begin{lemma}\label{lem:maxlforpstrats}  %Let $x\in\iconf S$ and $y\in\iconf T$ with $y\sncirc x$ defined. The 
A configuration $y\sncirc x\in \iconf{T\sncirc S}$ is 
+-maximal in $\iconf{T\sncirc S}$ iff $x$ is +-maximal in $\iconf S$ and $y$ is +-maximal in $\iconf T$.
\end{lemma}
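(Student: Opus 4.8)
The plan is to reduce $+$-maximality to a purely \emph{local} condition and then analyse, component by component, how an immediately addable move of the interaction relates to addable moves of $S$ and of $T$. First I would record that a configuration is $+$-maximal exactly when every singleton extension is by an Opponent move: in any extension $w\subseteq^+ w'$ the minimal events of $w'\setminus w$ are addable to $w$ and must be neutral or Player, so $w$ is $+$-maximal iff no neutral or Player event is addable to $w$. I would also use two standing facts: that $\sig$ and $\tau$, as maps of \esswp, preserve polarity; and that, by the pullback description of $T\sncirc S$ via secured bijections, each event of the interaction projects to a component of $A\vvbar M\vvbar B\vvbar N\vvbar C$, with events over $A^\perp$ and $M$ carried purely by $S$, events over $N$ and $C$ carried purely by $T$, and events over $B^0$ being synchronisations of a $B$-move of $S$ with the matching $B^\perp$-move of $T$. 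Crucially such a synchronisation pairs a Player move on one side with an Opponent move on the other, since the common move of $B$ carries opposite polarities in $B$ and in $B^\perp$.

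For the direction assuming $x$ and $y$ are both $+$-maximal, I would suppose $z \eqdef y\sncirc x$ has a neutral or Player event $p$ addable to it and derive a contradiction by cases on the component of $p$. If $p$ lies over $A^\perp$ it is necessarily Player (games have no neutral moves) and, projecting along $T\sncirc S\to S\vvbar N\vvbar C$, it yields a Player event addable to $x$; the cases of $p$ over $M$ (a neutral event of $S$) and, symmetrically, $p$ over $N$ or $C$ (a neutral, resp.\ Player, event of $T$) are analogous. If $p$ lies over $B^0$ it is a synchronisation of a move $s$ of $S$ addable to $x$ and a move $t$ of $T$ addable to $y$; since one of $s,t$ is Player, either $x$ or $y$ admits an addable Player move. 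In every case this contradicts the $+$-maximality of $x$ or of $y$.

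For the converse, assuming $z$ is $+$-maximal, I would show that a neutral or Player move addable to $x$ (the case of $y$ being symmetric) lifts to a neutral or Player extension of $z$. If the move $s$ lies over $A^\perp$ or $M$ it is internal to the $S$-side, so it extends $x$ to $x'$ without disturbing the matching over $B$ (we keep $x'_B = x_B = y_{B^\perp}$); the secured bijection $\theta_{x',y}$ is obtained by adjoining a single top element, hence stays secured, so $y\sncirc x'$ is defined and extends $z$ by a move of the same (neutral or Player) polarity as $s$. If instead $s$ lies over $B$ it is Player, so its image is a Player move of $B$, equivalently an Opponent move of $B^\perp$ that is addable to $y_{B^\perp}=x_B$; receptivity of $\tau$ then supplies a unique $t$ addable to $y$ with $\tau(t)$ the matching $B$-move, and the synchronisation of $s$ and $t$ gives a neutral event extending $z$. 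Either way $z$ is not $+$-maximal, a contradiction.

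The main obstacle is precisely this last lift in the converse direction: an addable Player move of $S$ over the common game $B$ has no counterpart in $z$ until a complementary move of $T$ is produced, and there is no \emph{a priori} reason for such a move to be already available in $y$. The point is that this move is \emph{Opponent} for $\tau$, so receptivity of $\tau$ forces it to be receivable once the underlying $B$-move is enabled in the game; this is exactly the use of receptivity flagged before the statement. The remaining verifications---that the enlarged bijections stay secured (adjoining a maximal pair creates no cycle in the secured order) and that the projections reflect addability up to the local injectivity of maps---are routine, and I would treat them briefly.
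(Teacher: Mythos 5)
Your proof is correct and takes essentially the approach the paper intends: the paper states this lemma without a written proof, remarking only that it is ``a consequence of the receptivity of $\sig$ and $\tau$'', and your argument supplies exactly that, invoking receptivity at precisely the one non-routine point (lifting an addable Player move of $S$ over the common game $B$ to a synchronisation with a matching Opponent move of $T$, and symmetrically). The remaining steps---the component-wise case analysis on $A^\perp \vvbar M \vvbar B^0 \vvbar N \vvbar C$ and the check that adjoining a top pair keeps the bijection secured---are the routine details one would expect, and you handle them correctly.
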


Given a 2-cell $f:\sig\Rightarrow \sig'$ between bare strategies in $A^\perp\vvbar N\vvbar B$
and $g:\tau\Rightarrow \tau'$ between bare strategies in $B^\perp\vvbar M\vvbar C$, from the universality of pullback we obtain the 2-cell
$$g\sncirc f: \tau\sncirc \sig \Rightarrow \tau'\sncirc \sig'\,$$
between the two compositions in $A^\perp\vvbar (N\vvbar B^0\vvbar  M)\vvbar C$. 
It acts so 
$$
(g\sncirc f)(y\sncirc x) = (g y)\sncirc (fx)
$$
on a typical configuration  $y\sncirc x$. %  \in \iconf{T\sncirc S}$ with $x\in \iconf S$ and $y\in \iconf T$.
This extends  composition of bare strategies to a functor
$$\sncirc: \NStrat(B, N,  C) \times \NStrat(A,M,B) \to  \NStrat(A,M \vvbar B^0 \vvbar N,C)\,.
$$ 
Composition  of bare strategies restricts to a functor between rigid 2-cells.

We obtain the composition of strategies as the composite functor
$$
\eqalign{
\scirc: \Strat(B,C)\times \Strat(A,B) \iso\  &
\NStrat(B,\emptyset, C)\times \NStrat(A,\emptyset, B)\cr
&\arr{\sncirc}
 \NStrat(A,\emptyset\vvbar B^0\vvbar \emptyset, C)\arr{\vis{(\_)}} \Strat(A,C)\,. }
$$
Though we generally elide the isomorphisms regarding strategies as bare strategies without neutral events, and write 
$$
\tau \scirc \sig \eqdef \vis{(\tau \sncirc \sig)}
$$ 
for the composition of strategies $\sig\in \Strat(A, B)$ and $\tau\in\Strat(B,C)$. 
Describing the strategies $\sig:S\to A^\perp\vvbar B$ and $\tau: T\to B^\perp\vvbar C$ as having composition 
$$\tau\scirc \sig: T\scirc S \to A^\perp\vvbar C\,,$$ we can
present a typical configuration of $T\scirc S$ as 
$$y\scirc x \eqdef (y\sncirc x)_\downarrow$$ 
for $x\in\iconf S$ and $y\in\iconf T$.

Composition is preserved in extracting the visible part from bare strategies: 
\begin{lemma}\label{lem:visprescomp}\label{lem:interactdefpart}
Let  $\sig:S\to A^\perp\vvbar  M\vvbar  B$ and $\tau: T\to B^\perp \vvbar N\vvbar C$ be bare strategies.  Then,
$$
\vis{(\tau \sncirc \sig)} = \vis{\tau} \scirc \vis{\sig}\,.
$$
For $x\in\iconf S$ and $y\in\iconf T$, 
$$
(y\sncirc x)_\downarrow  = y_\downarrow \scirc x_\downarrow\,,
$$
with  %the two sides being equidefined.
one side defined if the other is.
\end{lemma}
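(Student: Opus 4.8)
The plan is to establish the configuration-level identity $(y\sncirc x)_\downarrow = y_\downarrow \scirc x_\downarrow$, together with the assertion that one side is defined exactly when the other is, and then to lift this to the stated equality of strategies. Recall that by Proposition~\ref{lem:pbcharn} every configuration of the interaction $T\sncirc S$ has the form $y\sncirc x$ for unique $x\in\iconf S$ and $y\in\iconf T$, and that hiding is given by the partial-total factorisation, so $\vis{(\tau\sncirc\sig)}$ is determined up to isomorphism by the family of visible images $(y\sncirc x)_\downarrow$ ordered by inclusion. The same description applies to $\vis{\tau}\sncirc\vis{\sig}$ and its visible part. Thus once I exhibit an inclusion-preserving bijection between the configurations of $\vis{(\tau\sncirc\sig)}$ and those of $\vis{\tau}\scirc\vis{\sig}$ commuting with the projections to $A^\perp\vvbar C$, the two strategies coincide.

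First I would unwind both sides at the level of projections. Writing $\sig x = x_{A^\perp}\vvbar x_0\vvbar x_B$ and $\tau y = y_{B^\perp}\vvbar y_0\vvbar y_C$, the configuration $y\sncirc x$ is defined exactly when $x_B = y_{B^\perp}$ and the associated bijection is secured; its visible image $(y\sncirc x)_\downarrow$ projects in $A^\perp\vvbar C$ to $x_{A^\perp}\vvbar y_C$, all events over $M\vvbar B^0\vvbar N$ being hidden. On the other side $x_\downarrow = px$ and $y_\downarrow = p'y$ satisfy $\vis{\sig}(x_\downarrow) = x_{A^\perp}\vvbar x_B$ and $\vis{\tau}(y_\downarrow) = y_{B^\perp}\vvbar y_C$, so $y_\downarrow\sncirc x_\downarrow$ asks for the same matching $x_B = y_{B^\perp}$ over $B$, and the further hiding $(y_\downarrow\sncirc x_\downarrow)_\downarrow = y_\downarrow\scirc x_\downarrow$ again projects to $x_{A^\perp}\vvbar y_C$. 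Hence the two sides carry the same projection data, and the candidate correspondence $(y\sncirc x)_\downarrow \mapsto y_\downarrow\scirc x_\downarrow$ is forced.

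The crux is the equivalence of definedness, which reduces to comparing the two securedness conditions. The bijection underlying $y_\downarrow\sncirc x_\downarrow$ is obtained from that underlying $y\sncirc x$ by deleting the matched pairs over $M$ and over $N$; since hiding restricts the causal order ($S_\downarrow = S{\mathbin\downarrow} V$ keeps $v\leq v'$ whenever $v\leq_S v'$ with both visible), the secured order of $y_\downarrow\sncirc x_\downarrow$ is exactly the restriction of that of $y\sncirc x$ to the retained events over $A\vvbar B\vvbar C$. A restriction of a finitary partial order is again one, giving the forward implication. For the converse I would argue that reinstating the neutral events over $M$ and $N$ creates no cycle among the retained events: such events are internal to $S$ and to $T$ respectively and, by finitarity of $\leq_S$ and $\leq_T$, contribute only dependencies within a single side; any cycle in the full secured order must therefore cross between the $S$-side and the $T$-side through synchronised events over $B$, and these survive in the reduced bijection, so the cycle already appears there. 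This locality argument — that the shared game $B$ is the sole channel of interaction — is where the concurrent structure is essential, and I expect it to be the main obstacle, requiring a careful induction on the generating steps of the transitive closure defining securedness.

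Finally, I would check that the correspondence preserves inclusion in both directions: this is immediate once definedness is settled, since $y\sncirc x \subseteq y'\sncirc x'$ iff $x\subseteq x'$ and $y\subseteq y'$ by the order-isomorphism of Proposition~\ref{lem:pbcharn}, and the visible images order accordingly. Lifting through Proposition~\ref{lem:pbcharn} and the universal property of the partial-total factorisation then yields an isomorphism $\vis{(\tau\sncirc\sig)}\cong\vis{\tau}\scirc\vis{\sig}$ commuting with the maps to $A^\perp\vvbar C$, which is the asserted equality; the displayed configuration identity $(y\sncirc x)_\downarrow = y_\downarrow\scirc x_\downarrow$, with one side defined iff the other, is recorded along the way.
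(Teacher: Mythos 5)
The paper states this lemma without proof, so the only basis for comparison is the argument one would expect, and your overall route---reduce to the configuration-level identity via the order-isomorphism of Proposition~\ref{lem:pbcharn}, check that hiding the $M$- and $N$-pairs of a secured bijection preserves and reflects securedness, then lift through the partial-total factorisation---is indeed the natural one, and your forward direction is sound. But there is a genuine gap in your converse. Securedness demands that the generated transitive relation be a \emph{finitary} partial order, and failure has two modes: a cycle, or an element with infinitely many elements below it. You only rule out cycles. It is not obvious a priori that reinstating the hidden pairs cannot create an element of the bijection underlying $y\sncirc x$ with an infinite set of predecessors even though the restricted order underlying $y_\downarrow\sncirc x_\downarrow$ is finitary; since $x$ and $y$ may be infinite configurations, this case cannot be waved away. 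It can be repaired: decompose any chain below a pair $p$ into maximal single-side runs, which switch sides only at $B$-pairs; the last run is bounded by finitariness of $\leq_S$ or $\leq_T$, the $B$-pairs along the chain are bounded by finitariness of the restricted order (each run between $B$-pairs collapses, by transitivity of $\leq_S$ or $\leq_T$, to a single step between visible pairs), and the initial element is bounded by finitariness of $\leq_S$ or $\leq_T$ again. But this argument must be made; your sketch does not contain it.

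A second, smaller flaw is in the justification of your locality claim itself. You assert that hidden events ``contribute only dependencies within a single side \ldots by finitarity of $\leq_S$ and $\leq_T$'', but finitarity is irrelevant here. The correct reason is that maps of event structures locally reflect causal dependency: in the pullback the generating relation includes steps from $\leq_M$, $\leq_N$ (and likewise $\leq_A$, $\leq_C$) on the game/neutral components, and each such step between pairs of the bijection is reflected by $\sig$ or $\tau$ into a $\leq_S$- or $\leq_T$-step within the corresponding configuration. Only after this subsumption is it true that the generated order is the transitive closure of $\leq_S$-steps and $\leq_T$-steps, with side-switches occurring exactly at the synchronised $B$-pairs---which is what both your cycle-collapsing argument and the finitariness repair above rely on. Your proposed ``induction on the generating steps'' would go through once grounded in this reflection property rather than finitarity. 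Finally, note that the two event structures are equal only up to the canonical isomorphism commuting with the maps to $A^\perp\vvbar C$ (their events are primes of different stable families); this matches the paper's convention of eliding such isomorphisms, so your reading of the stated equality is the intended one.
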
  
 
\subsection{The copycat strategy}
 The copycat strategy is the identity for composition of strategies. We present its construction and key property from~\cite{lics11}.

\begin{lemma}\label{lem:copycat}
Let $A$ be an \eswp.  
There 
is an \eswp\, $\CC_A$ having the same events  and polarity as ${A^\perp\vvbar A}$ but with causal dependency $\leq_{\CC_A}$ given as the transitive closure of the relation
$$\leq_{A^\perp\vvbar A} 
\cup\ 
\set{(\bar c,  c)}{c\in {A^\perp\vvbar A} \ \&\ \pol_{A^\perp\vvbar A}(c) = +}\,
$$  
and  finite subsets of $\CC_A$ consistent if their down-closure w.r.t.~$\leq_{\CC_A}$ are consistent in ${A^\perp\vvbar A}$.  (For $c\in A^\perp\vvbar A$ we use $\bar c $ to mean the corresponding copy of $c$, of opposite polarity,  in the alternative component, \ie~$\overline{(1,a)} = (2, a) \hbox{ and } \overline{(2,a)} = (1, a)$.)

\noindent
The configurations of $\CC_A$ have the form $x\vvbar y$ where $y\sqsubseteq_A x$, \ie~$y\supseteq^- x\cap y \subseteq^+ x$,  for $x, y\in\iconf A$. (The relation $\sqsubseteq_A$  is a partial order, called the {\em Scott order}~\cite{fossacs13}.)

\noindent
 The {\em copycat} strategy for $A$ is the map $\cc_A: \CC_A\to A^\perp\vvbar A$ which acts as identity on events. 
We have  $\sig \iso \sig\scirc \cc_A \iso \cc_B\scirc \sig$,
for any strategy $\sig\in\Strat(A,B)$. 
\end{lemma}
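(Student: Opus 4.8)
The plan is to prove the four assertions in turn, reserving most effort for the identity law $\sig \iso \sig \scirc \cc_A \iso \cc_B \scirc \sig$. First I would check that $\CC_A$ is a genuine \eswp. The only nontrivial point is that $\leq_{\CC_A}$ is a finitary partial order: each added pair $(\bar c, c)$ points from a negative event to the positive event directly across the two components, so a cycle in the transitive closure would have to alternate between the $\leq_{A^\perp\vvbar A}$-order and these links, and a short case analysis on polarities rules this out; finitarity follows since each $[c]_{\CC_A}$ stays finite. The consistency axioms transfer from $A^\perp\vvbar A$ through down-closure. The description of configurations as those $x\vvbar y$ with $y\sqsubseteq_A x$ is then read off: down-closure under the copycat link $\bar c\imc c$ for positive $c$ says exactly that a positive event present in one component forces its negative copy in the other, which is precisely the Scott-order constraint; antisymmetry of $\sqsubseteq_A$ drops out of the uniqueness of the representation $x\vvbar y$. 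That $\cc_A$ is a strategy---total, receptive and innocent---is then routine from this description: receptivity matches Opponent extensions by extending the dependent copies, and innocence holds because the only immediate links added are of the courteous shape $\bar c\imc c$ with the right polarities.

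For the identity law I would argue at the level of configurations and promote the result to an isomorphism of \esswp, using that an event structure is determined by its domain of configurations. Take $\sig:S\to A^\perp\vvbar B$, set $T=\CC_B$, and form $\cc_B\scirc\sig$. Using the configuration description of composition, a configuration $y\sncirc x$ of the interaction $\CC_B\sncirc S$ is determined by $x\in\iconf S$ together with a configuration $y\in\iconf{\CC_B}$ whose $B^\perp$-component matches the $B$-component of $\sig x$; hiding the internal $B^0$-events leaves a configuration of $\cc_B\scirc\sig$ whose image lies in $A^\perp\vvbar B$. I would show that $x\mapsto (y\sncirc x)_\downarrow$, for the appropriate copycat configuration $y$, is a bijection $\iconf S\cong\iconf{\cc_B\scirc\sig}$, the key subtlety being that securedness of the interacting bijection pins down $y$ uniquely and forbids ``lagging'' copycat configurations from producing spurious composite configurations. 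This bijection preserves polarity and commutes with the strategy maps into $A^\perp\vvbar B$ by construction, so it remains only to see that it is an order isomorphism, equivalently that the underlying candidate event bijection $\phi$ between $S$ and the visible events of the composite respects causal dependency in both directions.

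The main obstacle is exactly this causal-order match. One direction is the easy one: the pullback projection $\CC_B\sncirc S\to S\vvbar B^0\vvbar B$ is a map of event structures, hence locally reflects causal dependency, so $s\leq_S s'$ forces $\phi(s)\leq\phi(s')$ already before hiding. The delicate converse must rule out spurious dependencies created by the copycat links $\bar c\imc c$ and by securedness: a causal chain $\phi(s)\leq\cdots\leq\phi(s')$ in the composite may detour through hidden internal $B$-events. Here I would trace such a chain back through the pullback and invoke the innocence of $\sig$ from condition (iii)---$+$-innocence to control immediate links out of Player (and neutral) moves, $-$-innocence for those into Opponent moves---together with the courteous shape of copycat's own links, to show every internal detour projects to an actual causal step of $S$; receptivity from condition (ii) supplies the uniqueness that prevents any extra Opponent-induced order from surviving. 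This is precisely where conditions (ii) and (iii) are indispensable, mirroring the fact that receptivity and innocence are exactly what make copycat neutral for composition.

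Finally, running the same construction with the two components of copycat exchanged yields $\sig\scirc\cc_A\iso\sig$; alternatively it follows from the first isomorphism by the evident symmetry of copycat under dualisation. Since the map underlying $\phi$ is an isomorphism of \esswp commuting with $\sig$ and $\cc_B\scirc\sig$, it is an invertible $2$-cell, which is what $\iso$ denotes here, completing the proof.
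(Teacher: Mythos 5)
This lemma is stated in the paper without proof---the construction and its identity property are imported from the cited reference \cite{lics11}---so there is no in-paper argument to compare against; measured against the standard proof, your outline reconstructs its architecture faithfully: the Scott-order description of copycat configurations (your polarity case analysis for acyclicity of $\leq_{\CC_A}$ is the right one, since the added links jump from a negative event to its positive copy and within-component steps follow $\leq_{A^\perp\vvbar A}$), the routine verification that $\cc_A$ is receptive and courteous, and the identity law proved by a configuration-level correspondence for the interaction $\CC_B\sncirc S$, with innocence killing spurious hidden-event-mediated dependencies and receptivity matching Opponent behaviour. This is exactly where the published argument puts the weight, and your remark that receptivity and innocence are ``precisely'' what make copycat neutral is correct---in fact the cited work proves the converse too.

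One local claim in your second paragraph is wrong as stated: securedness does \emph{not} pin down $y$ uniquely given $x$, and it is not what forbids lagging witnesses. If $\sig$ plays a single Player move $b$ over $B$, then both the caught-up copycat configuration $\{b\}\vvbar\{b\}$ and the lagging one $\{b\}\vvbar\emptyset$ form perfectly secured bijections against $x=\{s_b\}$; the interaction genuinely contains lagging configurations. What is true is that after hiding, every lagging witness projects to a composite configuration already obtained from a caught-up one, and the mechanism ensuring this is the courtesy of $\sig$ together with the shape of copycat's links---the very tools you invoke afterwards for the order-isomorphism---rather than securedness. Concretely, for surjectivity of $x\mapsto (y\sncirc x)_\downarrow$ you should take the down-closure $[z]$ in the interaction of a composite configuration $z$ and check its hidden synchronisations are accounted for: a synchronisation on an Opponent move of $\sig$ lies \emph{above} its visible outer copy (copycat's link $\bar c\imc c$ with $c$ positive on the inner side), so that copy is already in $z$; while any causal chain out of a Player synchronisation stays over $B$ tracking the game order (by $+$-innocence) until it hits an Opponent move (where $-$-innocence again yields a game dependency), whose outer copy's presence in $z$ forces, via the game order on the outer $B$, the outer copy of the Player move into $z$ as well. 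With that attribution repaired---the fix lives entirely inside your own innocence argument---your proposal is the standard proof.
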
 

The axioms on strategies are precisely those needed to ensure that copycat behaves as identity w.r.t.~composition $\scirc$ and thus obtain a bicategory $\Strat$ of games and strategies~\cite{lics11}.
Of course copycat is not the identity for the composition of bare strategies; that composition will generally have extra neutral events introduced through interactions.

\section{`May' and `must' tests}

%\begin{example}{\rm 
Consider the following three bare strategies in the game $A$ comprising a single Player move $\plmove$.   Neutral events are drawn as $\nemove$. 
 
$\xymatrix{
& 
\\
 S_1\ar[d]_{\sig_1}&  
 {{\plmove}}\ar@{|.{>}}[d] \\
\nemove\vvbar A  &\!\!\!\!\!\!\!\!\!\!\nemove\quad  \plmove  
 }
 $ 
 %\quad 
  $\xymatrix{
&& 
 & 
\\
& S_2\ar[d]_{\sig_2}& \nemove \ar@{|.{>}}[d]  \ar@{|>}[r] 
 &\plmove\ar@{|.{>}}[d] \\
&\nemove\vvbar A& \nemove & \plmove  
 }
 $ 
%\quad 
$\xymatrix{
&& 
\nemove  \ar@{~}[d]
& 
\\
& S_3\ar[d]_{\sig_3}& \nemove  \ar@{|>}[r] \ar@{|.{>}}[d]
 &\plmove\ar@{|.{>}}[d] \\
&\nemove\vvbar A& \nemove& \plmove  
 }
 $ 

\noindent
From the point of view of observing the move over the game $A$ the first two bare strategies, $\sig_1$ and $\sig_2$, differ from the the third, $\sig_3$.  In a maximal play both $\sig_1$ and $\sig_2$ must result in the observation of the single move of $A$.  However, in $\sig_3$ one maximal play is that in which the topmost neutral event of $S_3$ has occurred, in conflict with the only way of observing the single move of $A$.  
%}\end{example}

We follow  Hennessy and  de Nicola in making these ideas precise~\cite{hennessy-denicola}.  %For configurations $x$, $y$ of an event structure with polarity which may have neutral events write $x\subseteq^p y$ to mean $x\subseteq y$ and all events of $y\setdif x$ have polarity $+$ or $0$.   We write $\subseteq^0$ to mean the inclusion involves only neutral events

 \begin{definition}{\rm  Let $\sig$ be a bare strategy in a game $A$.  Let $\tau:T\to A^\perp\vvbar N\vvbar  \plmove$ be a `test' bare strategy from $A$ to the game consisting of a single Player move $\plmove$. Write $\tick\eqdef (3, \plmove)$.    

Say $\sig$ {\em may pass}  $\tau$ iff there exists  $y\sncirc x\in \iconf{T\sncirc S}$, where $x\in\iconf S$ and $y\in\iconf T$,  with the image $\tau  y$ containing $\tick$. (Note that we may w.l.o.g.~assume that the configuration $y\sncirc x$ is finite.) 

Say $\sig$ {\em must pass}  $\tau$ iff  for all $y\sncirc x\in \iconf{T\sncirc S}$, where $x\in\iconf S$ and $y\in\iconf T$ are $\subseteq^+$-maximal, the image  $\tau y$ contains $\tick$.

Say two bare strategies are {\em `may'} (respectively, { \em`must')}  {\em equivalent} iff the  tests they may (respectively, must) pass are the same.  
 
The definitions extend in the obvious fashion to bare strategies of type $A^\perp\vvbar N \vvbar B$.
}\end{definition}

A bare strategy is `may' equivalent, but need not be `must' equivalent, to the  strategy which is its defined part; 
 `must' inequivalence is lost in moving from bare strategies to strategies.  
%***A variation where tests are partl strats

\begin{example}\label{ex:musttesting}{\rm
As an illustration of the subtle nature of testing for `must' equivalence, consider the following bare strategies in the game $A$, as drawn:
$$
\xymatrix@R=2pt@C=20pt{a&b&c&d&e\\
\plmove\ar@{|>}[r] &\opmove\ar@{|>}[r] &\plmove\ar@{|>}[r] &\opmove\ar@{|>}[r] &\plmove
}
$$
The game $A$ consists of five events with the polarity and causal dependency shown.  
One bare strategy $\sig_1$ is
$$
\xymatrix@R=16pt@C=20pt{
\plmove\ar@{|>}[r] &\opmove\ar@{|>}[r] &  \ar@{~}[d]\plmove\ar@{|>}[r] &\opmove\ar@{|>}[r] &\plmove\\
&&\nemove\,,\!\!\!\!}
$$
with one neutral event,
while the other $\sig_2$ is
$$
\xymatrix@R=16pt@C=20pt{
\plmove\ar@{|>}[r] &\opmove\ar@{|>}[r] &  \ar@{~}[d]\plmove\ar@{|>}[r] &\opmove\ar@{|>}[r] & \ar@{~}[d]\plmove\\
&&\nemove&&\nemove\,,\!\!\!\!}
$$
with two neutral events.
Through the possible occurrence of neutral events the bare strategy $\sig_1$ has a +-maximal configuration with just events $\setof{a,b}$ visible from the game, while $\sig_2$ has in addition a +-maximal configuration comprising visible moves  $\setof{a,b,c,d}$.  The following test strategy distinguishes $\sig_1$ and $\sig_2$
$$
\xymatrix@R=16pt@C=20pt{
\opmove\ar@{|>}[r] \ar@{|>}[d]&\plmove\ar@{|>}[r] & \opmove\ar@{|>}[r] &\plmove\ar@{|>}[r] &\opmove\ar@{|>}[d]\\
\plmove  \ar@{~}[urrr]\tick\!\!\!\!\!\!&&&&\plmove\tick\,.\!\!\!\!\!\!\!\!\!}
$$
in that $\sig_1$ must pass the test while $\sig_2$ need not. (The test is a strategy in $A^\perp\vvbar  \plmove\tick$; thus the change in polarity of moves in $A$.)  Composed with the test,   $\sig_1$ may fail to perform the leftmost $\plmove\tick$ but if so must then perform the rightmost $\plmove\tick$; whereas $\sig_2$   may fail to perform both.  
}\endex\end{example}

This example might be puzzling to readers familiar with Harmer and
McCusker's fully abstract model for (sequential) finite non-determinism~\cite{HarmerM99}, in particular regarding their handling of `must' equivalence through
the addition of {\em divergences}. Indeed, there, if a trace
has already potentially triggered a divergence, then any execution going
past that divergence has already lost all hope for `must' convergence.
Accordingly, in~\cite{HarmerM99}, only the first divergence matters---further
divergences are not recorded.

In contrast, Example~\ref{ex:musttesting} indicates how in concurrent strategies, all divergences
matter. This is not an artificiality of concurrent strategies, but something
inherent to the observational power of the tests they support: here the
distinguishing tests manage to observe beyond the first divergence by
running two threads in parallel. The first thread aims to directly
register success; by +-maximality it will be run eventually if the observed
strategy triggers the first divergence. The second thread follows the
execution of the program, then immediately cancels the first thread if
the program goes past the divergence; and then proceeds to test for the
second divergence.

\section{Strategies with stopping configurations}

Bare strategies lack identities w.r.t.~composition, so they do not 
form a bicategory.  Fortunately, 
for `may' and `must' equivalence it is not necessary to use bare strategies; for `may' equivalence strategies suffice; whereas for `must' equivalence  it is sufficient  to carry with a strategy the extra structure of {\em stopping 
configurations}---to be thought of as images of $+$-maximal configurations in an underlying bare strategy.  
As we shall see,
composition and copycat extend to composition and copycat on strategies with stopping configurations, while maintaining  a bicategory.  
We  
tackle the simpler case in which games are assumed to be race-free.  (The extension to games which are not race-free is outlined in~\cite{stratsproc}.)  We recall when an \eswp~is race-free  and the allied notion of deterministic strategy:

\begin{definition}{\rm 
Say $A$, an \eswp, is {\em race-free} iff whenever %$x\longcov a$ and $x\longcov a'$ with $\pol(a)=+$ and $\pol(a') =-$ 
$x\subseteq^+ y$ and $x\subseteq^- z$
for configurations $x$, $y$, $z$ of $A$ then %$x\cup\setof{a, a'}$
$y\cup z$ 
 is also a configuration. 

Say $S$, an \eswp, is {\em deterministic} iff whenever  
$x\subseteq^+ y$ and $x\subseteq z$
for configurations $x$, $y$, $z$ of $S$ then  
$y\cup z$ 
 is also a configuration.  Say a bare strategy $\sig: S\to A^\perp\vvbar N\vvbar B$ is {\em deterministic} iff $S$ is deterministic; with a strategy being deterministic iff it is so as a bare strategy.
}\end{definition}

\begin{lemma}\cite{lics11,DBLP:journals/fac/Winskel12}
Let $A$ be a game.  The copycat strategy $\cc_A$ is deterministic iff the game $A$ is race-free.
\end{lemma}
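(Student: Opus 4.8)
The plan is to work throughout with the characterisation of configurations from Lemma~\ref{lem:copycat}: a configuration of $\CC_A$ is a pair $s\vvbar t$, with $s\in\iconf A$ sitting in the $A^\perp$-component and $t\in\iconf A$ in the $A$-component, subject to $t\sqsubseteq_A s$. Inclusions are then componentwise, and by that same characterisation a candidate union $s\vvbar t$ fails to be a configuration exactly when one of $s,t$ is not a configuration of $A$, or when $t\not\sqsubseteq_A s$. I will use repeatedly the two consequences of $t\sqsubseteq_A s$ obtained by unwinding $t\supseteq^-(s\cap t)\subseteq^+ s$: every Player move of $t$ lies in $s$, and every Opponent move of $s$ lies in $t$. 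I also record the polarity dictionary for the extensions appearing in the determinism condition: since the polarity of $\CC_A$ is that of $A^\perp\vvbar A$, an extension $s_0\vvbar t_0\subseteq^+ s_1\vvbar t_1$ of $\CC_A$-configurations means precisely $s_0\subseteq^- s_1$ and $t_0\subseteq^+ t_1$ read as configurations of $A$, the left component being dualised.

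For the implication that determinism forces race-freeness I argue contrapositively, exhibiting an explicit witness. Suppose $A$ is not race-free, so there are $x\subseteq^+ y$ and $x\subseteq^- z$ in $\iconf A$ with $y\cup z\notin\iconf A$; since $y\setminus x$ are Player moves and $z\setminus x$ Opponent moves, they are disjoint and one checks $y\cap z=x$. I then take the three $\CC_A$-configurations $w_0=y\vvbar x$, $w_1=y\vvbar y$ and $w_2=y\vvbar z$, each of which meets the Scott-order side condition by the computation above. By the polarity dictionary $w_0\subseteq^+ w_1$ (on the right we add the Player moves $y\setminus x$) and $w_0\subseteq^- w_2$, in particular $w_0\subseteq w_2$ (on the right we add the Opponent moves $z\setminus x$). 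Their union is $y\vvbar(y\cup z)$, whose right component is not a configuration of $A$, so it is not a configuration of $\CC_A$; hence $\CC_A$ is not deterministic.

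For the converse, assume $A$ is race-free and take $\CC_A$-configurations $w_0\subseteq^+ w_1$ and $w_0\subseteq w_2$; writing $w_i=s_i\vvbar t_i$, I must show $w_1\cup w_2=(s_1\cup s_2)\vvbar(t_1\cup t_2)$ is again a configuration, i.e.\ verify consistency of each component and the Scott-order side condition. The heart of the matter is consistency of the right component $t_1\cup t_2$. Reducing a failure of consistency to a minimal inconsistent pair $a\in t_1\setminus t_2$ and $b\in t_2\setminus t_1$ with $\{a,b\}\notin\Con$, over a common configuration $x'$: the event $a$ is a Player move (as $t_0\subseteq^+ t_1$), so if $b$ were Opponent then applying race-freeness to $x'\subseteq^+ x'\cup\setof a$ and $x'\subseteq^- x'\cup\setof b$ would make $\setof{a,b}$ consistent; thus $b$ is also Player. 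Now the Scott-order consequences place $a\in s_1$ and $b\in s_2$, and $a\notin s_0$ (otherwise $a,b\in s_2$ would contradict $s_2\in\iconf A$); but $s_0\subseteq^- s_1$ forces $a$ to be an Opponent move, a contradiction. The symmetric argument for $s_1\cup s_2$ transfers a putative conflict, via the same Scott-order consequences, into the \emph{already-established} consistent right component, and the side condition $(t_1\cup t_2)\sqsubseteq_A(s_1\cup s_2)$ is then immediate from ``Player moves of $t_i$ lie in $s_i$'' and ``Opponent moves of $s_i$ lie in $t_i$''. I expect this cross-component transfer to be the main obstacle: race-freeness cannot be applied to either component in isolation, and the point is to recognise that a conflict must be pushed between the two components through the Scott order before race-freeness can bite; everything else is bookkeeping with the dictionary of the first paragraph.
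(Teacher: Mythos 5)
Your first direction (determinism of $\cc_A$ implies race-freeness of $A$) is correct: the witnesses $y\vvbar x\subseteq^+ y\vvbar y$ and $y\vvbar x\subseteq y\vvbar z$, with union $y\vvbar(y\cup z)$ failing on its right component, are exactly the standard construction, and your Scott-order verifications all go through.

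The converse contains a genuine gap: you reduce a failure of consistency of $t_1\cup t_2$ to a \emph{pair} of events with $\setof{a,b}\notin\Con$. That presupposes binary conflict, which the paper pointedly does not assume---$\Con$ is an arbitrary finitary predicate, and the paper's first example warns that consistency is only ``often, though not always'' binary. For a general consistency relation the most a minimality argument can deliver is a configuration $x'$ and events $a,b$ with $x'\cup\setof{a}$ and $x'\cup\setof{b}$ configurations while $x'\cup\setof{a,b}\notin\Con$; the pair $\setof{a,b}$ by itself may be perfectly consistent. With that corrected formulation your pivotal step collapses: from $a\in s_0\subseteq s_2$ and $b\in s_2$ you derive a contradiction with $s_2\in\iconf{A}$, but nothing prevents a configuration from containing both $a$ and $b$---the inconsistency lives over $x'$, and $x'$ need not be included in $s_2$. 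The same defect infects the ``symmetric'' transfer for the left component. (Under the extra hypothesis of binary conflict your argument is in fact correct and pleasantly short; but the lemma is asserted for arbitrary games.)

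The proof in the cited sources avoids this by first establishing a one-step characterisation of determinism---it suffices to show that $w\cup\setof{c},\,w\cup\setof{c'}\in\conf{\CC_A}$ with $\pol(c)=+$ imply $w\cup\setof{c,c'}\in\conf{\CC_A}$, the general case then following by finitariness of $\Con$---and then arguing on a \emph{single} copycat configuration $w=x\vvbar y$. Extensions in different components, and a positive/negative pair in one component, are handled by bookkeeping or one application of race-freeness, just as in your sketch; the crucial case is two positive extensions $a,a'$ of the same component, say the right. There the copycat dependencies $\bar c\leq c$ force $a,a'\in x$, so $(y\cap x)\cup\setof{a,a'}\subseteq x$ is a configuration, and race-freeness applied to $(y\cap x)\subseteq^+(y\cap x)\cup\setof{a,a'}$ and $(y\cap x)\subseteq^- y$ (here $y\setminus x$ is negative because $w$ itself is a copycat configuration) yields $y\cup\setof{a,a'}\in\conf{A}$. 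This is the structural point your global argument misses: in the one-step setting both offending events lie in the opposite component of one and the same configuration $w$, whose Scott-order data is available, whereas your $a$ and $b$ come from the different configurations $w_1,w_2$, and the Scott order of their union is precisely what is being proved. Indeed, if you try to patch your route for general $\Con$ (take $z$ maximal with $t_1\subseteq z\subseteq t_1\cup t_2$, pick a minimal $b\in t_2\setminus z$, which race-freeness forces to be positive), the analogous application of race-freeness would need $z\setminus s_2$ to consist of negative events, and this fails: $z$ may contain positive events of $t_1\setminus t_2$ lying outside $s_2$.
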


Let $\sig:S\to A^\perp\vvbar  N\vvbar B$ be a bare strategy between race-free games  $A$ and $B$. Recall its associated partial-total factorisation
$$
\xymatrix{
\ar[d]_\sig \ar[r]^{p} S& \vis S\ar[d]^{\vis{\sig}}\\
 A^\perp\vvbar N\vvbar B\ar[r] &A^\perp\vvbar B}
$$
where $p$ is a projection sending configurations $x$  of $S$ to configurations  $\vis x$ of $\vis S$.
The visible part  of $\sig$ is a strategy $\vis{\sig}$. 
%:
%$\xymatrix{\ar[d]_\sig \ar[r]^p S&S_0\ar[d]^{\sig_0}\\
%N\vvbar A\ar[r]&A}
%$
Define the {\em  stopping}  configurations in $\iconf{S_\downarrow}$ to be 
$$
\Stop(\sig)\eqdef  
\set{\vis x}{x \in\iconf S  \hbox{ is $+$-maximal}}\,.
$$
 So, 
in other words, the stopping configurations are the visible  images of  configurations which are maximal w.r.t.~neutral or Player moves. 
Note that $\Stop(\sig)$ will include all the +-maximal configurations of $S_\downarrow$: any +-maximal configuration $y$ of  $S_\downarrow$  is the image under $p$ of its down-closure $[y]$ in $S$, and by Zorn's lemma this extends (necessarily by neutral events) to a maximal configuration $x$ of $S$ with image $y$ under $p$; %by maximality, if $x\longcov s$ then $s$ cannot be neutral, nor can it be +ve as this would violate 
the configuration $x$ is +-maximal by the +-maximality of $y$. 
If $\sig$ is deterministic, then $\Stop(\sig)$ consists of precisely the +-maximal configurations of $\vis S$.  
If $\sig$ is a strategy, \ie~it has no neutral events, then $\Stop(\sig)$ is just the set consisting of all +-maximal configurations of $S$.

\begin{definition}\label{def:stratwstpping}{\rm 
A {\em stopping strategy} in a game $A$ comprises  $(\sig, M_S)$, a strategy $\sig:S\to A$ together with a subset $M_S\subseteq\iconf S$ called {\em stopping configurations}.  
As usual, a stopping strategy  from a game $A$ to  game $B$ is a stopping strategy in the game $A^\perp\vvbar B$. 
We let
 $\St: \sig\mapsto (\vis{\sig}, \Stop(\sig))$  denote the operation motivated above from bare strategies $\sig$ to stopping strategies.  %, preserves composition w.r.t.~the following definition.
}\end{definition}

\noindent
{\bf Remark.}
There is the issue of what axioms to adopt on stopping configurations.  We do not insist that stopping configurations include all +-maximal configurations as this property will not be preserved in taking the rigid image of a stopping strategy---Example~\ref{ex:rigim}. %This is because not all infinite configurations in the rigid image are direct images  of a configuration in the original strategy---see Example~\ref{ex:rigim}. 
%We discuss this further in  Section~\ref{sec:sumdecomp}.
\\

Given two stopping strategies   $\sig:S\to A^\perp\vvbar B$, $M_S$ and $\tau: T\to B^\perp\vvbar C$, $M_T$ 
we define their {\em interaction},  
$$
(\tau, M_T)\sncirc (\sig, M_S) \eqdef (\tau\sncirc \sig, M_T\sncirc M_S)\,,
$$
with the stopping configurations of the interaction $T\sncirc S$ as
$$
M_T\sncirc M_S = \set{y\sncirc x}{x\in M_S \ \&\ y\in M_T}
$$
%As above, $T\sncirc S$ is the result of composition before hiding neutral synchronizations.  This is 
---sensible because of Lemma~\ref{lem:maxlforpstrats}.
Define their 
{\em composition}    by 
$$
(\tau, M_T)\scirc (\sig, M_S) \eqdef (\tau\scirc \sig, M_T\scirc M_S)\,,
$$
where 
 the stopping configurations of $T\scirc S$ form the set
$$
M_T\scirc M_S = \set{y\scirc x}{x\in M_S \ \&\ y\in M_T}\,.
$$

To make stopping strategies into  a bicategory we must settle on an appropriate notion of 2-cell. The following choice of definition is useful for `must' equivalence---see Lemma~\ref{lem:2-cellsmaymust}.

\begin{definition}{\rm  A 2-cell  $f: (\sig, M_S) \Rightarrow (\sig', M_{S'})$ between stopping strategies is a 2-cell of strategies $f:\sig\Rightarrow \sig'$ such that $fM_S \subseteq M_{S'}$.  We write $\SStrat(A,B)$ for the category of stopping strategies from game $A$ to game $B$; its maps are 2-cells.  
}\end{definition}

  Composition extends to 2-cells between stopping strategies: %.  Given $f:\sig\Rightarrow \sig'$  and $g:\tau\Rightarrow \tau'$, 
 their composition as 2-cells between strategies is easily shown to preserve stopping configurations. 
%$(g\scirc f)(y\scirc x) = (g y) \scirc (f x)$

\begin{prop} For games   $A$, $B$ and $C$  composition of stopping strategies is a functor
$\scirc: \SStrat(B,C)\times \SStrat(A,B) \to \SStrat(A,C)\,.$
  \end{prop}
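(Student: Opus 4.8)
The plan is to build on the fact that $\scirc$ is already known to be a functor on plain strategies, $\scirc:\Strat(B,C)\times\Strat(A,B)\to\Strat(A,C)$, and to check only that the extra structure of stopping configurations is respected, both on 0-cells and on 2-cells, after which functoriality will be inherited. On 0-cells there is essentially nothing to prove beyond well-definedness: by construction each $y\scirc x=(y\sncirc x)_\downarrow$ with $x\in M_S$ and $y\in M_T$ (where defined) is a configuration of $T\scirc S$, so $M_T\scirc M_S\subseteq\iconf{T\scirc S}$ and $(\tau\scirc\sig,\,M_T\scirc M_S)$ is indeed a stopping strategy in $A^\perp\vvbar C$; Lemma~\ref{lem:maxlforpstrats} is what makes the prior interaction-level definition of $M_T\sncirc M_S$ sensible, and the composition follows by hiding.

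The main step is to show that the composite of two 2-cells of stopping strategies is again a 2-cell of stopping strategies. Let $f:(\sig,M_S)\Rightarrow(\sig',M_{S'})$ and $g:(\tau,M_T)\Rightarrow(\tau',M_{T'})$, so that $f$ and $g$ are 2-cells of strategies with $fM_S\subseteq M_{S'}$ and $gM_T\subseteq M_{T'}$. I would compute the action of $g\scirc f=\vis{(g\sncirc f)}$ on a typical configuration $y\scirc x$ by chaining the known action of $g\sncirc f$ on $y\sncirc x$ with that of the visible-part functor:
$$
(g\scirc f)(y\scirc x)=\vis{(g\sncirc f)}\,\vis{(y\sncirc x)}=\vis{\big((g\sncirc f)(y\sncirc x)\big)}=\vis{\big((gy)\sncirc(fx)\big)}=(gy)\scirc(fx)\,.
$$
Then, whenever $x\in M_S$ and $y\in M_T$, we have $fx\in M_{S'}$ and $gy\in M_{T'}$, so $(gy)\scirc(fx)\in M_{T'}\scirc M_{S'}$ directly from the definition of the composed stopping configurations. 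Hence $(g\scirc f)(M_T\scirc M_S)\subseteq M_{T'}\scirc M_{S'}$, and $g\scirc f$ is a 2-cell of stopping strategies, as required.

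Finally, functoriality is inherited through the forgetful assignment sending a stopping strategy $(\sig,M_S)$ to $\sig$ and a 2-cell of stopping strategies to the same underlying 2-cell; this is a faithful functor $\SStrat(A,B)\to\Strat(A,B)$ that is the identity on 2-cells. Preservation of identity 2-cells and of vertical composition by the stopping-strategy $\scirc$ therefore reduces, via this faithful functor, to the corresponding properties of $\scirc$ on plain strategies, which already hold; in particular the identity 2-cell on $(\sig,M_S)$ is $\id_\sig$, which trivially satisfies $\id_\sig M_S=M_S\subseteq M_S$. I expect the only point demanding care to be the action formula displayed above—correctly threading the partial operations $\sncirc$ and $\scirc$ and the visible-part functor through their definitions—while the identity and composition checks are routine once that formula is in hand.
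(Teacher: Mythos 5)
Your proposal is correct and follows essentially the same route as the paper, which disposes of the proposition by observing that $\scirc$ is already a functor on plain strategies and that horizontal composition of 2-cells preserves stopping configurations---precisely the computation $(g\scirc f)(y\scirc x)=(gy)\scirc(fx)$ you carry out via the action formulas for $\sncirc$ and the visible-part functor $\vis{(\_)}$. Your explicit attention to the partiality of $\sncirc$ (``where defined'') and the reduction of the identity and vertical-composition checks through the faithful forgetful functor to $\Strat(A,B)$ simply spell out what the paper leaves as an easy verification.
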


We should also extend copycat $\cc_A:\CC_A\to A^\perp\vvbar A$  to a stopping strategy.  Because we are assuming $A$ is race-free,  we do this by taking 
$$
M_{\CC_A} \eqdef \set{(x\vvbar x)  \in \iconf{\CC_A}}{ x\in\iconf A}\,.
$$
Because $A$ is race-free, $M_{\CC_A}$ comprises all the +-maximal configurations of $\CC_A$. 
Then, $(\cc_A, M_{\CC_A})$ is an identity w.r.t.~the extended composition.  %***ONLY IF GAME RACE-FREE***

%\begin{prop} When $A$ is race-free, 
%$(\cc_A, M_{\CC_A})$ is  identity w.r.t.~composition of strategies with stopping configurations.
%\end{prop}
%\begin{proof} 
%*****
%By definition,
%$$
%x\in M_{\CC_B}\scirc M_S \hbox{ iff  } \exists z\in \iconf{\CC_B\sncirc S}.\ [x]_{\CC_B\sncirc S} \subseteq^0 z \ \&\ \Pi_1 z\in M_S \ \&\ \Pi_2 z\in M_{\CC_B}\,.
%$$******
%\end{proof}

With the operations and constructions above, stopping strategies inherit the structure of a bicategory $\SStrat$ from strategies; the objects are restricted to race-free games in order to have the above simple form of stopping configurations for copycat. 
  
\subsection{Bare strategies and stopping strategies}
We turn to relations between bare strategies and stopping strategies.  Recall from Definition~\ref{def:stratwstpping}
the operation $$\St: \sig\mapsto (\vis{\sig}, \Stop(\sig))$$  which takes a bare strategy $\sig$ to a stopping strategy.  It preserves composition:

\begin{lemma}\label{lem:Stprescomp}
Let $\sig:S\to A^\perp\vvbar  M \vvbar B$  and $\tau:T\to B^\perp\vvbar N\vvbar C$ be bare strategies.  Then, 
$$
\St(\tau \sncirc \sig) = \St(\tau) \scirc \St(\sig)\,.
$$
\end{lemma}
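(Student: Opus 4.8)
The plan is to verify the two components of the stopping strategies on each side separately. Writing $\St(\tau \sncirc \sig) = (\vis{(\tau \sncirc \sig)},\, \Stop(\tau \sncirc \sig))$ and $\St(\tau)\scirc\St(\sig) = (\vis{\tau} \scirc \vis{\sig},\ \Stop(\tau)\scirc\Stop(\sig))$, I must show that the underlying strategies agree and that the two sets of stopping configurations coincide. The equality of underlying strategies, $\vis{(\tau\sncirc\sig)} = \vis{\tau} \scirc \vis{\sig}$ (a map into $A^\perp\vvbar C$), is precisely the first assertion of Lemma~\ref{lem:visprescomp}, so no further argument is needed there; all the content lies in matching the stopping configurations, and the identification of their ambient domains is supplied by that same equality.

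For the stopping configurations I would work through the canonical decomposition of configurations of the interaction. Every $z \in \iconf{T\sncirc S}$ is of the form $z = y \sncirc x$ for \emph{unique} $x \in \iconf S$ and $y \in \iconf T$. By definition
$$\Stop(\tau\sncirc\sig) = \set{\vis{z}}{z \in \iconf{T\sncirc S} \text{ is } +\text{-maximal}}\,,$$
and by Lemma~\ref{lem:maxlforpstrats} such a $z = y\sncirc x$ is $+$-maximal exactly when $x$ is $+$-maximal in $\iconf S$ and $y$ is $+$-maximal in $\iconf T$. Invoking the second assertion of Lemma~\ref{lem:visprescomp}, namely $\vis{(y\sncirc x)} = \vis{y} \scirc \vis{x}$ with one side defined precisely when the other is, I can rewrite
$$\Stop(\tau\sncirc\sig) = \set{\vis{y} \scirc \vis{x}}{x,\, y \text{ are } +\text{-maximal and } \vis{y} \scirc \vis{x} \text{ is defined}}\,.$$

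On the other side, unfolding the definition of the composition of stopping configurations gives $\Stop(\tau)\scirc\Stop(\sig) = \set{y' \scirc x'}{x' \in \Stop(\sig),\ y' \in \Stop(\tau)}$; and since $\Stop(\sig) = \set{\vis{x}}{x \text{ is } +\text{-maximal in } \iconf S}$ and similarly for $\Stop(\tau)$, this set is exactly $\set{\vis{y} \scirc \vis{x}}{x,\, y \text{ are } +\text{-maximal and } \vis{y} \scirc \vis{x} \text{ is defined}}$. Hence the two descriptions coincide. I expect the only delicate point to be the bookkeeping of \emph{definedness}: one must use the ``defined on one side iff defined on the other'' clause of Lemma~\ref{lem:visprescomp} to guarantee that $y\sncirc x$ lies in $\iconf{T\sncirc S}$ precisely when $\vis{y} \scirc \vis{x}$ is a legitimate configuration of the composite, so that the index sets of the two families agree exactly and neither side contains a stray element. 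This aside, the argument is a direct unwinding of the definitions of $\St$, $\Stop$ and the composition of stopping strategies, resting entirely on the two previously established lemmas.
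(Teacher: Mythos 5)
Your proposal is correct and takes essentially the same route as the paper's own proof: reduce via Lemma~\ref{lem:visprescomp} to the equality $\Stop(\tau\sncirc\sig)=\Stop(\tau)\scirc\Stop(\sig)$, characterise the $+$-maximal configurations $y\sncirc x$ of the interaction via Lemma~\ref{lem:maxlforpstrats}, and conclude from $(y\sncirc x)_\downarrow = y_\downarrow \scirc x_\downarrow$ with its definedness clause. Your explicit attention to the definedness bookkeeping is a point the paper leaves implicit in the ``one side defined if the other is'' clause of Lemma~\ref{lem:interactdefpart}, but the argument is the same.
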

\begin{proof}  
%Observe that 
%$\St(\tau \sncirc \sig) = (\vis{(\tau \sncirc \sig)},  \Stop(\tau \sncirc \sig))$
%and 
%$\St(\tau) \scirc \St(\sig) = (\vis{\tau} \scirc \vis{\sig},  \Stop(\tau) \scirc \Stop(\sig))$.
%
By Lemma~\ref{lem:visprescomp}, 
%$\vis{(\tau \sncirc \sig)} = \vis{\tau} \scirc \vis{\sig}$. It thus 
it suffices to show 
$$
\Stop(\tau \sncirc \sig) = \Stop(\tau) \scirc \Stop(\sig)\,.
$$
Configurations of $\Stop(\tau \sncirc \sig)$ are of the form $(y\sncirc x)_\downarrow$ where $y\sncirc x$ is +-maximal for
$x\in\iconf S$ and $y\in \iconf T$.  By Lemma~\ref{lem:maxlforpstrats}
these coincide with configurations $(y\sncirc x)_\downarrow$ for which both $x$ and $y$ are +-maximal.  
Configurations of $\Stop(\tau) \scirc \Stop(\sig)$ take the form $y_\downarrow \scirc x_\downarrow$ where $x\in\iconf S$  and   $y\in \iconf T$ are  +-maximal.  But by Lemma~\ref{lem:interactdefpart}, 
$(y\sncirc x)_\downarrow= y_\downarrow \scirc x_\downarrow$, %making  $\Stop(\tau \sncirc \sig) = \Stop(\tau) \scirc \Stop(\sig)$.
ensuring the desired equality.
\end{proof}

We have seen that there is a functor $\vis{(\_)}: \NStrat(A,N,B)\to \Strat(A,B)$  from bare strategies to their visible part. However, 
it is not the case that $\St$ is a functor from bare strategies $\NStrat(A,N,B)$ to stopping strategies $\SStrat(A,B)$.  Given an arbitrary  2-cell  $f: \sig\Rightarrow \sig'$ between bare strategies $\vis{f}$ can fail to preserve %, or indeed reflect, 
stopping configurations.  However:

\begin{prop} Let $\sig:S\to A^\perp\vvbar  N\vvbar B$ and $\sig':S'\to A^\perp\vvbar N\vvbar B$ be bare strategies.  Say a 2-cell  $f: \sig\Rightarrow \sig'$ is {\em +-reflecting} iff, 
for $x\in\conf S$, $y\in\conf{S'}$, 
$$
fx\subseteq^+ y \implies \exists x'\in\conf S.\ x\subseteq x' \ \&\ fx' = y\,.
$$
Let  $\NStrat^+(A,N,B)$ be the subcategory where 2-cells are +-reflecting. Then
$$
\St: \NStrat^+(A,N,B) \to \SStrat(A,B)
$$
taking  $f: \sig\Rightarrow \sig'$ to $\vis{f}: \St(\sig)\Rightarrow \St(\sig')$ is a functor.  
%The isomorphism  $\St(\tau\sncirc \sig) \iso \St(\tau)\scirc \St(\sig)$ of Lemma~\ref{lem:Stprescomp} is natural in bare strategies $\sig\in  \NStrat^+(A,M,B)$ and $\tau\in \NStrat^+(B,N,C)$.  
\end{prop}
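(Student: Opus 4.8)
The plan is to check the three ingredients of functoriality in turn, isolating the genuine content in the action on $2$-cells. On objects there is nothing to do: $\St(\sig) = (\vis\sig, \Stop(\sig))$ is already known to be a stopping strategy. For identities and composition I would lean on the fact that $\vis{(\_)}$ is already a functor $\NStrat(A,N,B) \to \Strat(A,B)$: since $\St$ acts on a $2$-cell $f$ by $f \mapsto \vis f$, it will automatically preserve identities and composites provided the subcategory $\NStrat^+$ is itself well defined, i.e.\ provided +-reflecting $2$-cells are closed under identity and composition. The identity is trivially +-reflecting (take $x' = y$). For composition, given +-reflecting $f\colon\sig\Rightarrow\sig'$ and $g\colon\sig'\Rightarrow\sig''$ and a hypothesis $g(fx)\subseteq^+ y$, I would first apply +-reflectingness of $g$ to obtain $z$ with $fx\subseteq z$ and $gz = y$; because $g$ preserves polarity and is injective on the configuration $z$, the new events $z\setdif fx$ carry the same polarities as $gz\setdif g(fx) = y\setdif g(fx)$, so $fx\subseteq^+ z$, and then +-reflectingness of $f$ lifts this to $x\subseteq x'$ with $fx' = z$, whence $g(fx') = y$. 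Thus $g\circ f$ is +-reflecting.

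The real work is to show that $\St$ is well defined on morphisms, namely that $\vis f\colon\vis\sig\Rightarrow\vis{\sig'}$ --- already a $2$-cell of strategies --- satisfies $\vis f\,\Stop(\sig)\subseteq\Stop(\sig')$. A typical stopping configuration is $\vis x$ with $x\in\iconf S$ that is +-maximal, and $\vis f\,\vis x = \vis{(fx)}$, so it suffices to exhibit $\vis{(fx)}$ as the visible image of a +-maximal configuration of $S'$. I claim more: that $fx$ is itself +-maximal in $\iconf{S'}$, which gives $\vis{(fx)}\in\Stop(\sig')$ immediately. To see this, suppose $fx\subseteq^+ w$ for some $w\in\iconf{S'}$. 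Using +-reflectingness (extended to possibly infinite configurations --- see below) I would lift this to $x\subseteq x'$ in $\iconf S$ with $fx' = w$. Since $f$ preserves polarity and is injective on the configuration $x'$, the added events $x'\setdif x$ carry the same polarities as $fx'\setdif fx = w\setdif fx$, which are neutral or Player; hence $x\subseteq^+ x'$. The +-maximality of $x$ then forces $x = x'$, so $w = fx' = fx$, proving $fx$ +-maximal.

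The step I expect to be the main obstacle is precisely the lifting used above, i.e.\ upgrading the +-reflecting property from the finite configurations in its statement to the possibly infinite configurations $x$ and $w$ demanded by stopping configurations. The natural route is a directed-colimit argument: write $w$ as the union of its finite sub-configurations, repeatedly lift finite $\subseteq^+$-increments of $fx$ along $f$ using the stated finitary property, and take the union of the resulting chain in $\iconf S$. The delicate point is that these finite lifts must cohere into a single configuration: one must check that the lifted events are consistent with all of $x$, so that their union is again a configuration mapping onto $w$. This is exactly where +-reflectingness does real work --- it is what rules out a map merging a conflict in $S$ into a concurrency in $S'$ (for which $fx$ could indeed fail to be +-maximal) --- and the coincidence-freeness and completeness axioms of the underlying event structures are what make the colimit go through. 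Once this infinite form of +-reflecting is in hand, the polarity and +-maximality argument above closes the proof, and the functor laws follow as in the first paragraph.
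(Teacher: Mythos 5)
Your overall decomposition is the right one, and the paper itself states this proposition without proof, so the comparison is against what the statement requires. Your first paragraph (closure of +-reflecting $2$-cells under identities and composition via the polarity-preservation and local-injectivity bookkeeping, with the functor laws inherited from $\vis{(\_)}$) is correct, as is the reduction of well-definedness to showing that $fx$ is +-maximal whenever $x$ is. But the step you flag as ``the main obstacle'' is not a technicality to be discharged by a directed-colimit argument: with +-reflecting stated, as in the proposition, only for \emph{finite} configurations $x\in\conf S$, $y\in\conf{S'}$, the infinite lifting property is simply false, and with it the +-maximality transfer. Take the game $A$ with Player moves $a_1,a_2,\dots$ and $e$, trivial causal order, all finite sets consistent; let $S'=A$ with $\sig'=\id_A$, and let $S$ have Player events $a_n,a'_n,s_n$ ($n\geq 1$), trivial order, binary conflicts $a_n\hash a'_n$, $s_n\hash s_m$ ($n\neq m$), $s_n\hash a_n$, with $f(a_n)=f(a'_n)=a_n$, $f(s_n)=e$ and $\sig=\sig'\circ f$. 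Receptivity and innocence are vacuous, so both maps are strategies and $f$ is a $2$-cell; and $f$ is +-reflecting on finite configurations: given finite $x_0$ and finite $y\supseteq fx_0$, keep $x_0$, lift each new $a_j$ to $a_j$ unless $s_j\in x_0$ (then lift to $a'_j$), and lift $e$, if needed afresh, to some $s_k$ with $k$ outside the finite index set of $y$. Yet $x=\setof{a_n}{n\geq1}$ is +-maximal in $S$ (every $a'_n$ and $s_n$ conflicts with $a_n$), hence $x\in\Stop(\sig)$, while $fx$ is not +-maximal in $S'$ since $fx\cup\setof e$ is a configuration; as $\sig'$ has no neutral events, $\Stop(\sig')$ consists exactly of the +-maximal configurations of $S'$, so $\vis f\,\Stop(\sig)\not\subseteq\Stop(\sig')$. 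Each finite stage can always dodge the conflict by re-choosing versions, but no single preimage of $e$ is consistent with all of $x$: the coherence failure you worried about is a genuine obstruction, not something rescued by ``coincidence-freeness and completeness''---those are axioms of the stable families used for pullbacks, not of the event structures here, and they are irrelevant to this lifting problem.

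The instructive contrast is with receptivity, whose finite form does extend to arbitrary configurations because Opponent lifts are \emph{unique}, so the finite lifts cohere automatically; +-reflecting lifts are not unique and need not cohere, which is exactly what the example above exploits. Consequently your proof (and the proposition read literally over $\conf{}$) cannot be repaired along the colimit route; the statement must be read with +-reflecting quantified over arbitrary, possibly infinite, configurations $x\in\iconf S$, $y\in\iconf{S'}$. Under that reading no limit argument is needed at all: your polarity argument applies verbatim---lift $fx\subseteq^+ w$ to $x\subseteq x'$ with $fx'=w$, observe $x\subseteq^+ x'$ since $f$ is polarity-preserving and injective on $x'$, and conclude $x=x'$ by +-maximality---and together with your first paragraph this completes the proof.
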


\section{`May' and `Must' testing}

We can  rephrase `may' and `must' testing in terms of stopping strategies.  

 \begin{definition}{\rm  Let $(\sig, M_S)$ be a stopping strategy in a game $A$.  Let $\tau:T\to  A^\perp\vvbar N\vvbar \plmove$ be a `test' bare strategy from $A$ to a the game consisting of a single Player move $\plmove$. Write $St(\tau)$ as $(\tau_0, M_0)$ where 
$\tau_0:T_0\to A\vvbar \plmove$ is the visible part of $\tau$ and $M_0$ are its stopping configurations, obtained as images of the $+$-maximal configurations of $T$. Write $\tick\eqdef (2, \plmove)$.   

Say $(\sig, M_S)$ {\em may pass}  $\tau$ iff there exists  $y\sncirc x\in \iconf{T_0\sncirc S}$, where $x\in\iconf S$ and $y\in\iconf{T_0}$,  with the image $\tau_0  y$ containing $\tick$. (Note again, we may w.l.o.g.~assume that the configurations  $x$  and $y$ are finite.) 

Say $(\sig, M_S)$ {\em must pass}  $\tau$ iff  for all $y\sncirc x\in M_0\sncirc M_S$, where $x\in\iconf S$ and $y\in\iconf{T_0}$,  the image  $\tau_0 y$ contains $\tick$. %(We write $M_T$ for the +-maximal configurations of $T$.) 

Say two stopping strategies  are {\em `may'}, respectively {\em `must'}, {\em equivalent} iff the  tests they may, respectively must, pass are the same.   
%***A variation where tests are  strats with stopping configs
}\end{definition}

\begin{prop} With the notation above,

$(\sig, M_S)$ {\em may pass}  $\tau$ iff  there exists  $y\scirc x\in \iconf{T_0\scirc S}$, where $x\in\iconf S$ and $y\in\iconf{T_0}$,  with the image $\tau_0 y$ containing $\tick$ ---the configurations $x$, $y$ may be assumed finite; and 

$(\sig, M_S)$ {\em must pass}  $\tau$ iff  for all  $y\scirc x\in M_0\scirc M_S$, where $x\in M_S$ and $y\in M_0$,  the image  $\tau_0\ y$ contains $\tick$.  
\end{prop}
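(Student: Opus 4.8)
The plan is to derive both equivalences directly from the defining identity $y\scirc x = (y\sncirc x)_\downarrow$ for configurations of a composition, together with Lemma~\ref{lem:interactdefpart}. Since $\sig$ and the visible part $\tau_0$ are strategies, neither $S$ nor $T_0$ carries neutral events, so the hiding projection acts as the identity on their own configurations; thus $x_\downarrow = x$ and $y_\downarrow = y$, and Lemma~\ref{lem:interactdefpart} specialises to say that $y\sncirc x$ is defined in $\iconf{T_0\sncirc S}$ if and only if $y\scirc x = (y\sncirc x)_\downarrow$ is defined in $\iconf{T_0\scirc S}$, with the \emph{same} index pair $(x,y)$. The only neutral events hidden in passing from $T_0\sncirc S$ to $T_0\scirc S$ are the now-internal events over the common game $A$; in particular $\tick$ lies over the visible game $\plmove$ and survives hiding. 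Hence the success predicate $\tick\in\tau_0 y$ refers only to $y\in\iconf{T_0}$ and takes the same value in both formulations.

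For the `may' clause I would argue as follows. If $(\sig,M_S)$ may pass $\tau$ in the sense of the preceding definition, witnessed by $y\sncirc x$ with $\tick\in\tau_0 y$, then $y\scirc x=(y\sncirc x)_\downarrow$ is a configuration of $T_0\scirc S$ with the same second component $y$, so the proposition's condition holds. Conversely, given $y\scirc x\in\iconf{T_0\scirc S}$ with $\tick\in\tau_0 y$, Lemma~\ref{lem:interactdefpart} supplies the corresponding $y\sncirc x\in\iconf{T_0\sncirc S}$, again with the same $y$. The finiteness caveat is handled by restricting to the down-closure in $T_0\sncirc S$ of an event whose image is $\tick$: by finitariness this is a finite configuration $y'\sncirc x'$ with $y'\subseteq y$, $x'\subseteq x$ and $\tick\in\tau_0 y'$, whose visible image $y'\scirc x'$ is then finite.

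For the `must' clause I would use the explicit descriptions $M_0\sncirc M_S=\set{y\sncirc x}{x\in M_S, y\in M_0}$ and $M_0\scirc M_S=\set{y\scirc x}{x\in M_S, y\in M_0}$, together with $y\scirc x=(y\sncirc x)_\downarrow$. By the observation above, as $(x,y)$ ranges over pairs with $x\in M_S$ and $y\in M_0$, the configuration $y\sncirc x$ is defined exactly when $y\scirc x$ is, so the two families are related by hiding with matching index pairs. Since the quantified predicate $\tick\in\tau_0 y$ depends only on the shared component $y$, the statements ``for all $y\sncirc x\in M_0\sncirc M_S$, $\tick\in\tau_0 y$'' and ``for all $y\scirc x\in M_0\scirc M_S$, $\tick\in\tau_0 y$'' are literally the same assertion about pairs $(x,y)$, yielding the equivalence.

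I expect the only points requiring care --- the ``main obstacle'', such as it is --- to be bookkeeping rather than mathematical depth: first, noting that $\tick$ is insensitive to hiding, since it lies over the visible move $\plmove$ rather than among the interaction-internal neutral events over $A$, so that $\tau_0 y$ is unchanged; and second, that each interaction configuration decomposes uniquely as $y\sncirc x$, so the correspondence with compositions $y\scirc x$ respects index pairs and the quantifiers transfer cleanly. The finiteness reduction in the `may' case is the only genuinely additional ingredient, and it is routine given the finitariness of event structures.
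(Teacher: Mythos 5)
Your proof is correct and matches the paper's (implicit) justification: the paper states this proposition without proof, treating it as immediate from the identity $y\scirc x=(y\sncirc x)_\downarrow$ of Lemma~\ref{lem:interactdefpart}, the unique decomposition of interaction configurations as $y\sncirc x$, and the fact that the success predicate $\tick\in\tau_0 y$ depends only on the shared component $y$ --- which is precisely the bookkeeping you carry out. Your finiteness reduction via the down-closure of an event over $\tick$ is also sound and is the same routine observation the paper alludes to in the definition.
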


\begin{lemma} Let $A$ be a race-free game. Let $\sig$ be a bare strategy in $A$.  %Let $\tau$ be a test on $A^\perp\vvbar \plmove$.
Then,

$\sig$ {\em may pass a test}  $\tau$ iff $\St(\sig)$   {\em may pass}  $\tau$; 

$\sig$ {\em must pass a test}  $\tau$ iff $\St(\sig)$   {\em must pass}  $\tau$.
\end{lemma}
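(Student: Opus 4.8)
The plan is to transport both testing notions across the hiding (partial--total) factorisation, resting on one elementary observation: the success event $\tick$ lies in the $\plmove$-component and is a visible Player move, never a neutral event, so hiding neither creates nor destroys it. Concretely, for every $y\in\iconf T$ the configuration $\tau_0(\vis y)$ is obtained from $\tau y$ by deleting only its $N$-component, whence $\tick\in\tau y$ iff $\tick\in\tau_0(\vis y)$. With this in hand the whole statement reduces to matching the configurations quantified over on the two sides, and the workhorse is Lemma~\ref{lem:visprescomp}, which identifies $(y\sncirc x)_\downarrow$ with $\vis y\scirc\vis x$ and asserts that either is defined exactly when the other is; note in particular $(T\sncirc S)_\downarrow=T_0\scirc\vis S$. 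I would first rephrase stopping-strategy testing in terms of the composition $\scirc$ rather than the interaction $\sncirc$ (the preceding Proposition), so that the stopping side speaks of configurations of $T_0\scirc\vis S$.

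For the `may' clause I would argue thus. If $\sig$ may pass $\tau$ there is $y\sncirc x\in\iconf{T\sncirc S}$ with $\tick\in\tau y$; then $\vis y\scirc\vis x=(y\sncirc x)_\downarrow\in\iconf{T_0\scirc\vis S}$ is defined and $\tick\in\tau_0\vis y$, so $\St(\sig)$ may pass $\tau$. Conversely, given a witness $y'\scirc x'\in\iconf{T_0\scirc\vis S}$ with $\tick\in\tau_0 y'$, I pick lifts $x\in\iconf S$, $y\in\iconf T$ with $\vis x=x'$ and $\vis y=y'$ (hiding projections are surjective on configurations); since $\vis y\scirc\vis x=y'\scirc x'$ is defined, Lemma~\ref{lem:visprescomp} makes $y\sncirc x$ defined, and $\tick\in\tau_0 y'=\tau_0\vis y$ forces $\tick\in\tau y$, so $\sig$ may pass $\tau$. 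All witnessing configurations may be taken finite, as in the definitions.

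For the `must' clause the extra ingredient is the definition of the stopping configurations: $M_S=\Stop(\sig)$ is the set of visible images $\vis{x^*}$ of the $\subseteq^+$-maximal $x^*\in\iconf S$, and likewise $M_0=\Stop(\tau)$. Suppose $\sig$ must pass $\tau$ and take $x\in M_S$, $y\in M_0$ with $y\scirc x$ defined; I choose $\subseteq^+$-maximal lifts $x^*,y^*$ with $\vis{x^*}=x$, $\vis{y^*}=y$ (these exist by the very definition of $\Stop$), so $y^*\sncirc x^*$ is defined by Lemma~\ref{lem:visprescomp}; the hypothesis gives $\tick\in\tau y^*$, hence $\tick\in\tau_0 y$. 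The converse is symmetric: any defined $y^*\sncirc x^*$ with $x^*,y^*$ both $\subseteq^+$-maximal projects to a defined $\vis{y^*}\scirc\vis{x^*}$ with $\vis{x^*}\in M_S$ and $\vis{y^*}\in M_0$, and visibility of $\tick$ carries the success condition back up. It is individual $\subseteq^+$-maximality that is matched here, directly through the definition of $\Stop$, so Lemma~\ref{lem:maxlforpstrats} is not actually needed for the matching.

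The delicate point is the backward (lifting) step in each clause, where one must realise a downstairs configuration by an interaction configuration upstairs (respectively by $\subseteq^+$-maximal lifts). What makes this routine rather than subtle is precisely the biconditional ``one side defined iff the other'' in Lemma~\ref{lem:visprescomp}: it guarantees that an \emph{arbitrary} lift of the chosen components already has the needed definedness, so no careful coordinated selection of lifts is required; for `must' the existence of $\subseteq^+$-maximal lifts is moreover built into the definition of $\Stop$. The only remaining verifications---surjectivity of the hiding projections on configurations and the preservation of the $\plmove$-component under hiding---are immediate from the partial--total factorisation.
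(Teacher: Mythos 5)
Your proof is correct and is essentially the paper's own argument: the paper's proof is literally the one-liner ``directly from the definitions'', and your write-up is a faithful unfolding of it, with Lemma~\ref{lem:visprescomp} carrying the definedness transfer across hiding (including the legitimacy of \emph{arbitrary} lifts in the `may' case, and surjectivity of hiding on configurations via down-closure), and the definition of $\Stop$ supplying the $\subseteq^+$-maximal lifts in the `must' case.

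The one point where you diverge from the paper is your closing claim that Lemma~\ref{lem:maxlforpstrats} is not needed, whereas the paper cites it for the `if' of the `must' case. Under the paper's literal phrasing of the bare `must' test---quantification over $y\sncirc x$ with $x$ and $y$ \emph{individually} $\subseteq^+$-maximal---your observation is right: the matching with $M_0\sncirc M_S$ is componentwise, through the definition of $\Stop$, and no further lemma intervenes. Lemma~\ref{lem:maxlforpstrats} becomes necessary exactly when one reads the bare `must' test semantically, as quantifying over the $+$-maximal configurations of the interaction $T\sncirc S$ (the maximal runs at which the interaction is stuck up to Opponent moves); the lemma is what identifies those with the pairs $y\sncirc x$ whose components are both $+$-maximal, and hence what justifies the componentwise phrasing of the definition in the first place---it likewise underlies Lemma~\ref{lem:Stprescomp}, which packages your entire `must' matching as $\St(\tau\sncirc\sig)=\St(\tau)\scirc\St(\sig)$ and gives an alternative, equally short route to the statement. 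So your proof is sound as it stands; a one-line remark acknowledging this equivalence of the two readings would reconcile it with the paper's citation.
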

\begin{proof}
Directly from the definitions, for the  `if' of the `must' case, using Lemma~\ref{lem:maxlforpstrats}. 
\end{proof}

\begin{example}{\rm It is tempting to think of neutral events as behaving like the internal ``tau''  events of CCS~\cite{milner}.  However,  in the context of concurrent strategies, 
because of their asynchronous nature,  they behave rather differently. 
Consider 
   three bare strategies, over a game comprising of just two concurrent $+$ve events, say $a$ and $b$.  The bare strategies have  the following event structures  in which we have named events by the moves they correspond to in the game:
 
$\xymatrix@R=12pt@C=8pt{
 S_1&  a\ar@{~}[d]\\
&b}
$ 
\qquad\quad
$\xymatrix@R=12pt@C=12pt{
 S_2&  \nemove\ar@{~}[d]\ar@{|>}[r] &a\\
&\nemove\ar@{|>}[r] &b}
$ 
\qquad\quad
$\xymatrix@R=12pt@C=12pt{
 S_3&  \nemove\ar@{~}[d]\ar@{|>}[r] &a\\
&b &}
$ 
  
\noindent
No pair would be weakly bisimilar due to the presence of pre-emptive internal events~\cite{milner}.  However, 
all three become isomorphic under  $\St$ so are `may' and `must' equivalent to each other. }\endex
\end{example}

2-cells between stopping strategies  respect `may' and `must' behaviour in the sense of the  following lemma.

\begin{lemma}\label{lem:2-cellsmaymust}
Let $f: (\sig, M_S) \Rightarrow (\sig', M_{S'})$ be a 2-cell between stopping strategies. Then for any  test $\tau$,

$(\sig, M_S)$ may pass $\tau$ implies $(\sig', M_{S'})$ may pass $\tau$; and  

$(\sig', M_{S'})$ must pass $\tau$ implies $(\sig, M_S)$ must pass $\tau$. 

Moreover, if $f$ is a rigid epi and $fM_S = M_{S'}$, then $ (\sig, M_S)$ and $(\sig', M_{S'})$ are both `may' and `must' equivalent.
\end{lemma}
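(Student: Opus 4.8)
The plan is to reduce everything to the interaction-level formulation of `may' and `must' testing for stopping strategies and to transport test witnesses along $f$. The guiding observation is that whether a configuration $y\sncirc x$ of $T_0\sncirc S$ registers success --- that is, whether $\tick\in\tau_0 y$ --- depends only on the test-side component $y$, never on $x$. Now the 2-cell $f:\sig\Rightarrow\sig'$ of strategies induces, by the universality of pullback (exactly as for the 2-cell $g\sncirc f$ described above, taking $g=\id_{\tau_0}$), a 2-cell $\id_{\tau_0}\sncirc f:\tau_0\sncirc\sig\Rightarrow\tau_0\sncirc\sig'$ acting by $y\sncirc x\mapsto y\sncirc fx$. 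As this is a total map, whenever $y\sncirc x$ is defined so is $y\sncirc fx$, and the two share the test-side component $y$. The `may' implication is then immediate: a witness $y\sncirc x\in\iconf{T_0\sncirc S}$ for $(\sig,M_S)$ with $\tick\in\tau_0 y$ maps to a witness $y\sncirc fx\in\iconf{T_0\sncirc S'}$ for $(\sig',M_{S'})$ with the same $y$. For the `must' implication, take any $y\sncirc x\in M_0\sncirc M_S$ with $x\in M_S$ and $y\in M_0$; since $fM_S\subseteq M_{S'}$ we have $fx\in M_{S'}$, so $y\sncirc fx\in M_0\sncirc M_{S'}$, and as $(\sig',M_{S'})$ must pass $\tau$ we get $\tick\in\tau_0 y$, which is exactly what is required of $(\sig,M_S)$.

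For the `moreover' I would supply the two reverse implications using the extra hypotheses. Being a rigid epi, $f$ is surjective on configurations, so every $x'\in\iconf{S'}$ is $fx$ for some $x\in\iconf S$; together with $fM_S=M_{S'}$ this also lifts every stopping configuration of $S'$ to one of $S$. The second ingredient is a defined-ness transfer in the direction opposite to the induced 2-cell: if $y\sncirc fx$ is defined then so is $y\sncirc x$. This is where rigidity is essential. The images agree, $\sig x=\sig'(fx)$, so $x$ and $y$ already match over the common game; and since $f$ is rigid and locally injective, the securing order on the bijection underlying $y\sncirc x$ embeds into that underlying the secured bijection $y\sncirc fx$, so acyclicity and finitariness of the latter are reflected back, making $y\sncirc x$ a configuration. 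Granting these, the reverse `may' direction lifts a witness $y\sncirc x'$ for $(\sig',M_{S'})$ to $y\sncirc x$ with $fx=x'$, a witness for $(\sig,M_S)$ with the same $y$; and the reverse `must' direction, aiming to show $(\sig',M_{S'})$ must pass, lifts any $y\sncirc x'\in M_0\sncirc M_{S'}$ to $y\sncirc x\in M_0\sncirc M_S$ with $x\in M_S$, whereupon $(\sig,M_S)$ must pass forces $\tick\in\tau_0 y$. Combining the four implications yields both equivalences.

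The routine part is the bookkeeping that the test-side component is preserved under $f$ and its lifts; the genuine content, and the step I expect to be the main obstacle, is the defined-ness transfer under rigid lifting --- checking that securedness of the matched bijection is reflected by a rigid map --- together with the surjectivity of rigid epis on configurations. I would isolate the securedness point as a small lemma, showing that a rigid (hence order-preserving, locally injective) map reflects the finitary partial order generated on a secured bijection, so that an acyclic finitary securing relation downstairs follows from one upstairs; and I would either cite or prove separately that a rigid epi is onto configurations, lifting a configuration along the rigid surjection. Once these two facts are in place, the remainder is a direct transport of witnesses, in which the invariance of $\tau_0 y$ under changing the strategy-side component does all the remaining work.
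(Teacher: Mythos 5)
Your proposal is correct and takes essentially the same route as the paper: the paper likewise transports a witnessing secured bijection $y\sncirc x$ forward to $y\sncirc fx$ (since $fx$ carries no more causal dependency than $x$), uses $fM_S\subseteq M_{S'}$ for the `must' reflection, and for the `moreover' invokes rigidity of $f$ to reflect securedness back from $y\sncirc fx$ to $y\sncirc x$, together with $fM_S=M_{S'}$ and epi-ness to lift witnesses. One caveat: your blanket claim that a rigid epi is surjective on \emph{all} configurations is false for infinite ones---the paper's Example~\ref{ex:rigim} is exactly a counterexample---but this is harmless here, since `may' witnesses may w.l.o.g.\ be taken finite (where the lifting does hold) and the `must' lifting of stopping configurations is supplied directly by the hypothesis $fM_S=M_{S'}$ rather than by surjectivity.
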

 \begin{proof} 
In this proof, we shall identify a test with its image $(\tau, M_T)$ under $\St$, as a strategy  $\tau:T\to A^\perp\vvbar \plmove$   with stopping configurations   $M_T$. 

Let $f: (\sig, M_S) \Rightarrow (\sig', M_{S'})$ be a 2-cell.  Assume $\sig:S\to A$ and $\sig':S'\to A$.

Suppose $(\sig, M_S)$ may pass $(\tau, M_T)$.   Then there is a (finite) configuration which we write
$y \sncirc x$ of $T\sncirc S$, built as a secured bijection out of $y\in\conf T$ and $x\in\conf S$,
whose image in the game contains $\tick$.   The secured bijection built out of $y$ and $x$ induces a secured bijection built  out of $y$ and $fx$; this is because $fx$ has no more causal dependency than $x$ with which it is in bijection. This  determines a configuration $y\sncirc fx$, with image containing $\tick$. %, out of $y\in\conf T$ and $f x\in \conf S$.  

Suppose $(\sig', M_{S'})$ must pass $(\tau, M_T)$.  Any $y\sncirc x\in M_T\sncirc M_S$ images under $\tau\sncirc f$ to $y\sncirc fx\in M_T\sncirc M_{S'}$.  As  $(\sig', M_{S'})$ must pass $\tau$, the configuration $y\sncirc fx$ has image containing $\tick$, ensuring that $y\sncirc x$ does too.   

Finally suppose that $f$ is rigid epi and $fM_S = M_{S'}$.  We have just shown that $f$ preserves the passing of `may' tests and reflects the passing of `must' tests.  Because $f$ is rigid epi it also reflects the passing of `may' tests.  Because $f$ is rigid and $fM_S = M_{S'}$ it preserves the passing  of `must' tests: any secured bijection $y\sncirc fx$ in $M_T\sncirc M_{S'}$ ensures by the rigidity of $f$ a secured bijection $y\sncirc x$ in $M_T\sncirc M_S$; as $(\sig, M_S)$ must pass $\tau$ we have the image in the game of $y\sncirc x$  contains $\tick$ ensuring the image of  $y\sncirc fx$ does too. 
\end{proof}

Tests based on bare strategies are more discriminating than tests based on (pure) strategies:

\begin{example}\label{ex:ptestsmoredescrim}{\rm   Let a game comprise a single Player move.  Consider two stopping strategies:

$\sig_1$, the empty strategy with the empty configuration $\emptyset$  as its single stopping configuration;

$\sig_2$, the strategy performing the single Player move $\plmove$ with stopping configurations  $\emptyset$ and $\setof\plmove$. (We can easily realise this stopping strategy via $\St$ from a bare strategy with event structure  $\xymatrix@R=2pt@C=12pt{\plmove\ar@{~}[r]&\nemove}$.)

By Lemma~\ref{lem:2-cellsmaymust}, we have
$(\sig_2, 
\setof{\emptyset, \setof\plmove})$ must pass $\tau$ implies $(\sig_1, \setof{\emptyset})$ must pass $\tau$,
for any test $\tau$.  (The above would not hold if we had not included $\emptyset$ in the stopping configurations of $\sig_2$.%a test $\opmove \imc \tick$ wd then be satis by the new \sig_2 but not by \sig_1.
) 

Using the fact that we need only consider rigid images of tests---shown later in Section~\ref{sec:axioms}, a little argument by cases establishes the converse implication too, provided we restrict just to tests which are strategies.  
The stopping strategies would be must equivalent w.r.t.~tests based just on strategies.  

However with tests based on bare strategies we can distinguish them. Consider the test $\tau$ comprising three events, one of them neutral, with only nontrivial causal dependency
$\opmove \imc\nemove$ and 
$\nemove$ in conflict with the `tick' event $\plmove$.  Then, it is not the case that $(\sig_2, 
\setof{\emptyset, \setof\plmove})$ must pass $\tau$ ---the occurrence of the neutral event blocks success in a maximal execution---while $(\sig_1, \setof{\emptyset})$ must pass $\tau$. Notice how
the presence of the neutral event in the test turns the possibility that $\sig_2$ can perform the Player move into a possibility of its failing the test.
}\endex\end{example}

\section{`May' and `Must' behaviour characterised}

\subsection{Preliminaries, traces of a strategy}

Let $S$ be an event structure.  
A possibly infinite sequence
$$
s_1, s_2, \cdots, s_n, \cdots 
$$ 
in $S$ constitutes a {\em serialisation}  of a configuration $x\in \iconf S$ if $x=\setof{s_1, s_2, \cdots, s_n, \cdots}$ and $\setof{s_1, \cdots, s_i}\in\conf S$ 
for all $i$ at which the sequence is defined. We will often identify such a countable enumeration of a set with its associated  total order.  Note that in this way we can regard a serialisation as  an elementary event structure in which causal dependency takes the form of a total order; a serialisation of a configuration is associated with a map to $S$ whose image is the configuration.

Let $\sig:S\to A$ be  a strategy in a game $A$.  A {\em trace} in $\sig$ is  a possibly infinite sequence
$$
\al =(\sig(s_1), \sig(s_2), \cdots, \sig(s_n), \cdots )
$$
of events in $A$ 
obtained from a serialisation  
$$
s_1, s_2, \cdots, s_n, \cdots 
$$ 
of a configuration $x\in\iconf S$.  Clearly $\al$ is a serialisation of  $\sig x\in\iconf A$.  
From the local injectivity of $\sig$,  the configuration $x$ will be finite/infinite according as the trace is finite/infinite.  
We say that $\al$ is a trace of the configuration $x$ in $\sig$, or that $x$ has trace $\al$ in $\sig$.  
%Of course traces determine serialisations of (necessarily countable) configurations, and {\it vice versa}.
 
\begin{prop}\label{prop:tracelem}
 Let $\sig:S\to A$ be a strategy.

\noindent
(i) Any countable configuration of $S$ has a trace.

\noindent
(ii) Let $x\in\iconf S$ and $\al$ be an  enumeration 
$$
a_1, a_2, \cdots, a_n, \cdots 
$$
of $\sig x$.  Then, 
%If the relation $\leq_S\rstd x$ is included in the total order relation of $\al$
$\al$ is a trace of $x$ 
 in $\sig$
iff 
for all $s,s'\in x$ if $s\imc s'$ then $\sig(s)$ precedes %or equals 
$\sig(s')$ in the enumeration $\al$.
\end{prop}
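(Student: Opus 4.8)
The plan is to reduce everything to the fact that, since $\sig$ is a total map of event structures, its restriction $\sig\rstd x : x \to \sig x$ to any configuration $x$ is a bijection (local injectivity). Hence an enumeration $\al = (a_1, a_2, \ldots)$ of $\sig x$ corresponds, via $s_i \eqdef (\sig\rstd x)^{-1}(a_i)$, to a unique enumeration $(s_1, s_2, \ldots)$ of $x$ with $\sig(s_i) = a_i$, and $\al$ is a trace of $x$ exactly when this $(s_i)$ is a serialisation. Moreover every finite prefix $\setof{s_1, \ldots, s_i}$ is a finite subset of the configuration $x$, so it is automatically consistent; thus ``$(s_i)$ is a serialisation'' is equivalent to ``every prefix is down-closed'', which (as $x$ is itself down-closed) is in turn equivalent to $(s_i)$ being a linear extension of the partial order $(x, \leq)$.

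For (i), I would fix a countable enumeration $e_1, e_2, \ldots$ of $x$ and build a serialisation greedily: maintaining a finite down-closed set $D$ (initially empty), at stage $n$ I append the elements of $[e_n]\setminus D$ in some order respecting $\leq$ — possible since $[e_n]$ is finite by finitariness — and update $D$. Every intermediate prefix stays down-closed, because $D$ is down-closed and the new elements come from $[e_n]$ added in causal order; every prefix is consistent as a finite subset of $x$; and each $e_n$ has been enumerated by stage $n$. Applying $\sig$ to the resulting serialisation then yields a trace of $x$.

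For (ii), the forward implication is immediate: if $\al$ is a trace then $(s_i)$ is a linear extension of $\leq$, and since $s \imc s'$ entails $s < s'$, the event $\sig(s)$ precedes $\sig(s')$ in $\al$. The substance is the converse, for which I would invoke finitariness in the form: in a finitary partial order $\leq$ is the reflexive-transitive closure of $\imc$. Given $s < s'$, the finite interval $\set{t}{s\leq t\leq s'}\subseteq [s']$ is a finite poset, in which a maximal chain $s = t_0 < \cdots < t_k = s'$ has consecutive steps that cover; covering within this poset coincides with $\imc$ in $S$, since any event strictly between two of its elements would again lie in the interval. The hypothesis then forces $\sig(t_0), \ldots, \sig(t_k)$ to appear in order in $\al$, so $\sig(s)$ precedes $\sig(s')$ whenever $s < s'$; hence $(s_i)$ is a linear extension of $\leq$, every prefix is down-closed, and $\al$ is a trace.

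I expect the one genuinely load-bearing step to be this last reduction of full causal dependency to immediate dependency: the characterisation via $\imc$ hinges on decomposing $s < s'$ into a chain of immediate dependencies, which is exactly where the finitariness axiom ($[e]$ finite) is used. Everything else is bookkeeping through the bijection $\sig\rstd x$ together with the observation that consistency of prefixes is free inside a configuration.
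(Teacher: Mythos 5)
Your proposal is correct and takes essentially the same route as the paper: part (i) is the same greedy serialisation of a countable finitary poset (your variant, flushing the finite down-closure $[e_n]$ at stage $n$, merely makes exhaustiveness immediate, where the paper's rule ``pick the earliest enumerated event whose causes are already listed'' leaves it as an elementary check), and part (ii) transfers the enumeration along the local bijection $\sig\rstd x$ exactly as the paper does. Your explicit maximal-chain argument in the finite interval $\set{t}{s\leq t\leq s'}$, decomposing $s<s'$ into $\imc$-steps via finitariness, simply spells out the step the paper compresses into the single word ``entails''.
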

\begin{proof}
(i) Let $x$ be a countable configuration of $S$ w.r.t.~the strategy $\sig:S\to A$.
This follows because there is a serialisation $x=\setof{s_1, s_2, \cdots, s_n, \cdots }$, 
in which  $\setof{s_1, \cdots, s_i}$ is down-closed in $S$ at all $i$ in the enumeration.  To see this, from its countability we may 
assume a countable enumeration of $x$, which need not be a serialisation.
Define $s_1\in x$ to be the earliest  event of the enumeration for which $[s_1)=\emptyset$ in $S$; such an $s_1$ is ensured to exist by the well-foundedness of causal dependency provided $x\neq\emptyset$.  Inductively, define $s_n$ to be the earliest  event of the enumeration which is in $x\setdif \setof{s_1, \cdots, s_{n-1}}$ and for which $[s_n)\subseteq \setof{s_1, \cdots, s_{n-1}}$; again the well-foundedness of causal dependency ensures such an $s_n$ exists provided $x\setdif \setof{s_1, \cdots, s_{n-1}}\neq \emptyset$. It is elementary to check this provides a serialisation of $x$.  \\

\noindent
(ii) {\it ``Only if'':} 
%Obvious.
 Directly from the definition of trace of a configuration. { \it ``If'':}   Via the local bijection between $x$ and $\sig x$  given by $\sig$ we obtain an enumeration 
$$
s_1, s_2, \cdots, s_n, \cdots 
$$ 
of $x$ matching $\al$ in that $\sig(s_i) = a_i$.   The assumption that $s\imc s'$ implies $\sig(s)$ precedes %or equals 
$\sig(s')$ in the enumeration $\al$, entails $\setof{s_1, \cdots, s_i}\in\conf S$ 
for all $i$.  Hence the enumeration of $x$ is a serialisation making $\al$ a trace of $x$.   
\end{proof}
 
\begin{lemma}\label{lem:disagreement-lem}
Let   $\sig:S\to A$ be a strategy in a game $A$. Let $x\in\iconf{S}$.
Let $\al$ be a serialisation  of $\sig x$
which is not a trace of $x\in\iconf{S}$.  Then, there are 
$s, s'\in x$ with
$\pol(s) = -$ and $\pol(s') = +$
and 
$s\imc_S s'$  with (note the order reversal)
$\sig(s')  \leq_{\al}\sig(s)$   in $\al$ (regarded as a total order).
\end{lemma}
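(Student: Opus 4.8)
The plan is to extract the offending immediate dependency from the failure of the trace condition, and then to pin down its two polarities using innocence. First I would invoke the contrapositive of Proposition~\ref{prop:tracelem}(ii): since $\al$ is a serialisation of $\sig x$ but is not a trace of $x$, there must exist $s, s' \in x$ with $s \imc_S s'$ for which $\sig(s)$ does \emph{not} precede $\sig(s')$ in $\al$. Because $\sig$ is locally injective on the configuration $x$ and $s \neq s'$ (an immediate dependency relates distinct events), the images $\sig(s)$ and $\sig(s')$ are distinct, so ``does not precede'' means strictly after; this gives $\sig(s') \leq_{\al} \sig(s)$ as required. This already supplies the candidate pair together with the order reversal, and it remains only to establish the polarities.

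Next I would record the elementary fact that every serialisation is a linear extension of causal dependency: if $a \leq_A a'$ in $\sig x$ then $a$ occurs at or before $a'$ in $\al$, since each initial segment of a serialisation is a (down-closed) configuration and so cannot contain $a'$ without already containing $a$. Consequently, whenever $\sig(s) \imc \sig(s')$ holds in $A$ we have $\sig(s)$ preceding $\sig(s')$ in $\al$ --- the \emph{opposite} of the order delivered in the first step.

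With this in hand the polarities fall out of the two innocence conditions on $\sig$. Suppose $\pol(s) = +$; then $+$-innocence applied to $s \imc_S s'$ yields $\sig(s) \imc \sig(s')$, whence $\sig(s)$ precedes $\sig(s')$ in $\al$, contradicting $\sig(s') \leq_{\al} \sig(s)$. Since a strategy has no neutral events, $\pol(s) \neq +$ forces $\pol(s) = -$. Symmetrically, suppose $\pol(s') = -$; then $-$-innocence applied to $s \imc_S s'$ again yields $\sig(s) \imc \sig(s')$ and the same contradiction, so $\pol(s') = +$. This produces exactly the desired $s, s'$ with $\pol(s) = -$, $\pol(s') = +$, $s \imc_S s'$ and $\sig(s') \leq_{\al} \sig(s)$.

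I do not anticipate a genuine obstacle here: the whole content lies in matching the two innocence clauses against the two polarity assertions, supported by the single auxiliary observation that serialisations respect $\leq_A$. The only point deserving care is the distinctness of $\sig(s)$ and $\sig(s')$ --- needed to turn ``does not precede'' into a strict reversal --- which follows at once from the local injectivity of $\sig$ on $x$.
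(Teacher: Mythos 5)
Your proof is correct and takes essentially the same route as the paper's: the contrapositive of Proposition~\ref{prop:tracelem}(ii) yields the pair $s \imc_S s'$ whose images fail to be ordered correctly in $\al$, the observation that serialisations linearly extend $\leq_A$ rules out $\sig(s) \leq_A \sig(s')$, and the two innocence clauses (with the absence of neutral events in a strategy) then force $\pol(s) = -$ and $\pol(s') = +$, with totality of $\al$ giving the reversal $\sig(s') \leq_{\al} \sig(s)$. Your added care about the distinctness of $\sig(s)$ and $\sig(s')$ via local injectivity is a harmless refinement the paper leaves implicit; in fact distinctness already follows from $\sig(s) \not\leq_{\al} \sig(s')$ together with reflexivity of the total order.
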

\begin{proof}
By assumption, any trace of $x$ differs from $\al$. 
We deduce there is 
$s\imc s'$ in $x$    with $\sig(s) \not\leq \sig(s')$ in the total order of $\al$; otherwise we could serialise $x$ to obtain the   trace $\al$ ---Proposition~\ref{prop:tracelem}(ii). 
Now, $\sig(s) \not\leq_A \sig(s')$ in  $A$ as any serialisation must respect the order $\leq_A$.  
Hence, by the innocence of $\sig$, we must have $\pol(s) = -$ and $\pol(s') = +$.  
Because $\al$ is totally ordered,
$\sig(s')  \leq\sig(s)$   in $\al$.
\end{proof}

\subsection{Characterisation of  the `may' preorder}

For stopping strategies  (with games assumed race-free) we have:

\begin{theorem} Let $(\sigma_1, M_1)$ and  $(\sigma_2, M_2)$ be stopping strategies  in a common game.  Then,

$(\sigma_1, M_1)$ may pass $\tau$ implies  $(\sigma_2, M_2)$ may pass $\tau$, for all tests $\tau$,

iff

all finite traces of $\sigma_1$ are traces of  $\sigma_2$.
\end{theorem}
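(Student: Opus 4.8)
The plan is to reduce both sides to a statement about finite traces, via the following pivotal observation. Because ``may pass'' is existential and, by the composition reformulation of may-passing, can be witnessed by a \emph{finite} interaction configuration, and because a finite configuration of an interaction is a finite secured bijection—hence admits a linearisation that simultaneously serialises the strategy side and the test side, agreeing on the common game—we obtain, for any strategy $\sig$ and test $\tau$: $\sig$ may pass $\tau$ iff there is a finite trace $\al$ of $\sig$ together with a $\tick$-reaching run of $\tau$ whose projection to $A$ co-linearises with $\al$. The essential point is that the test's contribution to this criterion depends only on $\al$ (the sequence of game moves) and on $\tau$, never on the internal structure of $\sig$ beyond its trace.

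From this the $\Leftarrow$ direction is immediate. First I would note that ``may pass'' does not refer to stopping configurations at all, so the statement concerns only the underlying strategies $\sig_1,\sig_2$. Assuming every finite trace of $\sig_1$ is a trace of $\sig_2$, suppose $\sig_1$ may pass $\tau$, witnessed by a pair $(\al,y)$ as above with $\al$ a trace of $x_1\in\conf{S_1}$. By hypothesis $\al$ is also a trace of some $x_2\in\conf{S_2}$; reusing the \emph{same} test run $y$, I would merge a serialisation of $x_2$ realising $\al$ with the serialisation of $y$ along their common $A$-spine into a single linear order, which witnesses a secured bijection and hence a configuration of $T\sncirc S_2$ whose image contains $\tick$. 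Thus $\sig_2$ may pass $\tau$.

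The $\Rightarrow$ direction I would prove by contrapositive. Given a finite trace $\al=a_1\cdots a_n$ of $\sig_1$ that is not a trace of $\sig_2$, the goal is a \emph{trace test} $\tau_\al$ with the defining property that an arbitrary strategy $\sig$ may pass $\tau_\al$ iff $\al$ is a trace of $\sig$; then $\sig_1$ may pass $\tau_\al$ while $\sig_2$ does not, contradicting the may-implication. I would build $\tau_\al$ by dualising $\al$ into $A^\perp$ and threading the observations through a backbone of neutral events that advances one move at a time and culminates in a $\tick$ on top. The crux is a duality of causal freedoms: a strategy's only dependency beyond the game is an Opponent move immediately preceding a Player move, and by Lemma~\ref{lem:disagreement-lem} a serialisation fails to be a trace exactly when it reverses such a pair; dually a test's only freedom is a Player-of-$A$ move preceding an Opponent-of-$A$ move, so $\tau_\al$ can causally wait for the Player-of-$A$ moves it observes and schedule the Opponent-of-$A$ moves it plays, forcing any $\tick$-reaching interaction to serialise the common moves as $\al$.

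The main obstacle will be the construction and verification of $\tau_\al$: one must place the neutral relays so that $\tau_\al$ is a genuine bare strategy—receptive, and innocent in both polarities—which is delicate because a test cannot freely sequence after one of its own Player moves (innocence would demand game-adjacency of the images), so the backbone must route through neutral events and through the receptively available Player-of-$A$ moves rather than through the test's own plays. Having secured that $\tau_\al$ is well formed, I would verify its defining property in both directions, the reverse inclusion (no spurious $\tick$-run) being exactly where Lemma~\ref{lem:disagreement-lem} guarantees that a strategy lacking $\al$ as a trace cannot co-linearise any $\tick$-run of $\tau_\al$, and hence cannot may-pass it.
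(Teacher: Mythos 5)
Your $\Leftarrow$ direction is sound and essentially the paper's: serialise the finite successful interaction, transfer the induced trace to $\sigma_2$, and re-pair the same test configuration with a matching serialisation on the $\sigma_2$ side. The genuine gap is in the $\Rightarrow$ direction, in the construction of the trace test $\tau_\al$. The backbone you describe---neutral relays that ``advance one move at a time'' through $\al$ and culminate in $\tick$ on top---cannot be a bare strategy. By $+$-innocence, a $+$ve test event $s$ with $s\imc s'$ must satisfy $\tau(s)\imc\tau(s')$ in $A^\perp\vvbar N\vvbar \plmove$; since this codomain is a parallel composition, there are no immediate dependencies from the $A^\perp$ component into $N$ or into the $\tick$ component. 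Hence nothing---in particular no neutral event---can be made to causally wait on one of the test's \emph{own} Player moves, beyond what the game order already supplies. Your fallback remark (``route through neutral events and through the receptively available Player-of-$A$ moves'') does not repair this: neutral events are barred from following the test's plays exactly as everything else is ($+\imc 0$ is outlawed by $+$-innocence), so the backbone cannot advance past any test-Player move, and $\tick$ cannot sit on top of a backbone whose latest move is one of the test's own plays. The only dependencies a test may add are those with a $-$ve or neutral source: it can make its plays, its neutral events and $\tick$ wait on the Player-of-$A$ moves it \emph{observes}, never the converse.

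This constraint is precisely what shapes the paper's test, which moreover uses no neutral events at all---the paper remarks explicitly that for `may' testing a pure strategy suffices. It takes $T_1=\sigma_1 x_1$ saturated with all accessible Opponent moves (a receptivity step your sketch omits), orders it by $A^\perp$ augmented only with the reversed pairs $\sigma_2(s')\leq \sigma_2(s)$ supplied by Lemma~\ref{lem:disagreement-lem}, one chosen per $x_2\in\conf{S_2}$ with $\sigma_2 x_2=\sigma_1 x_1$, and makes $\plmove$ depend on all the $-$ve events of the test. Since this augmented order is contained in the total order $\al_1$, the interaction with $\sigma_1$ reaches $\tick$; since each relevant $x_2$ imposes $s\imc s'$ while the test imposes the opposite order on the corresponding synchronisations, a deadlocked pair excludes a $-$ve cause of $\plmove$, so $\sigma_2$ never does. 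Note also that the paper's test is tailored to $\sigma_2$ via this choice function rather than being a universal ``$\sigma$ may pass $\tau_\al$ iff $\al$ is a trace of $\sigma$'' gadget. Your stronger universal property is in fact attainable with the maximal legal augmentation (all pairs $b\leq c$ with $b$ a $+$ve move of $A$ preceding a $-$ve move $c$ in $\al$), but verifying it needs additional arguments you do not supply---in particular handling strategies that interleave extra moves outside $\al$ through the saturation events, which must be pruned away using innocence---so even after replacing the ill-formed backbone, your verification step is missing.
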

\begin{proof}  Assume strategies $\sig_1:S_1\to A$ and  $\sig_2:S_2\to A$.
{\it ``if'':}
Assume all finite traces of   $\sig_1$ are traces of  $\sig_2$. Suppose $(\sigma_1, M_1)$ may pass test $\tau$  with event structure $T$.   Then there is a successful  configuration $w\sncirc x_1\in \conf{T\sncirc S_1}$, where $x_1\in \conf{S_1}$ and $w\in\conf T$; it is successful in the sense that its image contains the success event $\tick$.   Take a serialisation of $w\sncirc x_1$; this induces a serialisation of
$x_1$ to yield a trace.  Then, by assumption,  $\sig_2$  has a configuration $x_2\in\conf{S_2}$ with the same trace, so a matching serialisation.  Consequently the pairing $w\sncirc x_2$ is defined with $w\sncirc x_2\in   \conf{T\sncirc S_2}$; sharing the same image as $w\sncirc x_1$ it is also successful.  \\

\noindent
{\it ``only if'':}  We show the contraposition: assuming not all  traces of   $\sig_1$ are traces of $\sig_2$, we produce a test $\tau$  for which 
 $\sigma_1 $ may pass $\tau$ while  it is not the case that $\sigma_2$ may pass $\tau$.  

Assume a trace $\al_1$ of $x_1\in \conf{S_1}$ is not a trace of any $x_2\in \conf{S_2}$.  Note that the trace $\al_1$, and correspondingly $x_1$, must have at least one +ve event as otherwise, by receptivity, $\sig_2$ could match the trace $\al_1$.   
Any trace of $x_2%\in \conf{S_2}
$, with $\sig_2 x_2 = \sig_1 x_1$,  differs from $\al_1$. 
By Lemma~\ref{lem:disagreement-lem},   we deduce there are  $s, s'\in x_2$ such that $s\imc_2 s'$  with $\pol(s) = -$ and $\pol(s') = +$ and 
$\sig_2(s')  \leq_1\sig_2(s)$   in the total order $\al_1$.

%We deduce there is 
%$s\imc_2 s'$ in $x_2$    with $\sig_2(s) \not\leq_1 \sig_2(s')$ w.r.t.~the total order of $\al_1$; otherwise we could serialise $x_2$ to obtain the   trace $\al_1$.  ***THIS ARGUMENT REPEATS IN THE NEXT LEMMA - MAKE A PROP OR LEMMA? ****
%Now, $\sig_2(s) \not\leq_A \sig_2(s')$ in  $A$ as any serialisation must respect the order $\leq_A$.  
%Hence, by the innocence of $\sig_2$, we must have $\pol(s) = -$ and $\pol(s') = +$.  
%Because $\al_1$ is totally ordered,
%$\sig_2(s')  \leq_1\sig_2(s)$   in $\al_1$.

Thus for each  $x_2\in \conf{S_2}$ with $\sig_2 x_2 = \sig_1 x_1$ we can choose $\theta(x_2) = (s,s')$ 
so that $s\imc_2 s'$ in $x_2$    with $\pol(s) = -$ and $\pol(s') = +$  and $\sig_2(s')  \leq_1\sig_2(s)$   in $\al_1$.
 
 We now describe a test $\tau:T \to A^\perp\vvbar \plmove$ which will discriminate between $\sig_1$ and $\sig_2$.
Let $T'_1$ be the elementary event structure comprising events $T_1\eqdef \sig_1 x_1$ saturated with all accessible Opponent moves (note, in $A^\perp$), \ie~events
$$ 
T_1'= \set{a\in A}{\pol_{A^\perp} ([a]\setdif T_1) \subseteq\setof -}\,
$$
with order
that of $A^\perp$ augmented with  $\sig_2(s')  \leq_1\sig_2(s)$ for every choice $\theta(x_2) = (s,s')$ where $x_2\in M_2$ and $\sig_2 x_2 = \sig_1 x_1$;    the ensuing   
relation on $T_1$ is included in the total order $\al_1$ so forms a partial order in which every element has only finitely many elements below it.  (By design, $T'_1$ ``disagrees'' with the causal dependency of each $x_2\in\conf{S_2}$ for which $\sig_2 x_2 =\sig_1 x_1$.) 
The polarities of events of $T'_1$ are those of its events in $A^\perp$.  On $T'_1$ the map $\tau$ takes an event to its same event in $A^\perp$. 

Let $T$ be the \eswp~obtained from $T_1'$ by adjoining a fresh `success' event $\plmove$ with additional causal dependency so 
$t_1 \leq_T \plmove$  iff $t_1$ is $-$ve%, so the copy of a +ve event in $A$
; as noted above there has to be at least one +ve event in $x_1$ and thus, by the reversal of polarity, at least one
$t_1\in T_1$ of $-$ve polarity.  Then the obvious map  $\tau:T \to A^\perp\vvbar \plmove$ is a strategy, and a suitable test for $\sig_1$ and $\sig_2$.  

We have (i) $\sig_1$ may pass $\tau$, while (ii) it is not the case that  $\sig_2$ may pass $\tau$. 

To see (i), remark that the relation of causal dependency on $T_1$  is included in the the total order of the trace $\al_1$ of $x_1$.  Hence  $\tau\sncirc \sig_1$ has a successful configuration $(T_1\cup\setof\plmove) \sncirc x_1$.   

 To show (ii), consider any finite configuration of $\tau\sncirc \sig_2$.  It has the form $w\sncirc x_2$ where $w\in\conf T$   and $x_2\in\conf{S_2}$.  The configuration $w\sncirc x_2$ is unsuccessful because $\plmove\notin w$, as we now show.   
By design,  $\tau$  and $\sig_2$
enforce opposing causal dependencies on a pair of  synchronisations needed for $T_1 \sncirc x_2$ to be defined whenever $x_2\in \conf{S_2}$ with $\sig_2 x_2 =T_1$.  At least two events of opposing polarity in $T_1$ are excluded from any pairing $w\sncirc x_2$; one must be a $-$ve event of $T_1$ on which  $\plmove$ causally  depends;  hence $\plmove\notin w$. 
\end{proof}

That the characterisation of the `may'  preorder 
above only depends on finite traces  is not surprising, and familiar from previous work; the full-abstraction results of Dan Ghica and Andrzej Murawski w.r.t.~`may' behaviour rely only on finite traces~\cite{murawski}.

Clearly the proof above does not rely on stopping configurations or tests being bare rather than pure strategies; the test used in the proof patently has no neutral events. The extra discriminating power of tests based on bare strategies, illustrated in Example~\ref{ex:ptestsmoredescrim}, does play an essential role in the analgous result in the `must' case, to be considered now.  

\subsection{Characterisation of the `must' preorder}

Recall an event structure $E =(E,\leq, \Con)$ is {\em consistent-countable} iff there is a function $\chi:E\to \omega$ from the events   such that 
$$\setof{e_1, e_2}\in\Con \ \&\ \chi(e_1) =\chi(e_2) \implies e_1=e_2\,.$$  
Any configuration $x\in\iconf E$  of a   consistent-countable event structure $E$ is countable and so may be serialised as 
$$
x= \setof{ e_1, e_2, \cdots, e_n, \cdots }
$$
so that $\setof{e_1, \cdots, e_n}\in\conf E$ for any finite subsequence.  %Such a sequence we call a {\em trace} of $x$.
For the must case we assume that games are   consistent-countable.  It follows that strategies $\sig:S\to A$  in consistent-countable games  $A$ have $S$   consistent-countable. W.r.t.~such a strategy $\sig$, we  have traces of all configurations. %: 
%a {\em trace} of a configuration $x\in\iconf S$ is a sequence
%$$
%\sig(s_1), \sig(s_2), \cdots, \sig(s_n), \cdots 
%$$
%of events in $A$ 
%for some serialisation $x=\setof{s_1, s_2, \cdots, s_n, \cdots }$. Such a trace will be infinite when $x$ is.   

\begin{theorem}\label{thm:mustchar}
Assume game $A$ is   consistent-countable.  Let $(\sigma_1, M_1)$ and  $(\sigma_2, M_2)$ be stopping strategies in $A$. Then, 

%\noindent
$(\sigma_2, M_2)$ must pass $\tau$ implies  $(\sigma_1, M_1)$ must pass $\tau$, 
for all 
 tests $\tau$, 

iff

\noindent
all traces of stopping configurations  $M_1$ are traces of stopping configurations $M_2$.
\end{theorem}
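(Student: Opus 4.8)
The theorem is a biconditional about the `must' preorder, and by analogy with the `may' characterisation (Theorem above) I would prove each direction separately, with the $\Leftarrow$ (``if'') direction being the routine one and the $\Rightarrow$ (``only if'') direction carrying all the difficulty.

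\textbf{The ``if'' direction.}
Assume every trace of a stopping configuration in $M_1$ is a trace of a stopping configuration in $M_2$, and suppose $(\sigma_2,M_2)$ must pass a test $\tau$. I would show $(\sigma_1,M_1)$ must pass $\tau$ by contraposition: if $(\sigma_1,M_1)$ does not must pass $\tau$, there is some $w\sncirc x_1\in M_0\sncirc M_1$ (with $x_1\in M_1$, $w\in M_0$) whose image fails to contain $\tick$. Serialise $w\sncirc x_1$; this induces a serialisation of $x_1$, giving a trace $\al$ of the stopping configuration $x_1$. By hypothesis $\al$ is a trace of some stopping configuration $x_2\in M_2$ with $\sig_2 x_2=\sig_1 x_1$, so the matching serialisation makes $w\sncirc x_2$ defined and in $M_0\sncirc M_2$, sharing the same image, hence also failing to contain $\tick$. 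This contradicts that $(\sigma_2,M_2)$ must pass $\tau$. The key ingredients here are Lemma~\ref{lem:maxlforpstrats} (to see stopping configurations of the interaction arise exactly from stopping configurations of the components) and Proposition~\ref{prop:tracelem}, together with the assumption that $A$ is consistent-countable so that even infinite stopping configurations admit serialisations and traces.

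\textbf{The ``only if'' direction.}
I would again argue by contraposition: assume some trace $\al_1$ of a stopping configuration $x_1\in M_1$ (possibly infinite, with $\sig_1 x_1=\sig_2 x_2$ only for $x_2$ whose traces all differ from $\al_1$) is not a trace of any stopping configuration $x_2\in M_2$, and construct a test $\tau$ that $(\sigma_2,M_2)$ must pass but $(\sigma_1,M_1)$ does not. As in the `may' proof I would invoke Lemma~\ref{lem:disagreement-lem} to extract, for each competing $x_2\in M_2$, an immediate-dependency pair $s\imc_2 s'$ with $\pol(s)=-$, $\pol(s')=+$ and $\sig_2(s')\leq_{\al_1}\sig_2(s)$, and use these reversals to impose causal dependencies on the test's event structure built over $\sig_1 x_1$ saturated with accessible Opponent moves. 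The crucial new feature, absent from the `may' case, is that the test must use \emph{neutral} events: as Example~\ref{ex:ptestsmoredescrim} shows, only bare-strategy tests can convert ``the strategy may get stuck at $x_1$'' into a genuine `must'-failure. So I would adjoin a neutral event $\nemove$ that the success event $\plmove$ is placed in conflict with, arranged so that the neutral event can always fire (yielding a $+$-maximal configuration with no $\tick$) precisely when the observed strategy realises the blocked trace $\al_1$, while every stopping configuration of $\sigma_2$ is forced through the success event by the causal-dependency reversals.

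\textbf{The main obstacle.}
The hard part will be the ``only if'' construction in the presence of possibly \emph{infinite} stopping configurations and the need to certify that the constructed $\tau$ is a genuine bare strategy (receptive and innocent) whose $+$-maximal configurations behave as intended. Specifically, I must check that the disagreement relation collected across \emph{all} competing $x_2\in M_2$ still yields a finitary partial order contained in $\al_1$ (so each event has finitely many predecessors), and that the conflict between $\nemove$ and $\plmove$ makes $\tau$ must-fail on $\sigma_1$ via a $+$-maximal interaction configuration while must-succeeding on $\sigma_2$. Handling the infinite case is where this argument genuinely departs from Harmer--McCusker's ``only the first divergence matters,'' and is exactly the phenomenon flagged in Example~\ref{ex:musttesting}: the test must observe \emph{beyond} the first stopping point, which forces the careful parallel use of a success thread and a cancelling neutral thread. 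Verifying that this construction scales to infinite $x_1$ and to the full family $M_2$ simultaneously is the delicate step.
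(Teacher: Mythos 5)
Your ``if'' direction is essentially the paper's argument (the paper states it directly rather than by contraposition, but the content --- serialise the stopping configuration $w\sncirc x_1$ of the interaction, transfer the induced trace of $x_1$ to some $x_2\in M_2$, and re-pair to get $w\sncirc x_2$ with the same image --- is identical), and your ``only if'' direction correctly reuses the machinery of the `may' proof: Lemma~\ref{lem:disagreement-lem} applied to every $x_2\in M_2$ with $\sig_2 x_2=\sig_1 x_1$, the saturation $T_1'$ of $T_1=\sig_1 x_1$ with accessible Opponent moves, and the reversed dependencies $\sig_2(s')\leq_1\sig_2(s)$ that deadlock the test against every competing $x_2$. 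You also correctly identify that the test must be a \emph{bare} strategy with neutral events, citing Example~\ref{ex:ptestsmoredescrim}.

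However, there is a genuine gap at exactly the point you flag but do not resolve: your test adjoins \emph{a single} neutral event $\nemove$ in conflict with \emph{a single} success event $\plmove$, with $\nemove$ firing ``precisely when the observed strategy realises the blocked trace $\al_1$''. This design cannot work when $x_1$ is infinite --- which is the heart of the theorem, since the characterisation involves traces of \emph{all}, possibly infinite, stopping configurations (Example~\ref{ex:infstopping}). A neutral event triggered by the whole of $T_1$ would need infinitely many causal predecessors, violating the finitariness axiom $[e]$ finite; and note additionally that $+$-innocence forbids a neutral event from immediately depending on any $+$ve (test-side) move of $T_1'$, since $\tau$ would have to send that immediate dependency into distinct components of $A^\perp\vvbar N\vvbar\plmove$. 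The paper's construction resolves this with an \emph{incremental, per-event} cancellation mechanism: a set $N$ containing one neutral copy per $-$ve event of $T_1$ (each depending only on its single $-$ve original, so finitariness and innocence hold), and a set $T_2$ containing one success event per event of $T_1$, all mapped to $\tick$, pairwise in conflict, with each success event cancelled by its specific counterpart (the $+$ve original, or the neutral copy of the $-$ve original). Then $T_1\cup N$ is an unsuccessful $+$-maximal configuration realised against $\sig_1$, while by $+$-maximality every configuration missing some event of $T_1$ still has an enabled, unblocked success event --- this is the ``two parallel threads'' observation after Example~\ref{ex:musttesting}, made finitary. Without this idea your construction, as stated, proves the theorem only for finite stopping configurations, where (one can check) the single-$\nemove$ test is salvageable; for the full statement the family-of-cancellations design is not a verification detail but the missing core of the argument.
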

\begin{proof}
{\it ``if'':}
Assume all traces of stopping configurations  $M_1$ are traces of stopping configurations $M_2$.
A stopping configuration of $\tau\sncirc \sig_1$ has the form 
$w\sncirc x_1$ where $w$ and $x_1$ are stopping configurations of $\tau$ and $\sig_1$, respectively. A   serialisation of $w\sncirc x_1$ into a (possibly infinite) sequence induces a serialisation of $x_1\in M_1$.  By assumption, there is $x_2\in M_2$ with the same trace in $A$ as $x_1$.  Consequently, $w\sncirc x_2$ is a configuration of $\tau\sncirc \sig_2$ with the same image in $A\vvbar \plmove$.  Moreover, $w\sncirc x_2$ is a stopping configuration of $\tau\sncirc \sig_2$.   Supposing $(\sigma_2, M_2)$ must pass a test $\tau$, the image of $w\sncirc x_2$ contains $\tick$ whence the image of $w\sncirc x_1$ contains $\tick$ ensuring $(\sigma_1, M_1)$ must pass a test $\tau$.\\

\noindent
{\it ``only if'':}  We show the contraposition: assuming not all  traces of stopping configurations  $M_1$ are traces of stopping configurations $M_2$, we produce a test $\tau$  for which 
 $(\sigma_2, M_2)$ must pass $\tau$ while  it is not the case that $(\sigma_1, M_1)$ must pass $\tau$.  

Assume a trace $\al_1$ of $x_1\in M_1$ is not a trace of any $x_2\in M_2$.  

In particular, consider any $x_2\in M_2$ with $\sig_2 x_2 = \sig_1 x_1$.  Then, any trace of $x_2$ differs from $\al_1$. 
 By Lemma~\ref{lem:disagreement-lem},   there are $s, s'\in x_2$ such that $s\imc_2 s'$  with $\pol(s) = -$ and $\pol(s') = +$ and 
$\sig_2(s')  \leq_1\sig_2(s)$   in the total order $\al_1$.

%We deduce there is 
%$s\imc_2 s'$ in $x_2$    with $\sig_2(s) \not\leq_1 \sig_2(s')$ in the total order of $\al_1$; otherwise we could serialise $x_2$ to obtain the   trace $\al_1$. 
%Now, $\sig_2(s) \not\leq_A \sig_2(s')$ in  $A$ as any serialisation must respect the order $\leq_A$.  
%Hence, by the innocence of $\sig_2$, we must have $\pol(s) = -$ and $\pol(s') = +$.  
%Because $\al_1$ is totally ordered,
%$\sig_2(s')  \leq_1\sig_2(s)$   in $\al_1$.

Thus for each  $x_2\in M_2$ with $\sig_2 x_2 = \sig_1 x_1$ we can choose $\theta(x_2) = (s,s')$ 
so that $s\imc_2 s'$ in $x_2$    with $\pol(s) = -$ and $\pol(s') = +$  and $\sig_2(s')  \leq_1\sig_2(s)$   in $\al_1$.

We build an event structure with polarity $T$ and  a test as bare strategy $\tau:T\to A^\perp\vvbar N \vvbar \plmove$. We build the events of  $T$ as  $T'_1\cup N\cup T_2 %\cup T_3
$, a  union of sets of events, assumed disjoint, described as follows.

\begin{itemize}
\item
Let $T'_1$ be the elementary event structure comprising events $T_1\eqdef \sig_1 x_1$ saturated with all accessible Opponent moves, \ie~events
$$ 
T_1'= \set{a\in A}{\pol_{A^\perp} ([a]\setdif T_1) \subseteq\setof -}\,
$$
with order
that of $A$ augmented with  $\sig_2(s')  \leq_1\sig_2(s)$ for every choice $\theta(x_2) = (s,s')$ where $x_2\in M_2$ and $\sig_2 x_2 = \sig_1 x_1$;    the ensuing   
relation on $T_1$ is included in the total order $\al_1$ so forms a partial order in which every element has only finitely many elements below it.  (By design, $T'_1$ ``disagrees'' with the causal dependency of each $x_2\in M_2$ for which $\sig_2 x_2 =\sig_1 x_1$.) 
The polarities of events of $T'_1$ are those of its events in $A^\perp$.  On $T'_1$ the map $\tau$ takes an event to its same event in $A^\perp$. 
\item
$N$ comprises a copy of the set of events  of $-$ve polarity in $T_1$; all the events of $N$ have neutral polarity; an event of $N$ is sent by $\tau$ to its copy.  
\item
$T_2$ comprises a copy of the set of events   %of +ve polarity in 
$T_1$; all the events of $T_2$ have +ve polarity; they are all sent by $\tau$ to $\tick\eqdef (3,\plmove)$.
%\item
%$T_3$ comprises a copy of the set of events  of $-$ve polarity in $T_1$; all the events of $T_3$ have +ve polarity; they are all sent by $\tau$ to $\tick$.
\item
Causal dependency on $T$ is that of $T'_1$ augmented with dependencies from events of $T_1$ of $-$ve polarity %(in $A^\perp$) 
to their corresponding copies in $N$.
\item
The consistency relation of $T$ is that minimal relation which ensures that: any two distinct events of $T_2$ are in conflict; a +ve event of $T_1$ conflicts with its corresponding copy in $T_2$; and 
a neutral event in $N$ conflicts with its corresponding copy in $T_2$. Formally, 
$$\eqalign{
X&\in \Con_T \hbox{ iff }X\fsubseteq T_1\cup N\cup T_2 %\cup T_3
 \ \&\
% \cr&
\mod{X\cap T_2} \leq 1% \ \&\  \mod{X\cap T_3}
 \ \&\  \cr
%& (X\cap T_2 \neq \emptyset \implies X\cap T_3 =\emptyset) \ \&\ \cr
&(\all t_1\in X\cap T_1^+,  t_2\in X\cap T_2. \ t_1, t_2 \hbox{ are not copies of a common event}) \ \&\  \cr
&(\all n \in X\cap N, %t_3\in X\cap T_3 
t_2\in X\cap T_2.\ 
  n, t_2 \hbox{ are not copies of a common event}).
}
$$
\end{itemize}  

Note that all the events over $\tick$, which together comprise the set $T_2$, can occur initially but can become blocked as moves are made in $T_1$.  
In particular, 
the set $T_1 \cup N$ is a $+$-maximal configuration of $T$ with image in $A^\perp\vvbar N \vvbar \plmove$ not containing any event over $\tick$. On the other hand any $+$-maximal configuration of $T$ not including all the events $T_1$ will contain an event over $\tick$.    Hence $\St(\tau)$ has an unsuccessful stopping configuration consisting of precisely all the events of $T_1$---it does not have an event over $\tick$---while all  stopping configurations of $\St(\tau)$ which do not contain all the events of  $T_1$ are successful---they contain an event over $\tick$. 
 
Consequently, 
(i) it is not the case that $(\sig_1, M_1)$ must $\tau$, while (ii) $(\sig_2, M_2)$ must $\tau$.  To see (i), remark that the relation of causal dependency on $T_1$  is included in the the total order of the trace $\al_1$ of $x_1$.  Hence  $\St(\tau)\sncirc \sig_1$ has a stopping configuration $T_1\sncirc x_1$ which is unsuccessful and thus  $(\sig_1, M_1)$ fails the must test $\tau$.  To show (ii), consider any stopping configuration of $\St(\tau)\sncirc \sig_2$.  It comprises $w\sncirc x_2$ where $w$ is a stopping configuration of $\St(\tau)$ and $x_2\in M_2$, a stopping configuration of $\sig_2$.  Now 
$w\not\supseteq T_1$,
%$w\neq T_1$, 
as by design  $\tau$  and $\sig_2$
enforce opposing causal dependencies on a pair of  synchronisations needed for $T_1\sncirc x_2$ to be defined whenever $x_2\in M_2$ with $\sig_2 x_2 =T_1$.  Thus $w$ is successful in that it contains an event over $\tick$. Hence $(\sig_2, M_2)$ must pass $\tau$. This completes the proof.
 \end{proof}
 
\noindent
{\bf Remark.} 
By Example~\ref{ex:ptestsmoredescrim}, the result above would not hold if tests were based solely on pure strategies.  

\begin{example}{\rm 
Let $A$ be the game
$$
\xymatrix@R=18pt@C=20pt{
\plmove&\opmove\ar@{|>}[d]\\
\ar@{|>}[u]\opmove&\plmove\\
}$$
Let $\sig_1$ be the stopping strategy given by the identity map $\id_A:A\to A$ together with the +-maximal configurations of $A$.  
Let $\sig_2$ be the stopping strategy derived from the event structure
$$
\xymatrix@R=18pt@C=20pt{
\plmove\ar@{~}[r] & \plmove&\ar@{|>}[l]\opmove\ar@{|>}[d]\ar@{|>}[dr]\\
&\ar@{|>}[u]\opmove\ar@{|>}[r]\ar@{|>}[ul]&\plmove& \plmove\ar@{~}[l]\\
}$$
in which there are additional occurrences of Player moves awaiting both moves of Opponent; the map to $A$ is the obvious one and its stopping configurations the +-maximal ones.  It can be checked that the two stopping strategies share the same traces of stopping configurations so are   `must' equivalent.
}\endex\end{example}

 Infinite stopping configurations play an essential role in the `must' behaviour of strategies:

\begin{example}\label{ex:infstopping}{\rm  
 Let the game $A$ consist of an infinite chain of alternating Player-Opponent moves:
$$
\xymatrix@R=2pt@C=20pt{%a&b&c&d&e\\
\plmove\ar@{|>}[r] &\opmove\ar@{|>}[r] &\plmove\ar@{|>}[r] &\opmove\ar@{|>}[r]    &\ \cdots\  \ar@{|>}[r] &
\plmove\ar@{|>}[r] &\opmove\ar@{|>}[r] &\  \cdots}
$$

The strategy $\sig_1$ has as consistent components a copy of $A$ itself and copies of all its initial finite sequences of events ending with an Opponent move;   events of different components are inconsistent with each other---we refrain from drawing the wiggly conflicts.  The map $\sig_1$  is obvious. (The construction is an instance of the sum of strategies---see Section~\ref{sec:sumdecomp}.)
$$
\xymatrix@R=5pt@C=20pt{%a&b&c&d&e\\
\plmove\ar@{|>}[r]&\opmove\\
\plmove\ar@{|>}[r] &\opmove\ar@{|>}[r] &\plmove \ar@{|>}[r] &\opmove\\
\plmove\ar@{|>}[r] &\opmove\ar@{|>}[r] &\plmove\ar@{|>}[r] &\opmove\ar@{|>}[r] &\plmove \ar@{|>}[r] &\opmove\\ 
&&&\vdots&\\
\plmove\ar@{|>}[r] &\opmove\ar@{|>}[r] &\plmove\ar@{|>}[r] &\opmove\ar@{|>}[r]    &\ \cdots\  \ar@{|>}[r] &
\plmove\ar@{|>}[r] &\opmove  &\\
&&&\vdots&\\
\plmove\ar@{|>}[r] &\opmove\ar@{|>}[r] &\plmove\ar@{|>}[r] &\opmove\ar@{|>}[r]    &\ \cdots\  \ar@{|>}[r] &
\plmove\ar@{|>}[r] &\opmove\ar@{|>}[r] &\plmove \ar@{|>}[r] &\ \cdots
}
$$
The lowest component is a copy of $A$.  The stopping configurations are all its +-maximal configurations, \ie~those  sets consisting of  a whole component or an initial subsequence of a component which ends in a Player move. 

The strategy $\sig_2$ is like $\sig_1$ but without the extra infinite component of shape $A$.  Its stopping configurations are all its +-maximal configurations, so are necessarily all finite. 

The 
traces of $\sig_1$'s  and $\sig_2$'s {\em finite} stopping configurations  coincide. However the trace of the infinite stopping configuration of $\sig_1$ cannot be a trace of  $\sig_2$.  
Accordingly, from Theorem~\ref{thm:mustchar}, there is a test strategy which $\sig_2$ must pass while $\sig_1$ does not.  
A distinguishing 
 test $\tau:T\to A^\perp\vvbar\plmove\tick $ has $T$ comprising the infinite event structure shown below.
$$
\xymatrix@R=20pt@C=20pt{%a&b&c&d&e\\
\ar@{|>}[d]\opmove\ar@{|>}[r] &\plmove\ar@{|>}[r] &\ar@{|>}[d]\opmove\ar@{|>}[r] &\plmove\ar@{|>}[r]    &\ar@{|>}[d]\opmove \ar@{|>}[r] &
\plmove\ar@{|>}[r] &\ar@{|>}[d]\opmove\ar@{|>}[r] &\plmove\ar@{|>}[r] &\  \cdots&\\
\plmove\tick\ar@{~}[rr] \ar@{~}[urrr]\!\!\!\!\!\!&&\plmove\tick \ar@{~}[rr]  \ar@{~}[urrr]\!\!\!\!\!\!& & \plmove\tick\ar@{~}[rr]  \ar@{~}[urrr]\!\!\!\!\!\! &&
\plmove\tick \ar@{~}[rr] \ar@{~}[urrr]|{{{ }}}\!\!\!\!\!\!
&&\ \cdots \\
}
$$
%in which the $\tick$-events are inconsistent with eachother. 
The  strategy $\sig_2$ must pass $\tau$; all +-maximal  configurations of $\tau\sncirc \sig_2$ are finite and contain a $\tick$-event.
Whereas the  strategy $\sig_1$ can fail to enable a $\tick$-event 
through its extra infinite stopping configuration. %behaviour which excludes all $\tick$-events.  
}\endex\end{example}

\section{The rigid image of a stopping strategy}\label{sec:axioms}

In this section we rely on the material of Section~\ref{sec:rigidimage}, in particular that a strategy $\sig:S \to A$ in a game $A$ has a rigid image
$$\xymatrix{
S\ar[dr]_\sig\ar@{>>}[r]^{f} & S_0\ar[d]^{\sig_1}\\ 
&A\,,}
$$
where $f$ is rigid epi and  the rigid image $\sig_1$ is a total map; the map $\sig_1$ automatically inherits the properties required of a strategy.  (The construction and key properties of rigid image are unaffected by the extra structure of polarity.)  As has been remarked earlier~\cite{ecsym-notes,madeeasy}, rigid-image strategies have the advantage of forming a category rather than a bicategory.  Extended with stopping configurations they can support `may' and `must' behaviour.  

\begin{definition}{\rm 
Let $(\sig, M_S)$ be a stopping strategy.  Let $\sig_1$ be the rigid image of $\sig$ with accompanying 2-cell $f:\sig \Rightarrow \sig_1$ where $f$ is rigid epi.  We define the {\em rigid image} of  $(\sig, M_S)$ to be  $(\sig_1, f M_S)$.
A {\em rigid-image} stopping strategy is one which is its own rigid-image.  
}\end{definition}

As a direct consequence of the last part of Lemma~\ref{lem:2-cellsmaymust}, we are assured the rigid image of a stopping strategy  does not lose any `may' and `must' behaviour. 

\begin{prop}
A stopping strategy   is both  `may' and `must' equivalent to its rigid image.
\end{prop}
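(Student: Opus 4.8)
The plan is to read off the result directly from the last clause of Lemma~\ref{lem:2-cellsmaymust}. By definition, the rigid image of a stopping strategy $(\sig, M_S)$ is the pair $(\sig_1, fM_S)$, where $\sig_1$ is the rigid image of the underlying strategy $\sig$ and $f:\sig\Rightarrow\sig_1$ is the accompanying 2-cell, which is rigid and epi. First I would observe that $f$ constitutes a 2-cell of stopping strategies $f:(\sig, M_S)\Rightarrow (\sig_1, fM_S)$: it is a 2-cell of strategies by the construction of the rigid image, and the defining condition $fM_S\subseteq fM_S$ on stopping configurations holds trivially---indeed with equality.

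Having set $M_{S'}\eqdef fM_S$, the hypotheses of the final clause of Lemma~\ref{lem:2-cellsmaymust} are then met exactly: the 2-cell $f$ is rigid epi and satisfies $fM_S = M_{S'}$. That clause yields immediately that $(\sig, M_S)$ and $(\sig_1, fM_S)$ are both `may' and `must' equivalent, which is precisely the asserted statement.

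There is essentially no obstacle to overcome here; the substance resides in Lemma~\ref{lem:2-cellsmaymust}, whose proof already establishes that a rigid epi 2-cell with $fM_S = M_{S'}$ both preserves and reflects the passing of `may' tests (using that rigid epis reflect `may') and of `must' tests (using rigidity together with the equality $fM_S = M_{S'}$ to transport the secured bijections witnessing stopping configurations in both directions). The only points worth double-checking are that the rigid-image construction of Section~\ref{sec:rigidimage} genuinely delivers a 2-cell $f$ that is rigid and epi, and that $\sig_1$ inherits the axioms of a strategy---both of which are recorded in the preamble to this section---so that $(\sig_1, fM_S)$ is indeed a stopping strategy and $f$ is a legitimate 2-cell between stopping strategies.
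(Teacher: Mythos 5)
Your proof is correct and matches the paper exactly: the paper likewise derives the proposition as a direct consequence of the last clause of Lemma~\ref{lem:2-cellsmaymust}, applied to the rigid epi 2-cell $f:(\sig,M_S)\Rightarrow(\sig_1,fM_S)$ with $M_{S'}=fM_S$ holding by definition. Your added checks---that $f$ is a legitimate 2-cell of stopping strategies and that $\sig_1$ inherits the strategy axioms---are sound and consistent with what the paper records in Section~\ref{sec:rigidimage}.
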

%\begin{proof}
%A direct consequence of the last part of Lemma~\ref{lem:2-cellsmaymust}.
%\end{proof}

As far as `may' and `must' behaviour is concerned it is sensible to regard two stopping strategies  as equivalent if they share a common rigid image.  Rigid-image equivalence transfers to an equivalence between bare strategies: two bare strategies are equivalent if under $\St$ we obtain equivalent stopping strategies. 
W.r.t.~`may' and `must' behaviour we can choose to work in the category of rigid-image stopping strategies.

What axioms hold of stopping configurations?  Such axioms should be preserved by the composition of stopping strategies and rigid image.  They should also be complete in the sense that any stopping strategy which satisfied them is rigid-image equivalent to the stopping strategy of some bare strategy.
We do not presently know a complete set of axioms for stopping configurations.  Candidate axioms on a stopping strategy $\sig:S\to A$ with stopping configurations $M$ are
$$
\eqalign{
&{\rm Axiom\, (i)}\ \  \all x\in\conf S\exists y\in M.\ x\subseteq y\,, \ \hbox{ and }\  \cr
&{\rm Axiom\, (ii)}\  \ \all y\in M, x\in\iconf S.\  x\subseteq y \ \&\ x \hbox{ is +-maximal  in }S \implies x\in M\,.}
$$
The example below shows why we do not assume all +-maximal configurations are stopping. That property is not preserved by taking the rigid image. 

\begin{example}\label{ex:rigim}{\rm 
In forming the rigid image $\sig_1:S_1\to A$ of a strategy $\sig:S\to A$, related by rigid epi 2-cell $f:\sig\Rightarrow \sig_1$, it is possible to have an infinite configuration of $S_1$ which is not in the direct  image under $f$ of any configuration of $S$; in particular it is possible to have a $+$-maximal configuration of $S_1$ which is not a direct image of any +-maximal configuration $S$. For example, let $A$ comprise an infinite chain of Player events.  Take $S$ to be the sum of all finite subchains.  The rigid image of $S$ is $A$ itself which has +-maximal configuration comprising all the events in the infinite chain, not the image of any configuration of $S_1$.  Thus, in forming the rigid image of a stopping strategy, we cannot assume that all the +-maximal configurations of the rigid image are stopping.
}\endex\end{example} 

\section{Strategies as concurrent processes}

The paper~\cite{stratsproc} is a closely related study of concurrent strategies from the perspective of concurrent processes, considering how concurrent games and strategies are objects which we can program. Concurrent strategies are shown to support operations yielding an economic yet rich higher-order concurrent process language, which shares features both with process calculi and nondeterministic dataflow. There a slightly weakened definition of bare strategies plays a key role in providing an operational semantics.  It would be satisfying to complete this story by providing inequational proof systems for `may' and `must' equivalence based on its syntax for strategies, drawing inspiration from the classic work of Hennessy and de Nicola~\cite{hennessy-denicola}.
 
Process calculi often allow unrestricted recursion.  Strategies, as presented here,  form a model of linear logic which restricts the copying of parameters needed in recursive definitions.  
The treatment of unrestricted recursion requires a move to nonlinear strategies  over games with symmetry~\cite{lics14,coHO}.   %With restrictions on the nature of parameters t
The recursive definition of bare strategies and strategies  can follow classical ideas; 2-cells include the rigid embeddings and inclusions of~\cite{icalp82}.  Less clear is how to carry out recursive definitions directly on stopping strategies; the 2-cells we have chosen would seem to be too restrictive; and the nature of stopping configurations, that they can be infinite without finite approximations, would push the development into non-continuous operations---nonstandard, if not in itself a bad thing.   

A treatment of {\em winning} concurrent strategies has been presented~\cite{lics2012}. Informally a strategy is winning if it must end up in a winning configuration of the game regardless of the behaviour of Opponent.  In idea this is very close to  {\em controllability} in~\cite{ATL}.  Because the semantics of composition of composition of strategies in~\cite{lics2012} is inattentive to the possibilities of deadlock and divergence,  a strategy which is obtained as a composition may be deemed winning there and yet possibly deadlock or diverge before reaching a winning configuration~\cite{stratsproc}.  Fortunately the treatment of winning strategies {\it ibid.}~generalises straightforwardly to stopping strategies which keep track of deadlock and divergence, and thus repair this defect.   The role of +-maximal configurations in~\cite{lics2012} is replaced by that of stopping configurations: a bare or stopping strategy is winning iff all its stopping configurations image to  winning strategies in the game.  
 
Forearmed with concurrent strategies and games with symmetry it would be interesting to revisit  old ideas extending  testing to other equivalences beyond those of `may' and `must'~\cite{abramskytesting}.  Certainly one could wish for a better integration of games and strategies with the classical work on concurrency, process algebra and its equivalences included.  The medium of concurrent games and strategies based on event structures also provides an inroad  into the formalisation and analysis of probabilistic and quantum languages and processes~\cite{Probstrats,ProbPCF,quantstrat}.

\section*{\vskip -0.3in 
Acknowledgments}
 Thanks to Jonathan Hayman, Sacha Huriot, Martin Hyland, Marc Lasson, Conor McBride and Marc de Visme for encouragement and helpful discussions.  Thanks to the anonymous referee.  Support of  Advanced Grant ECSYM (2011-17) of the  European Research Council is acknowledged with gratitude.  The second author gratefully acknowledges support by ANR project DyVerSe
(ANR-19-CE48-0010-01) and Labex MiLyon (ANR-10-LABX-0070) of Universit\'e
de Lyon, within the program ``Investissements d’Avenir''
(ANR-11-IDEX-0007), operated by the French National Research Agency (ANR).

 \bibliographystyle{splncs} 
\bibliography{biblio}

\appendix

 \end{document}